%
\documentclass[runningheads]{llncs}
\usepackage[T1]{fontenc}
%
\usepackage{graphicx}
%
%

\usepackage[nottoc]{tocbibind}
\usepackage{enumerate}
\usepackage{hyperref}
\usepackage{amssymb}
\usepackage[T1]{fontenc}
\usepackage[utf8]{inputenc}
\usepackage{babel}
\usepackage[font=small,labelfont=bf,tableposition=top]{caption}
\usepackage{booktabs}
\usepackage{longtable}
\usepackage{pdflscape}
\usepackage{pgfplots}
\usepackage{subcaption}
\usepackage{tikz}
\usepackage{float}
\usepackage{microtype}
\usepackage{algorithmic, algorithm}
\usepackage[numbers]{natbib}


\usepackage{amsmath,amsfonts,bm}









\def\eqref#1{equation~\ref{#1}}









\def\1{\bm{1}}








\def\vtheta{{\bm{\theta}}}

\def\vb{{\bm{b}}}
\def\vc{{\bm{c}}}
\def\vd{{\bm{d}}}

\def\vz{{\bm{z}}}



\DeclareMathAlphabet{\mathsfit}{\encodingdefault}{\sfdefault}{m}{sl}
\SetMathAlphabet{\mathsfit}{bold}{\encodingdefault}{\sfdefault}{bx}{n}













\usepackage{etoolbox} 
\usepackage{environ}

\newtoggle{includeproofs}


\NewEnviron{conditionalproof}{
  \iftoggle{includeproofs}{
    \begin{proof}
      \BODY
    \end{proof}
  }{}
}
\togglefalse{includeproofs}

\newtoggle{arxiv}
\togglefalse{arxiv}

\begin{document}
\title{Bridging Chaos Game Representations and $k$-mer Frequencies of DNA Sequences}
\titlerunning{Bridging CGR and $k$-mer Frequencies of DNA}
%
\author{Haoze He \inst{1} \and
Lila Kari\inst{2} \and
Pablo Millan~Arias\inst{2}}
\authorrunning{He, Kari, Millan Arias}
%
\institute{\'Ecole Polytechnique F\'ed\'erale de Lausanne (EPFL) \\
\email{haoze.he@epfl.ch}\\ \and
University of Waterloo\\
\email{\{lila,pmillana\}@uwaterloo.ca}}
\maketitle              
\begin{abstract}
\sloppy
This paper establishes formal mathematical foundations linking Chaos Game Representations (CGR) of DNA sequences to their underlying $k$-mer frequencies. We prove that the Frequency CGR (FCGR) of order $k$ is mathematically equivalent to a discretization of CGR at resolution $2^k \times  2^k$,  and its vectorization corresponds to the $k$-mer frequencies of the sequence. Additionally, we characterize how symmetry transformations of CGR images correspond to specific nucleotide permutations in the originating sequences. Leveraging these insights, we introduce an algorithm that generates synthetic DNA sequences from prescribed $k$-mer distributions by constructing Eulerian paths on De Bruijn multigraphs. This 
enables reconstruction of sequences matching target $k$-mer profiles with arbitrarily high precision, facilitating the creation of synthetic CGR images for applications such as data augmentation for machine learning-based taxonomic classification of DNA sequences. Numerical experiments validate the effectiveness of our method across both real genomic data and artificially sampled distributions. To our knowledge, this is the first comprehensive framework that unifies CGR geometry, $k$-mer statistics, and sequence reconstruction, offering new tools for genomic analysis and visualization.  The web application implementing the reconstruction algorithm is available at~ \href{https://tinyurl.com/kmer2cgr}{\url{https://tinyurl.com/kmer2cgr}. 
}
\keywords{DNA sequence, Genomic signature, Chaos Game Representation CGR,  Frequency Chaos Game Representation FCGR, $k$-mer frequency vector, synthetic DNA.}
\end{abstract}

\section{Introduction}

The increased availability of complete genome sequences has motivated a paradigm shift in comparative genomics, from homology-based to whole-genome analyses based on sequence composition patterns \cite{Zielezinski_AF, cgr_systematic_review_DeLaFuente}. The observation of different structural patterns in DNA sequences dates back to 1990, when  Jeffrey applied concepts from chaotic dynamics to DNA sequences and introduced Chaos Game Representations (CGR)  of DNA sequences \cite{CGR}. A CGR is visualized within a unit square, with each of the four vertices labelled by one of the nucleotides ($A$, $C$, $G$, and $T$). The plotting process follows a simple iterative procedure: the first nucleotide in the sequence is plotted at the midpoint between the center of the square and the vertex corresponding to that nucleotide. Each subsequent nucleotide is then plotted at the midpoint between the previously plotted point and the vertex representing the current nucleotide.

The appearance of interesting geometric patterns in CGRs  of real DNA sequences, such as fractals and parallel lines, motivated further research in the field \cite{Jeffrey1992, dutta1992, Hill1992, goldman1993}. Notably, Oliver \textit{et~al.} \cite{oliver1993} computed the first discretization of CGRs. In these representations, referred to hereafter as Frequency Chaos Game Representations (FCGR), the unit square is divided into a $2^k\times2^k$ grid, and the CGR is discretized by counting the number of points in a given cell. {Each cell is associated with a particular subword of length $k$ ($k$-mer), and in \cite{oliver1993}  it was suggested that the count in each cell must be equal to the frequency of the corresponding $k$-mer in the DNA sequence.
Similarly, Hao introduced $k$-frames \cite{hao2000}, a class of self-similar and self-overlapping fractals mapping each $k$-mer to a cell inside the unit square using the Kronecker product.

\begin{figure}[!tph]
    \centering
    \begin{subfigure}[b]{0.24\linewidth}
        \centering
        \includegraphics[width=\linewidth]{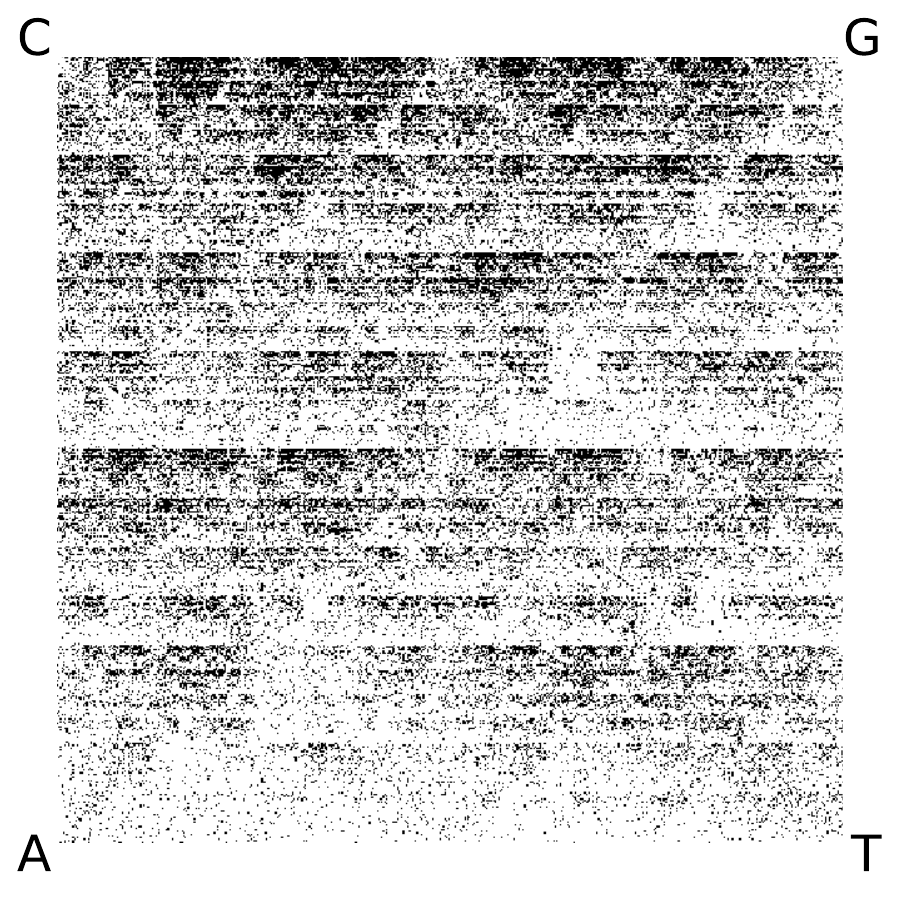}
        \caption{}
        \label{fig:pseudomona_CGR}
    \end{subfigure}
    \hspace{10 pt}
    \begin{subfigure}[b]{0.24\linewidth}
        \centering
        \includegraphics[width=\linewidth]{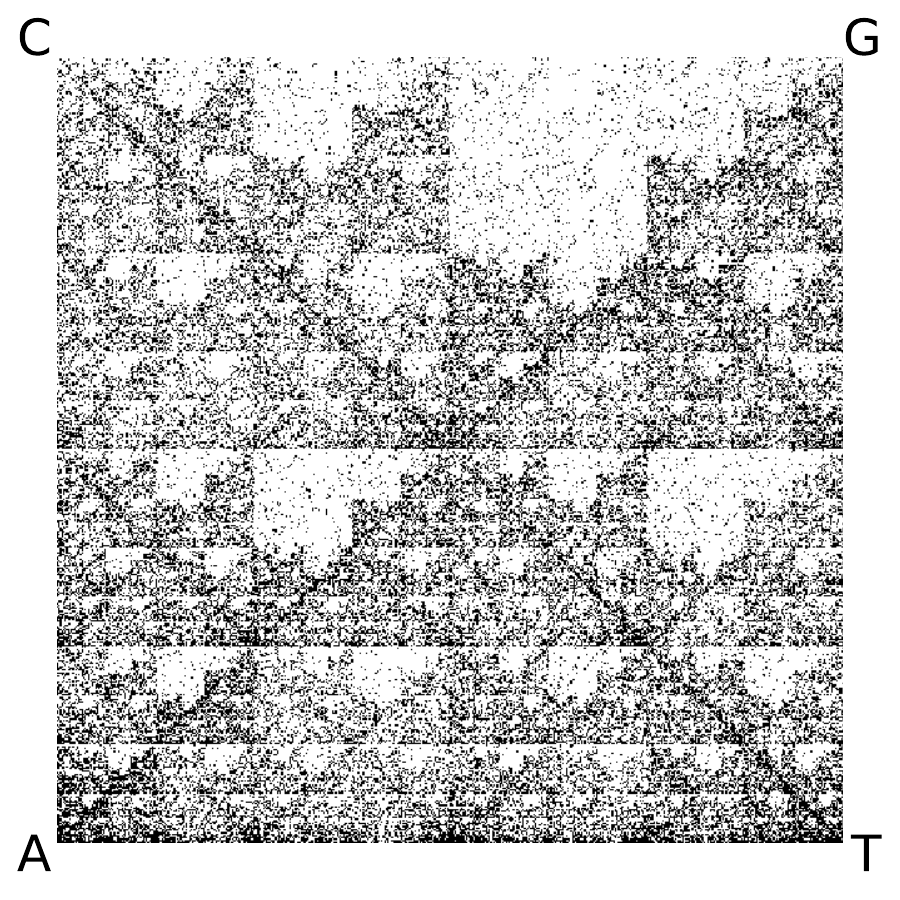}
        \caption{}
        \label{fig:human_cgr}
    \end{subfigure}
    \hspace{10 pt}
    \begin{subfigure}[b]{0.24\linewidth}
        \centering
        \includegraphics[width=\linewidth]{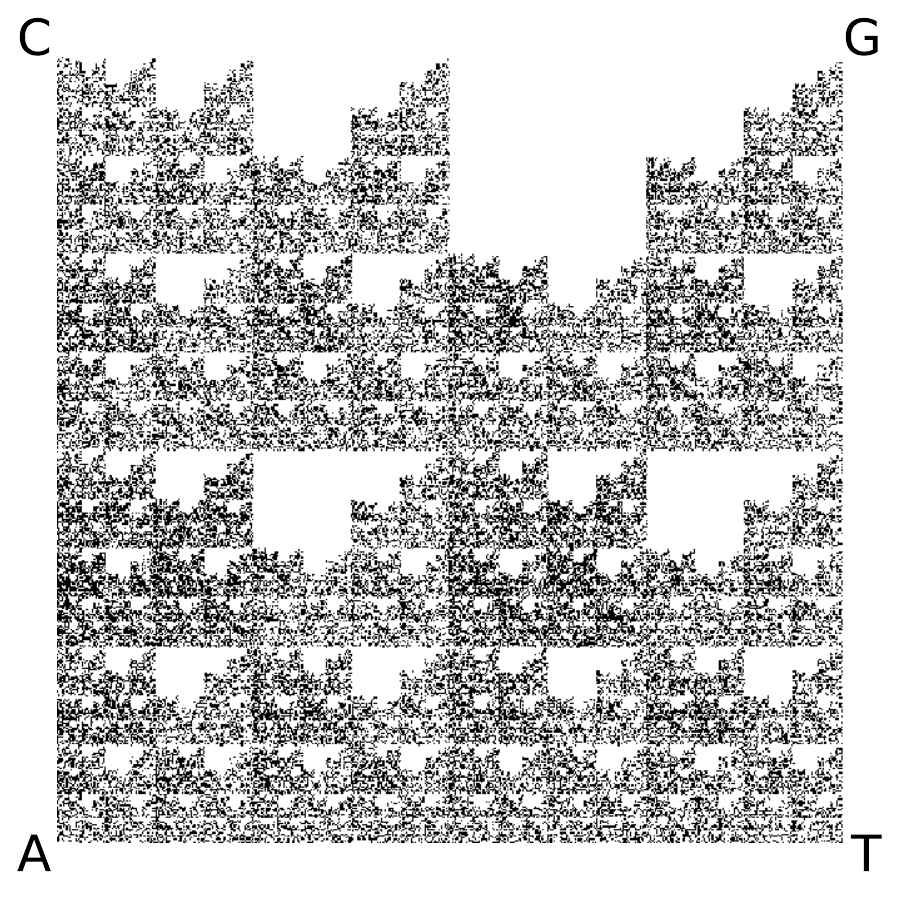}
        \caption{}
        \label{fig:GC_underrepresentation}
    \end{subfigure}
    \caption{(a) CGR of a 100,000 bp  DNA sequence randomly extracted from the complete genome of \textit{Pseudomonas aeruginosa} strain PAO1 (RefSeq NC\_002516.2);  (b) CGR of a 100,000 bp DNA sequence randomly selected from human chromosome 4 (GRCh38.p14 primary assembly, RefSeq NC\_000004.12);  (c) CGR of a computer-generated DNA sequence that is random in all other aspects except that the dinucleotide GC is absent.}
\vspace{-5mm}
\end{figure}

Independent of CGR research, in 1995, Karlin and Burge introduced the notion of a {\it genomic signature} \cite{signature_pops}, as an umbrella term for any numerical quantity that shows greater similarity among DNA sequences of closely related organisms compared to those of more distantly related organisms. In that work, dinucleotide relative abundance profiles (DRAP)  were proposed as genomic signatures, as they were effective in capturing expected variations between some species and similarities within the genome of a single species.   Building on this concept, in 1999, Deschavanne \textit{et~al.} \cite{cgr-deschavanne}, characterized both CGR and FCGR as genomic signatures and showed that the variation between FCGR images of sequences along a genome was smaller than the variation between FCGR images of sequences taken from different genomes. For example, as seen in Figures \ref{fig:pseudomona_CGR} and \ref{fig:human_cgr}, the CGR of a DNA sequence from the genome of \textit{P. aeruginosa} exhibits visual patterns that are significantly different from patterns in the CGR of a DNA sequence from human chromosome 4. Subsequently, these CGR studies lead to the generalization of genomic signatures across various orders $k$, introduced in 2005 by Wang \textit{et~al.}\cite{wang2005} where, e.g.,  $k=2$ corresponds to DRAP. In \cite{wang2005}, it was shown that higher-order FCGRs capture sequence features that are not encoded by the DRAPs that had initially been proposed as genomic signatures. 

The concepts of CGR, FCGR of order $k$, and $k$-mer frequency vector of a sequence (the latter comprising the counts of all its $k$-mers)  collectively referred to here as \textit{genomic signature}, have been widely used in comparative genomics as an alternative or a complement to alignment-based methods. For example, CGRs have inspired numerous alignment-free methods for taxonomic classification \cite{Randhawa_taxonomic_classification_covid19, cgr-kari-karamicahlis, Joseph2006, LochelHeider2021, Rizzo2016}, clustering \cite{millan_arias_delucs_2022, Alipour2024}, and phylogenetic analyses \cite{Hoang2016, Sengupta2020, Lichtblau2019}. Similarly, distance measures between $k$-mer frequency profiles serve as robust proxies for evolutionary relatedness, making them the core component in similar applications \cite{Zielezinski_AF, Solis-Reyes-kameris, AFBenchmark, CART}.

Despite the interchangeable use of these representations of DNA sequences in many applications \cite{AvilaCartes2023, millan_arias_delucs_2022, cgr_systematic_review_DeLaFuente}, a formalization of their interconnections is still missing. In this work, we introduce a theoretical and algorithmic framework that rigorously links these genomic signatures.  The main contributions of this paper are that it: \textit{(i)}  formally establishes two-way connections between a symmetry transformation of the CGR of a sequence and a morphism applied to the underlying sequence (Section \ref{sec:cgr}), \textit{(ii)}  demonstrates the mathematical equivalence between a  CGR  of resolution $2^k \times 2^k$ of a DNA sequence $s$, and the Frequency Chaos Game Representation (FCGR)  of order $k$ of that sequence, as well as $k$-mer frequency vector of the sequence (Section \ref{sec:fcgr}), \textit{(iii)}  provides an algorithm and a software tool that computes a  CGR and the corresponding synthetic DNA sequence,  from a target $k$-mer frequency vector (Section \ref{sec:k_mer_freq}).
\vspace{-3mm}



\subsection{Notation}
Throughout this paper, $\Sigma$ will denote the DNA alphabet, namely the set $\Sigma = \{A, C, G, T\}$.   The cardinality of a set $A$ will be denoted by card$(A)$. A non-empty word (string) over $\Sigma$, $w = a_1 a_2 ... a_n$, $n \geq 1$,  is a concatenation of letters $a_i\in \Sigma$, $1\leq i \leq n$. The empty word is denoted by  $\lambda$, by  $\Sigma^+$ we denote the set of all non-empty words over $\Sigma$,  and $\Sigma^* = \Sigma^+ \cup \{\lambda\}$, while for $k \geq 1$ we have that $\Sigma^k$ is the set of words of length $k$ over $\Sigma$. The length of a word $w$ is denoted by $|w|$, and $|\lambda| = 0$. For a given $k \geq 1$, a word $w \in \Sigma^k$ will be called a $k$-mer.  The number of occurrences of the $k$-mer $w$ in the DNA sequence $s$ will be denoted by $occ(s,w)$.  
 For a word $w \in \Sigma^+$, we will denote by $\operatorname{sub}_k(w)$ the set of all subwords of length $k$ in the word $w$.  We use $\Delta^{n-1}$ to denote the standard simplex $\{(x_1,\ldots,x_n): x_i \geq 0 \text{ for all } i=1,\ldots,n, \sum_{i=1}^{n}x_i = 1\}$ in $\mathbb{R}^{n}$.\\

\section{CGR Symmetries and DNA Letter Permutations}\label{sec:cgr}


This section first recalls the formal definition of the Chaos Game Representation (CGR) of a DNA sequence \cite{CGR} (Definition \ref{def:CGR}) and introduces several other definitions and notations.  Subsequently, Theorem \ref{theorem1} shows that, if a CGR is the image of another CGR via a symmetry transformation of the square, then their originating DNA sequences are connected by a certain letter permutation, and viceversa.   In addition, Theorem \ref{theorem2} establishes the correspondence between the $k$-mers avoided in each of the sequences underlying two CGRs that are obtained from one another by a symmetry transformation of the square.

\begin{definition}[CGR square]\label{def:cgr_sqr}
    A  CGR square is the square centred at the origin 
    \[\{(x,y)|\; -1<x<1, -1<y<1\}\]
    with corners $(-1,-1), (-1,1), (1,1), (1,-1)$, each labeled by the labeling function $label: \Sigma \longrightarrow \mathbb{Z}^2$  
\[
label(A) = (-1,-1),  label(C) = (-1,1), label(G) = (1, 1),  label(T) = (1,-1)
\]
\end{definition}

Note that the CGR square defined above has a side length of 2, whereas the original CGR square defined in~\cite{Jeffrey1992} is a unit square with side length 1.

\begin{definition}[CGR representation of a DNA sequence]\label{def:CGR} 
Let  $n \geq 1$ and let $s  = a_1 a_2 ... a_n$ be a sequence of length $n$ over the DNA alphabet $\Sigma$. The CGR representation  of the sequence $s$ is the  set of points $CGR(s) = \{p_0, p_1, ..,p_n\} \subseteq \mathbb{Q}^2$ whose  coordinates  are defined recursively  by  
$$p_0 = (x_0, y_0)=(0,0), \mbox{ and }  
p_i = \frac{p_{i-1} + label(a_i)}{2} \mbox{ for all }  1 \leq i \leq n.$$ 
\end{definition}

The {\it dihedral group} \cite{Abstractalgebra} of degree 4 and order 8 is the symmetry group of a square
$D_{8}=\left\{e, r, r^{2}, r^{3}, s, s r, s r^{2}, s r^{3}\right\}$, and it comprises rotations,  reflections across the horizontal and vertical axis,  as well as reflections across the diagonals. The symmetries of the  axis-aligned CGR square, centered at the origin, can be represented by  $2 \times 2$  permutation matrices, acting on the plane by multiplication on column vectors of coordinates  $\begin{bmatrix} x \\y
\end{bmatrix}$. The group composition operation is represented as matrix multiplication. \iftoggle{arxiv}{(see Supplementary Section~\ref{subsec:dihedral_group} for details)}{}.

\begin{definition}
 For two words $u, w \in \Sigma^n$ and a symmetry $h \in D_8$,  we will say that $CGR(u) =  h \cdot CGR(w)$ if and only if,  for all 
 $1\leq i \leq n$, we have that
$ q_{u, i} = h \cdot p_{w, i}$
where $q_{u, i}$ is the $i$th point in the generation of $CGR(u)$ and $p_{w, i}$ is the $i$th point in the generation of $CGR(w)$. 
\end{definition}


\begin{definition} Let $CGR(w)$ be the CGR of a  DNA sequence $w \in \Sigma^{*}$. The image  of  a $CGR(w)$ via a transformation  $h \in D_{8}$ is defined as $\left\{h \cdot x|\;  x \in CGR(w)\right\}$, the set obtained by multiplying the matrix $h$  with each point in $CGR(w)$.

\end{definition}

Consider a permutation $\sigma: \Sigma \rightarrow \Sigma$  of the letters in the DNA alphabet $\Sigma$, and consider that $\sigma(\lambda) = \lambda$, where $\lambda$ denotes the empty word. Such a permutation $\sigma$ can be extended to a morphism $\sigma: \Sigma^* \rightarrow \Sigma^*$  by the morphism property whereby   $\sigma(uv) = \sigma(u) \sigma(v)$, for all $u, v \in \Sigma^*$. 
We now explore the relationships between  such morphisms applied to a  DNA sequence and  symmetry transformations in $D_8$ applied to its CGR.

\begin{definition}
\label{defi_morph}
Let $S$ be the following set of permutations extended to morphisms  on $\Sigma^*$:
$$ S = \{(),(A, T, G, C),(A, G)(C, T),$$
$$(A, C, G, T),(A, C)(G, T),(C, T),(A, T)(C, G),(A, G)\}$$ 
\end{definition}

In Definition \ref{defi_morph}, in the customary cyclic notation, $()$ denotes the identity permutation on $\{A, C, G, T\}$,  and, e.g., $(A, T, G, C)$ denotes a circular permutation that maps $A$ to $T$, $T$ to $G$, $G$ to $C$, and $C$ to $A$.  Note also that $S$ is a subset of the $S_4$, set of all 24 permutations of $\{A, C, G, T\}$.

Recall that, \cite[Section 1.2]{Abstractalgebra}, the dihedral group $D_8$ is connected to a subgroup of the permutation group  $S_4$ of all permutations of four elements,  by the mapping $f:D_{8} \rightarrow S_4$ defined as
$f(e) =   (),$ 
$f(r) =   (A, T, G, C),
f(r^{2}) =   (A, G)(C, T),\;
f(r^{3}) =  (A, C, G, T),\;
f(s)=   (A, C)(G, T),$ $
f(s r)=   (A, G), \;
f(s r^{2})=   (A, T)(C, G),\;
f(s r^{3})=   (C, T)$.

The following theorem proves that, if a CGR is observed to be the image of another CGR via one of the symmetry transformations in $D_8$, then their originating DNA sequences are connected by one of the letter permutations in $S$ (see Definition \ref{defi_morph}).

\begin{theorem}
\label{theorem1}
Let $u, w \in \Sigma^n$  be two DNA sequences of length $n$, and let  $\sigma \in S$ be one of the morphisms in $S$.  Then  $u = \sigma(w)$ if and only if  $CGR(u) = f^{-1}(\sigma)\cdot CGR(w)$.


\end{theorem}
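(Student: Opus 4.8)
The plan is to reduce the entire equivalence to a single compatibility identity between the labeling function, the dihedral action, and the morphism: for every $\sigma \in S$ and $h = f^{-1}(\sigma)$, one has $h \cdot label(a) = label(\sigma(a))$ for all $a \in \Sigma$. I would establish this as a preliminary lemma. Since both sides are functions on the four-element set $\Sigma$, and since $label(A), label(C), label(G), label(T)$ are exactly the four vertices of the CGR square that $h$ permutes, the identity amounts to checking that the vertex permutation induced by each $h \in D_8$ coincides with the letter permutation $\sigma = f(h)$. Because $D_8$ is generated by the rotation $r$ and a reflection $s$, it suffices to verify this on the two generators and then invoke the homomorphism property of $f$; for example, the $90^\circ$ rotation sends $(-1,-1) \mapsto (1,-1)$, i.e.\ $A \mapsto T$, in agreement with $f(r) = (A,T,G,C)$.

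Granting the lemma, the forward implication ($u = \sigma(w) \Rightarrow CGR(u) = h \cdot CGR(w)$) follows by induction on the index $i$ in the recursive definition of the CGR. The base case is immediate since $p_0 = (0,0)$ is fixed by the linear map $h$. For the inductive step, assuming $q_{u,i-1} = h \cdot p_{w,i-1}$ and using $u_i = \sigma(w_i)$, I would compute
\[
q_{u,i} = \frac{q_{u,i-1} + label(u_i)}{2} = \frac{h \cdot p_{w,i-1} + h \cdot label(w_i)}{2} = h \cdot \frac{p_{w,i-1} + label(w_i)}{2} = h \cdot p_{w,i},
\]
where the second equality combines the induction hypothesis with the lemma and the third uses the linearity of $h$. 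This gives $CGR(u) = h \cdot CGR(w)$.

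For the converse, suppose $CGR(u) = h \cdot CGR(w)$, so that $q_{u,i} = h \cdot p_{w,i}$ holds for every $i$ (including $i-1$); hence no induction is needed. Comparing the recursion $q_{u,i} = (q_{u,i-1} + label(u_i))/2$ with $h \cdot p_{w,i} = (h \cdot p_{w,i-1} + h \cdot label(w_i))/2$ and cancelling the equal terms $q_{u,i-1} = h \cdot p_{w,i-1}$, one obtains $label(u_i) = h \cdot label(w_i) = label(\sigma(w_i))$ by the lemma. Since $label$ is injective, $u_i = \sigma(w_i)$ for every $i$, and therefore $u = \sigma(w)$.

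The main obstacle is the lemma rather than the two inductions: one must confirm that $f$ is defined precisely so that the geometric action of each symmetry on the labelled vertices matches the combinatorial permutation of nucleotides, which is a finite but delicate bookkeeping task over the eight elements of $D_8$. The payoff is structural, namely that the linearity of $h$ (fixing the origin and commuting with the midpoint averaging) lets both directions of the equivalence propagate cleanly through the recursion.
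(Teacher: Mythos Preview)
Your proposal is correct and follows essentially the same architecture as the paper's proof: both hinge on the compatibility identity $f^{-1}(\sigma)\cdot label(a)=label(\sigma(a))$ (the paper's ``Claim~1''), and both propagate it through the CGR recursion using linearity of $h$.

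Two points where your write-up is tighter than the paper's: (i) the paper verifies Claim~1 by an exhaustive eight-case check over all $\sigma\in S$, whereas you reduce to the two generators $r,s$ and invoke that $f$ is a homomorphism --- valid because the vertex action $h\mapsto(\text{permutation of }label(\Sigma))$ is automatically a homomorphism, so matching it with $f$ on generators determines it everywhere; (ii) for the converse, the paper wraps the argument in the same induction on $n$ used for the forward direction, but as you observe this is unnecessary: the hypothesis already hands you $q_{u,i}=h\cdot p_{w,i}$ for \emph{every} $i$, so subtracting consecutive relations recovers $u_i=\sigma(w_i)$ directly for each $i$. Neither change alters the substance of the argument.
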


\begin{conditionalproof}
Recall the labelling function $label$ that associates each letter of the DNA alphabet $\Sigma$  to one of the corners of the CGR square.
We start by proving the following claim.\\

 \noindent
 {\bf Claim 1.}  Let $\sigma \in S$ be one of the letter permutations on $\Sigma$,   extended to a morphism on $\Sigma^*$, and let $f: D_8 \rightarrow  S$ be the mapping that associates to every symmetry of the square in $D_8$ one such letter permutation in $S$.  Then, for any 
letters $ a, b \in \Sigma$, we have that 
 $$  a = \sigma(b) \iff  label(a) = f^{-1}(\sigma)\cdot label(b).$$ 

\noindent
{\it Proof of Claim 1.}
Assume, without loss of generality, that $b = G$. This implies $label(b) = (1,1)$. The following cases are possible  for the morphism $\sigma$:

\begin{enumerate}
    \item  $\sigma = ()$, the identity permutation on $\Sigma$.
    
    ``$\Longrightarrow$"   $a = \sigma(b) = \sigma(G) = G$  implies   $label(a) = label(G) =  (1, 1)$ and $ f^{-1}(\sigma) = e$. Thus, $  f^{-1}(\sigma)  \cdot  label(b) = e \cdot label(G)  =  label(G) = label(a)$. 
  
    ``$\Longleftarrow$" $f^{-1}(\sigma) = \begin{bmatrix}
        1 & 0 \\ 0 & 1
    \end{bmatrix},$  which implies $label(a) =f^{-1}(\sigma)\cdot label(b)  = (1,1)$, further implying that $a = G = \sigma(b)$. 
    
    \item  $\sigma = (A\ C\ G\ T)$.
    
     ``$\Longrightarrow$"  $a = \sigma(G) = T$ implies  $label(a) = label(T) = (1, -1)$ and $f^{-1}(\sigma) =  r^3$. Thus, \[ f^{-1}(\sigma)  \cdot label(b)  = r^3 \cdot (1, 1) =   \begin{bmatrix} 0 & 1 \\ -1 & 0  \end{bmatrix} \cdot (1, 1) = (1, -1) = label(a).\]
   ``$\Longleftarrow$"   $f^{-1}(\sigma) = \begin{bmatrix}
        0 & 1 \\ -1 & 0
    \end{bmatrix},$ which implies  $label(a) =f^{-1}(\sigma)\cdot label(b)  = (1,-1)$, which further  implies that $a = T = \sigma(b)$. 
    
    \item $\sigma = (A\ G)(C\ T)$.
    
   ``$\Longrightarrow$"   $a = \sigma(G) =A$ implies $label(a) =  label(A) = (-1,-1)$ and  $f^{-1}(\sigma) = r^2$. Thus, 
    \[ f^{-1}(\sigma)  \cdot label(b)  = r^2 \cdot (1, 1)   = \begin{bmatrix} -1 & 0 \\ 0 & -1 \end{bmatrix}  \cdot (1, 1) =(-1, -1) = label(a).\]
    
      ``$\Longleftarrow$"   $f^{-1}(\sigma) = \begin{bmatrix}
        -1 & 0 \\ 0 & -1
    \end{bmatrix}$ implies $label(a) =f^{-1}(\sigma)\cdot label(b)  = (-1,-1)$, which further implies that $a = A = \sigma(b)$. 
    
    \item  $\sigma = (A\ T\ G\ C)$.
    
    ``$\Longrightarrow$"   $a = \sigma(G) =C$ implies $label(a) =  label(C) = (-1,1)$ and $f^{-1}(\sigma) = r$. Thus, 
    \[ f^{-1}(\sigma)  \cdot label(b)  = r \cdot (1, 1)   = \begin{bmatrix} 0 & -1 \\ 1 & 0 \end{bmatrix} \cdot (1, 1) = (-1, 1) = label(a).\]
        ``$\Longleftarrow$"   $f^{-1}(\sigma) = \begin{bmatrix}
        0 & -1 \\ 1 & 0
    \end{bmatrix}$ implies $l(a) =f^{-1}(\sigma)\cdot label(b)  = (-1,1)$, which further implies that $a = C = \sigma(b)$.

    \item  $\sigma = (A\ C)(G \ T) $.
    
    ``$\Longrightarrow$" $a = \sigma(G) =T$ implies $label(a) =  label(T) = (1,-1)$ and  $f^{-1}(\sigma) =s$. Thus, 
    \[ f^{-1}(\sigma)  \cdot label(b)  = s \cdot (1, 1)   = \begin{bmatrix} 1 & 0 \\ 0 & -1 \end{bmatrix} \cdot (1, 1) = (1, -1) = label(a).\]
     ``$\Longleftarrow$"  $f^{-1}(\sigma) = \begin{bmatrix}
        1 & 0 \\ 0 & -1
    \end{bmatrix}$ implies $label(a) =f^{-1}(\sigma)\cdot label(b)  = (1,-1)$, which further implies that $a = T = \sigma(b)$. 
    
    \item  $\sigma = (A\ T)(C \ G) $.
    
    ``$\Longrightarrow$"  $a = \sigma(G) =C$ implies  $label(a)  =  label(C) =(-1,1)$ and $f^{-1}(\sigma) = sr^2$. Thus, 
    \[ f^{-1}(\sigma)  \cdot label(b) = sr^2 \cdot (1, 1) =   \begin{bmatrix} -1 & 0 \\ 0 & 1 \end{bmatrix} \cdot (1, 1) = (-1, 1) = label(a).\]
    ``$\Longleftarrow$"  $f^{-1}(\sigma) = \begin{bmatrix}
        -1 & 0 \\ 0 & 1
    \end{bmatrix}$  implies  $label(a) =f^{-1}(\sigma)\cdot label(b)  = (-1,1)$, which further  implies that $a = C = \sigma(b)$. 
    
    \item  $\sigma = (C\ T)$.
    
    ``$\Longrightarrow$"  $a = \sigma(G) = G$ implies  $label(a)  = label(G)=(1,1)$ and $f^{-1}(\sigma)  = sr^3$. Thus, 
 \[  f^{-1}(\sigma)  \cdot label(b)=  sr^3 \cdot (1, 1)  = \begin{bmatrix} 0 & 1 \\ 1 & 0 \end{bmatrix} \cdot (1,1)= (1,1) = label(a).\]
    ``$\Longleftarrow$"  $f^{-1}(\sigma) = \begin{bmatrix}
        0 & 1 \\ 1 & 0
    \end{bmatrix}$  implies $label(a) =f^{-1}(\sigma)\cdot label(b)  = (1, 1)$, which further implies that $a = G = \sigma(b)$. 
    
    \item $\sigma = (A\ G)$.
    
    ``$\Longrightarrow$" $a = \sigma(G) = A$ implies $label(a) = label(A) = (-1,-1)$ and  $f^{-1}(\sigma) =  sr $. Thus, 
    \[ f^{-1}(\sigma)  \cdot label(b) = sr \cdot (1, 1)  = \begin{bmatrix}
        0 & -1 \\ -1& 0
    \end{bmatrix}(1,1)= (-1,-1) = label(a).\]
      ``$\Longleftarrow$"  $f^{-1}(\sigma) = \begin{bmatrix}
        0 & -1 \\ -1 & 0
    \end{bmatrix}$ implies $label(a) =f^{-1}(\sigma)\cdot label(b)  = (-1,-1)$, which further implies that $a = A = \sigma(b)$. 

\end{enumerate}

\noindent
The cases when $b = T$, $b = A$, and $b = C$ can be proved similarly, and complete proof of   Claim 1.\\

\noindent
The proof is now by induction on $n$, the length of the input strings $u$ and $w$. Assume $w = b_1 b_2\ldots b_n$ and $u = a_1 a_2\ldots a_n$, where $n \geq 1$ and  $a_i, b_i$ are letters in $\Sigma$ for all $1\leq i\leq n$.  Recall that $CGR(u) = f^{-1}(\sigma) \cdot CGR(w)$ iff,    for all
$1 \leq i \leq n$, we have 
$q_{i} = f^{-1}(\sigma)\cdot p_{i}$
where  $q_{i}$ is the $i$-th point in the generation  of  $CGR(u)$, and   $p_{i}$ is the $i$-th point  in the generation of $CGR(w)$.\\

\noindent
{\bf Base Case.}  $n = 1$.  Then $w = b \in \Sigma$, and $u = a \in \Sigma$.

``$\Longrightarrow$"
   Assume $a = \sigma(b)$. By the recursive definition of CGR, $CGR(a)$ consists of a single point $q_1 = \frac{label(a)}{2}$, and $CGR(b)$ consists of a single point  $p_1 = \frac{label(b)}{2}$.  By Claim 1,  $label(a) = f^{-1}(\sigma) \cdot label(b)$.  Thus, 
   $$q_1 = \frac{label(a)}{2} =  \frac{f^{-1}(\sigma)\cdot label(b)}{2} =  f^{-1}(\sigma)\cdot  p_1$$

   ``$\Longleftarrow$"   Assume $CGR(a) = f^{-1} (\sigma) \cdot CGR(b)$. Since $CGR(a)$  consists of one point  $q_1 = \frac{label(a)}{2}$, and $CGR(b)$ consists of one point  $p_1 = \frac{label(b)}{2}$, this implies 
   $$q_1 = \frac{label(a)}{2} = f^{-1}(\sigma) \cdot p_1  = f^{-1}(\sigma) \cdot  \frac{label(b)}{2}, $$
   further implying  $label(a) = f^{-1}(\sigma) \cdot label(b)$. By Claim 1 this now implies that $a = \sigma(b)$.\\

\noindent
{\bf Inductive step.}  Assume that the iff statement of Theorem \ref{theorem1}  holds for  words of length $(m-1)$,  with $m \geq 2$ (Inductive Hypothesis, I.H.). Let $w$ and $u$ be two words of length $m$, that is, $w = b_1 b_2\ldots b_{m-1} b_m$ and  $u = a_1 a_2\ldots a_{m-1} a_m$, where $a_i, b_i \in \Delta$ for all $1\leq i\leq m$.

``$\Longrightarrow$"  Assume $u = \sigma(w)$. By the definition of morphism, we have that $u = \sigma(w)$ iff $a_i = \sigma(b_i)$, for all $1\leq i \leq m$.

\noindent
Denote 
\[w' = b_1 b_2\cdots b_{m-1}, \quad u' = a_1 a_2\cdots a_{m-1}.\]

\noindent
By (I.H.),  since  $u' = \sigma(w')$, we have that $q_{i} =  f^{-1}(\sigma) \cdot  p_{i}$, for all $1\leq i \leq m-1$.

\noindent
By the definition of CGR, the $m$th point of  $CGR(u)$ is $q_m$, and the $m$th point of $CGR(w)$ is $p_m$, where
\[q_m = \frac{q_{m-1} + label(a_m)}{2}, \mbox{ and  } p_m =  \frac{p_{m-1} + label(b_m)}{2}.\]

\noindent
Since $a_m = \sigma(b_m)$, by Claim 1 we have that 
$label(a_m) =f^{-1}(\sigma) \cdot label(b_m)$.  Thus, by I.H. and Claim 1, we have 
\[q_m = \frac{q_{m-1} + label(a_m)}{2} = 
\frac{f^{-1}(\sigma) \cdot p_{m-1} + f^{-1}(\sigma) \cdot label(b_m)}{2} = \]
\[f^{-1}(\sigma) \cdot  \frac{p_{m-1} + label(b_m)}{2} = f^{-1}(\sigma) \cdot p_m.\]

\noindent
Since the equality  $q_{i} =  f^{-1}(\sigma) \cdot  p_{i}$  holds  now for all $1\leq i \leq m$, it follows that $CGR(u) = f^{-1}(\sigma) \cdot CGR(w)$.\\



 
``$\Longleftarrow$" Assume that   $CGR(u) = f^{-1}(\sigma) \cdot CGR(w)$. This implies that, for all $1\leq i \leq m$ we have that $q_i = f^{-1}(\sigma)  \cdot p_i$. \\

\noindent
The recursive definition of $CGR(w)$ and the previous observation imply that: 
\begin{equation}
\label{eq:u_component}
q_m = f^{-1}(\sigma)\cdot p_m = \frac{q_{m-1} + label(a_m)}{2},
\end{equation}
In parallel, the recursive definition of CGR and the fact that $f^{-1}(\sigma)$ is a transformation in $\mathbb{R}^2$ imply that: 
\begin{equation}
\label{eq:w_component}
f^{-1}(\sigma) \cdot p_m = f^{-1}(\sigma)  \cdot \left ( \frac{p_{m-1} + label(b_m)}{2} \right ) 
\end{equation}
By transitivity of equality applied to equations \ref{eq:u_component} and \ref{eq:w_component} we get that:
$$\frac{q_{m-1} + label(a_m)}{2} = f^{-1}(\sigma) \cdot \left ( \frac{p_{m-1} + label(b_m)}{2} \right )$$
This can be further simplified using the distributivity property of matrix multiplication: 
\begin{align*}
q_{m-1} + label(a_m) &= f^{-1}(\sigma) \cdot p_{m-1} + f^{-1}(\sigma) \cdot label(b_m)\\
label(a_m) &= f^{-1}(\sigma) \cdot label(b_m)
\label{eq:points_2_letters}
\end{align*}
which implies that $a_m = \sigma(b_m)$ by Claim 1. 

By the (I.H.),  since {$CGR(u') = f^{-1}(\sigma)\cdot CGR(w')$} we have that  $u' = \sigma(w')$. Thus, by the definition of a morphism and the previously derived relation $a_m = \sigma(b_m)$, we obtain that $\sigma(w) = \sigma(w')\sigma(b_m) = u'a_m = u$.
\end{conditionalproof}

 In other words,  Theorem \ref{theorem1} states that,  given  two DNA sequences $u$ and $w$ and a letter permutation  $\sigma \in S$,  then  $u = \sigma(w)$ iff  $CGR(u)$  can be obtained from  $CGR(w)$  via the symmetry  $f^{-1}(\sigma)$ in $D_8$.  A direct consequence of this result is that the image of a CGR of a DNA sequence under some symmetry transformation $h\in D_8$ is still a CGR of another DNA sequence, i.e., the set of CGRs of DNA sequences is closed under transformations in $D_8$.

\begin{corollary}
\label{cor:closure_of_GR}
    Given a word   $u\in\Sigma^n$ and   a symmetry transformation $h \in D_8$,  there exists a word $w \in \Sigma^n$    such that  $CGR(w) = h \cdot CGR(u)$. Constructively, this word can be computed as $w = f(h)\cdot u$, with $f: D_8 \to S_4$  as defined above. 
\end{corollary}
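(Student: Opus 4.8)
The plan is to obtain the corollary as a direct reading of Theorem \ref{theorem1}, with the only care required being the correct matching of roles between the two sequences. First I would set $\sigma = f(h)$. Since the eight permutations listed in the definition of $f$ are pairwise distinct, $f$ is an injection from $D_8$ onto the $8$-element set $S$, hence a bijection $D_8 \to S$; therefore $\sigma \in S$ and $f^{-1}(\sigma) = f^{-1}(f(h)) = h$, so $f^{-1}(\sigma)$ is well defined and equals $h$. I would then take the candidate word to be $w = \sigma(u) = f(h)\cdot u$. Because $\sigma$ is a permutation of $\Sigma$ extended to a morphism, it acts letter-by-letter and preserves length, so $w \in \Sigma^n$, as required by the statement.

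Next I would invoke Theorem \ref{theorem1}, which for length-$n$ words $U, W$ and $\sigma \in S$ asserts the equivalence $U = \sigma(W) \iff CGR(U) = f^{-1}(\sigma)\cdot CGR(W)$. I would instantiate this with $U = w$ and $W = u$ (i.e. the constructed word in the role of the image and the given word in the role of the source). By construction $w = \sigma(u)$, so the left-hand side $U = \sigma(W)$ holds; the equivalence then yields $CGR(w) = f^{-1}(\sigma)\cdot CGR(u)$. Substituting $f^{-1}(\sigma) = h$ gives precisely $CGR(w) = h\cdot CGR(u)$, which establishes both the existence claim and the explicit formula $w = f(h)\cdot u$.

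The only point that demands attention is the bookkeeping of which sequence is the argument of $\sigma$ and which $CGR$ sits on which side of the $D_8$-action. Theorem \ref{theorem1} is phrased with $u$ as the image of $w$, whereas here the given word $u$ plays the role of the source and the sought word $w$ plays the role of the image; reversing these roles is exactly what forces the choice $\sigma = f(h)$, and hence $w = f(h)\cdot u$, rather than the inverse symmetry. I do not expect any genuine obstacle beyond this indexing, since Theorem \ref{theorem1} already carries all the substantive content and the corollary is a pure existence-plus-construction consequence obtained by reading the equivalence in the direction that produces $w$ from $u$.
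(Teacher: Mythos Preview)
Your proposal is correct and follows exactly the paper's approach: the paper's proof is the single line ``It directly follows from Theorem~\ref{theorem1},'' and your argument is a careful unpacking of precisely that implication, with the correct role-swap (your $U=w$, $W=u$) and the choice $\sigma=f(h)$.
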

\begin{conditionalproof}
    It directly follows from Theorem~\ref{theorem1}.
\end{conditionalproof}




 A DNA sequence $s$  is said to  {\it avoid a word $w$} if $s$ does not contain any occurrence of $w$  as a subword.
 We now investigate the mathematical connection between $k$-mer avoidance in DNA sequences and the family of CGR images related by the symmetry group $D_8$. This question is of interest  because the CGR of a DNA sequence  that avoids  a $k$-mer $w$  has  a similar visual appearance  as the CGR of a real DNA sequence in which  the word $w$ is under-represented, but not completely absent. For instance, Figures~\ref{fig:human_cgr} and~\ref{fig:GC_underrepresentation} illustrate that avoiding the dinucleotide $GC$ produces the same ``double-scoop''  CGR pattern  as that observed in CGRs of human DNA sequences (characterized by an under-representation of $GC$). 
 Given the CGR of a DNA sequence that avoids a certain  $k$-mer, and another CGR obtained from the first through a symmetry transformation in $D_8$, the following theorem identifies the 
$k$-mer avoided by the underlying sequence of the second CGR.

\begin{theorem}
\label{theorem2}
Let $w, u \in \Sigma^n$ be two DNA words of length $n$,  and let $\alpha, \beta \in \Sigma^k$ be two $k$-mers, $k < n$,  such that $sub_k(w) = \Sigma^k \backslash\{ \alpha \}$ and $sub_k(u) = \Sigma^k \backslash\{ \beta \}$. Then, for any permutation morphism $\sigma\in S$, we have that $CGR(u) = f^{-1}(\sigma) \cdot   CGR(w)$ implies  $\beta = \sigma(\alpha)$. 
\end{theorem}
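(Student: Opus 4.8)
The plan is to reduce the statement to a purely combinatorial fact about how the morphism $\sigma$ acts on $k$-mers, using Theorem~\ref{theorem1} to strip away the geometry at the outset. Since $\sigma \in S$ and the hypothesis is $CGR(u) = f^{-1}(\sigma) \cdot CGR(w)$, Theorem~\ref{theorem1} applies directly and yields $u = \sigma(w)$. This converts a statement about CGR images and dihedral symmetries into a single identity between the underlying words, after which the dihedral group and the labelling function play no further role.

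Next I would exploit that $\sigma$ is a letter-to-letter morphism, hence both length-preserving and position-preserving. Writing $w = b_1 b_2 \cdots b_n$, the length-$k$ subword of $w$ starting at position $i$ is $b_i \cdots b_{i+k-1}$, while $u = \sigma(w) = \sigma(b_1) \cdots \sigma(b_n)$ has, at the same position, the subword $\sigma(b_i) \cdots \sigma(b_{i+k-1}) = \sigma(b_i \cdots b_{i+k-1})$. Ranging over all starting positions $1 \le i \le n-k+1$ and collecting into sets, this establishes the key identity $sub_k(u) = \{\sigma(v) : v \in sub_k(w)\}$, i.e. the set of $k$-mers of $u$ is exactly the image under $\sigma$ of the set of $k$-mers of $w$. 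I would then use that $\sigma$, being a permutation of $\Sigma$, extends to a bijection of $\Sigma^k$, so $\sigma(\Sigma^k) = \Sigma^k$ and, by injectivity, $\sigma(\Sigma^k \setminus \{\alpha\}) = \Sigma^k \setminus \{\sigma(\alpha)\}$. Substituting the hypotheses $sub_k(w) = \Sigma^k \setminus \{\alpha\}$ and $sub_k(u) = \Sigma^k \setminus \{\beta\}$ gives $\Sigma^k \setminus \{\beta\} = \Sigma^k \setminus \{\sigma(\alpha)\}$, and since $\operatorname{card}(\Sigma^k) = 4^k \ge 2$, the single missing word is uniquely determined, forcing $\beta = \sigma(\alpha)$.

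I do not expect a genuine obstacle here, as the result is essentially a corollary of Theorem~\ref{theorem1}; the one point requiring care is the middle step, namely verifying that because $\sigma$ acts letterwise it commutes with the extraction of length-$k$ windows, so that avoidance of a single $k$-mer is transported by $\sigma$ to avoidance of its image. The only genuinely necessary hypotheses are the injectivity of $\sigma$ on $\Sigma^k$, which justifies passing the complement through the bijection, and $4^k \ge 2$, which rules out the degenerate situation where a one-element complement would fail to pin down the avoided word.
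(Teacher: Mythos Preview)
Your proposal is correct and follows essentially the same approach as the paper: invoke Theorem~\ref{theorem1} to obtain $u=\sigma(w)$, observe that $sub_k(\sigma(w))=\sigma(sub_k(w))$ because $\sigma$ is a letter-to-letter morphism, and then use that $\sigma$ is a bijection on $\Sigma^k$ to conclude $\Sigma^k\setminus\{\beta\}=\Sigma^k\setminus\{\sigma(\alpha)\}$, hence $\beta=\sigma(\alpha)$. Your write-up is in fact slightly more careful than the paper's in spelling out why the subword-extraction commutes with $\sigma$ and why the missing word is uniquely determined.
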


\begin{conditionalproof}
Let $\sigma \in S$ be one of the permutation morphisms in $S$.\\

\noindent
``$\Longrightarrow$''  Since $CGR(u) = f^{-1}(\sigma)  \cdot CGR(w)$, by Theorem \ref{theorem1}  we have $u =\sigma(w)$.\\ 

\noindent
 Since $\sigma$ is a morphism, it follows that $sub_{k}(u) = sub_{k}(f(w)) = f(sub_k(w))$.

\noindent
Now let us show $\beta = \sigma(\alpha)$.  We have that
$$sub_{k}(u) = \Sigma^k \setminus 
\{\beta \} = sub_{k}(\sigma(w)) 
= \sigma(sub_k(w)) = \sigma(\Sigma^k \setminus \{\alpha \})$$

\noindent
Since $\sigma \in S$ is a permutation of the letters of $\Sigma$, we have that  $\sigma(\Sigma^k) = \Sigma^k$, and   $\sigma(\Sigma^k \setminus \{\alpha \}) = \Sigma^k \setminus \{\sigma(\alpha) \}$. Together with the above equalities, this implies $\beta = \sigma(\alpha)$.\\ 


\end{conditionalproof}





\section{\label{sec:fcgr}
Frequency CGR, and  Equivalence to  $k$-mer Frequencies}


In this section, we first recall the definition of $FCGR_k(s)$ \cite{cgr-deschavanne}, the frequency CGR of order $k$, of a sequence $s$,  which is discretized version of  $CGR(s)$  at resolution $2^k \times 2^k$. We formalize the discretization process of $CGR(s)$  (Proposition \ref{prop:correspondence}), and establish several properties of 
 this discretized version of $CGR(s)$. Lastly, we prove that $FCGR_k(s)$  can be equivalently computed either by discretizing $CGR(s)$ at resolution $2^k \times 2^k $, or by directly counting the occurrences of all $k$-mers in the sequence $s$ (Theorem \ref{thm:main_thm_fcgr} and its Corollary \ref{corollary:cgr2fgr}). 

We start by describing the discretizing process of the CGR of a DNA sequence, by subdividing  the CGR square (of size $2 \times 2$)  into  $2^k \times 2^k$ equal sub-squares of size $1/2^{k-1}$.

\begin{definition}[grid cell of order $k$]\label{def:cell_res_k}
    Given a CGR square,  $k\geq 1$,  and indices  $0 \leq i, j \leq  2^k -1$,  the {\it grid cell $(i, j)$ of order $k$}, denoted by  $\text{cell}_k(i,j)$,  is the region of the square  defined by
\[cell_k(i,j) = \{(x,y):  x_i - \frac{1}{2^k}< x < x_i + \frac{1}{2^k}, y_j -\frac{1}{2^k} <y < y_j + \frac{1}{2^k}\} \]
where  the center of the grid cell $cell_k(i,j)$  is the point
    \begin{equation}\label{eq:center_subsquare}
    (x_j,y_i) = \left(-\frac{2^k-1}{2^k} + \frac{j}{2^{k-1}}, \frac{2^k-1}{2^k} - \frac{i}{2^{k-1}}\right).
\end{equation}
\end{definition}

    Note that the size of each grid cell $cell_k(i,j)$ of order  $k$ is $\frac{1}{2^{k-1}}  \times \frac{1}{2^{k-1}}$.  Observe also that each grid cell $cell_k(i,j)$ (without boundary) $i = 0,\ldots,2^k-1, j= 0,\ldots,2^k-1$, is uniquely determined by its center $(x_j,y_i)$, where the indices $i$ and $j$ are swapped so as to match the indexing convention of matrices.  
    Lastly, note that grid cells of order $k$ correspond to an image resolution of $2^k \times 2^k$.
    
     We can now define the Frequency Chaos Game Representation of order $k$  of a DNA sequence.
\begin{definition}[FCGR of order $k$, of a sequence $s$]\label{def:grid_def}
Let $k\geq 1$ and $n\ge  k$. A Frequency Chaos Game Representation of order $k$, of a sequence $s \in \Sigma^n$,  is a matrix $FCGR_k(s) \in \mathbb{N}^{2^k \times 2^k}$  whose entries are defined as 
$$FCGR_k(s)(i, j) = \text{card}\left(CGR(s) \cap cell_k(i, j)\right) \text{ for all } 0\leq i, j\leq 2^k-1.$$
\end{definition}

The $FCGR_k(s)$ matrix provides a compressed representation of the sequence $s$. Definition \ref{def:grid_def} can be viewed as a discretization of  $CGR(s)$ into $2^k \times 2^k$ cells, each counting the number of points of $CGR(s)$  that fall inside that cell. We will show later that counting the points in a grid cell of order $k$  is equivalent to calculating the number of occurrences of a specific $k$-mer in the sequence $s$. For that, we need to define the notion of a {\it CGR cell associated with a $k$-mer}.

\begin{definition}\label{def:last_cgr_point}
Let $n \ge 1$ and $s \in \Sigma^n$.  The last point of  the CGR representation of $s$, $CGR(s) = \{p_0, p_1,\ldots,p_n\}$, is defined as  $p_{last}(s) = p_n$.
\end{definition}


 Let  $p_{last}(w) = (x_w, y_w) $ be the last point in  $CGR(w)$ of a word  $w=a_1 a_2 ... a_k$, where $k \geq 1$,  and assume that $label(a_l) = (x_l, y_l)$ for all $1\leq l \leq k$.  By the definition of $CGR(s)$, it is easy to see that 
 \begin{equation}\label{eq:P_last(w)}
   x_w = \frac{\Sigma_{l=1}^k x_l \cdot 2^{l-1}}{2^k}, \quad y_w = \frac{\Sigma_{l=1}^k y_l \cdot  2^{l-1}}{2^k}. 
 \end{equation}
As observed in \cite{Hoang2016},  distinct sequences $s \in \Sigma^+$ have distinct last  points $p_{last}(s)$ in their respective CGR representations. 

\begin{definition}[CGR cell associated with a $k$-mer]\label{def:cell_of_k_mer}
    Let $k\geq 1$,  let $w$ be a $k$-mer in $\Sigma^k$, and let $p_{last}(w) = (x_w,y_w)$ be the last point of  $CGR(w)$.
    The {\it CGR cell associated with the $k$-mer  $w$} is
    \[c(w) = \{(x,y)|\;\;  x_w - \frac{1}{2^{|w|}}< x < x_w + \frac{1}{2^{|w|}}, y_w -\frac{1}{2^{|w|}} <y < y_w + \frac{1}{2^{|w|}}\}.\]
\end{definition}

 Figure \ref{fig:cell_of_k_mer} illustrates $c(ACG)$,  the CGR  cell associated with the $3$-mer ACG.  It is easy to see that, for a $k$-mer $w$, the last point  $p_{last}(w)$ of $CGR(w)$ is the center of the sub-square $c(w)$. Note also that for two different $k$-mers $w_1$ and $w_2$, the cells  $c(w_1)$  and $c(w_2)$ do not intersect.

     


\begin{figure}[!h]
\centering
\hspace*{-10pt}
\begin{subfigure}[b]{0.48\linewidth}
        \begin{tikzpicture}
\begin{axis}[
width=1.37\linewidth,
axis lines=middle,
grid=both,
ymin=-1.2,
ymax=1.2,
yticklabel style={font=\tiny},
minor tick num=1,
xmin=-1.2,
xmax=1.2,
xticklabel={ },
xticklabel style={font=\tiny}
]
\addplot[color=black,dashed,thick,mark=*, mark options={fill=white}] 
    coordinates {
         (-1,-1)
         (-1,1)
         (1,1)
         (1,-1)
         (-1,-1)
        }; 
\node [above right] at (axis cs:  -1, -1) {$A$};
\node [below right] at (axis cs:  -1, 1) {$C$};
\node [below left] at (axis cs:  1, 1) {$G$};
\node [above left] at (axis cs:  1, -1) {$T$};
\addplot[color=blue,solid,thick,mark=*, mark options={fill=black}] 
    coordinates {
         (0,0)
         (-0.5,-0.5)
         (-0.75, 0.25)
         (0.125, 0.625)

        };

\addplot [
        thick,
        dashed,
        color=black,
        fill=green, 
        fill opacity=0.25
    ] 
    coordinates {
         (0,0.5)
         (0.25,0.5)
         (0.25, 0.75)
         (0.0, 0.75)
        (0,0.5)
        };
\node [below right] at (axis cs:  0, 0) {$P$};
\node [below] at (axis cs:  -0.5, -0.5) {$P_A$};
\node [left] at (axis cs:  -0.75, 0.25) {$P_{AC}$};
\node [above] at (axis cs:  0.125, 0.625) {$P_{ACG}$};

\end{axis}
\end{tikzpicture}
        \caption{}
        \label{fig:cell_of_k_mer}
    \end{subfigure}
    \hspace{10pt}
    \begin{subfigure}[b]{0.45\linewidth}
        \centering
        \includegraphics[width=1.0\linewidth]{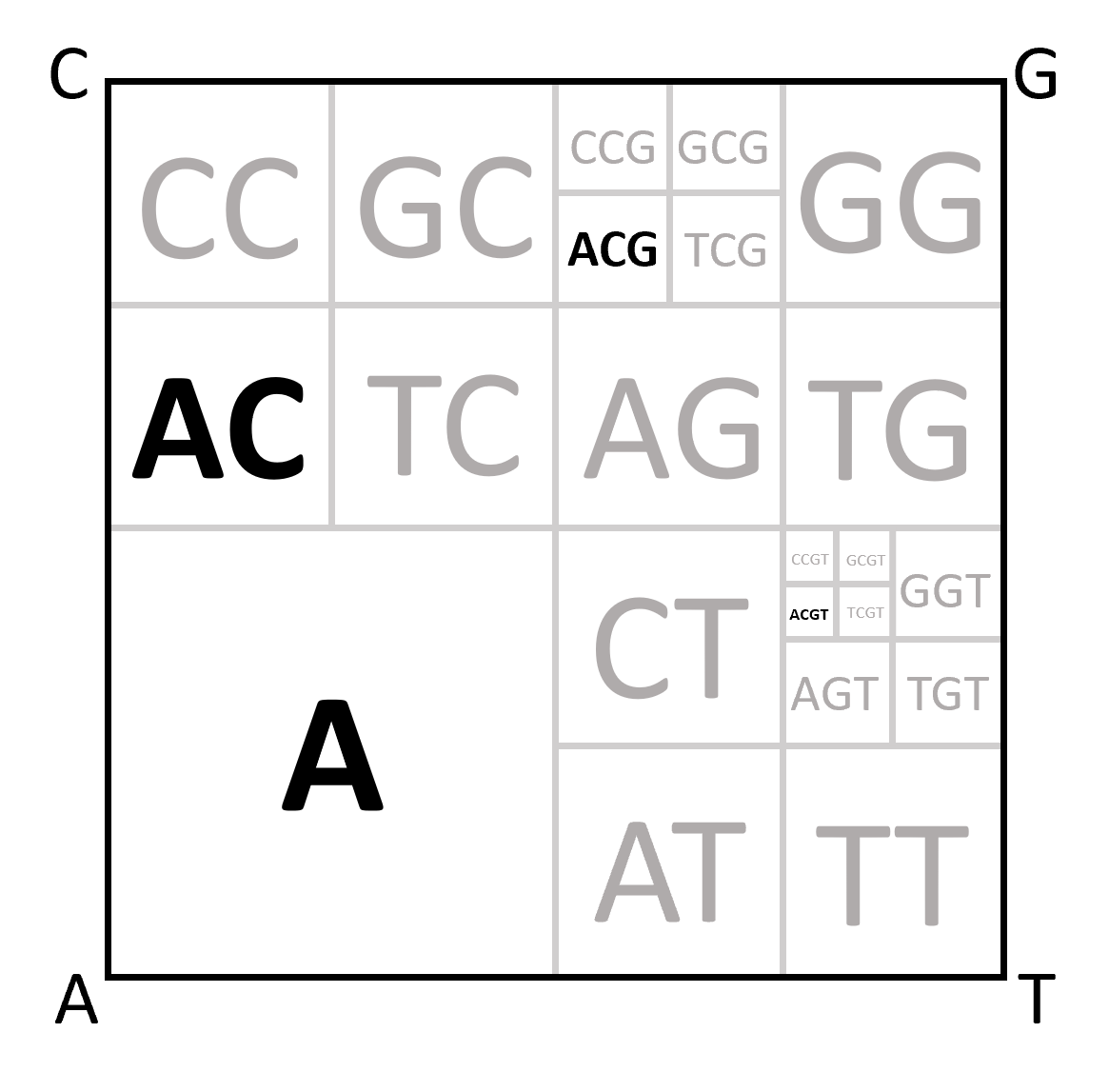}
        \vspace*{-12 pt}
        \caption{}
        \label{fig:cell_structure}
    \end{subfigure}
\caption{
         {\bf (a)} The cell $c(ACG)$ associated with the $3$-mer ACG is marked in green. Its center $P_{ACG}$ coincides with $p_{last}(ACG)$, the last point in the CGR representation of $ACG$. {\bf (b)} Illustration of the hierarchical structure of CGR cells described by Lemma   \ref{lem:inclusion} and Proposition~\ref{prop:cell_hierarchy}  (adapted from~\cite{LochelHeider2021}). }

\vspace{-5mm}
\end{figure}



The following result now establishes the correspondence between the grid cells of order $k$  (Definition~\ref{def:cell_res_k}) and the  CGR cells associated with a $k$-mer  (Definition~\ref{def:cell_of_k_mer}). 

\begin{proposition}\label{prop:correspondence}
    Let $k\geq 1$,  let  $w = a_1\cdots a_k$ be a $k$-mer, and assume that $label(a_l) = (x_l, y_l)$ where $x_l,y_l\in\{-1,1\}$, for all $ 1\leq l \leq k$.     
    Then $c(w) = cell_k(i,j)$, that is,  the CGR cell associated with the $k$-mer $w$ equals  the grid cell $(i, j)$ of order $k$ with indices
     \begin{equation}\label{eq:index}
    j = \frac{2^k-1 + \Sigma_{l=1}^k x_l \cdot 2^{l-1} }{2}, \quad i = \frac{2^k -1 - \Sigma_{l=1}^k y_l\cdot 2^{l-1} }{2}.
     \end{equation}
    
\end{proposition}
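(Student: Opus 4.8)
The plan is to reduce the claimed set equality $c(w) = cell_k(i,j)$ to an equality of centers. Both $c(w)$ (Definition~\ref{def:cell_of_k_mer}) and $cell_k(i,j)$ (Definition~\ref{def:cell_res_k}) are open, axis-aligned squares whose half-width is $\frac{1}{2^k}$, hence both have side length $\frac{1}{2^{k-1}}$. An open axis-aligned square is uniquely determined by its center together with its (common) half-width, so the two cells coincide exactly when their centers agree. It therefore suffices to show that the center $p_{last}(w) = (x_w, y_w)$ of $c(w)$ equals the center $(x_j, y_i)$ of $cell_k(i,j)$ for the indices $i, j$ prescribed in \eqref{eq:index}.

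First I would substitute the closed form for $p_{last}(w)$ from \eqref{eq:P_last(w)} and the closed form for the center of $cell_k(i,j)$ from \eqref{eq:center_subsquare}, and equate components. Writing $\frac{j}{2^{k-1}} = \frac{2j}{2^k}$ and clearing the common denominator $2^k$, the $x$-equation $x_w = x_j$ becomes $\sum_{l=1}^k x_l\,2^{l-1} = 2j - (2^k - 1)$, which inverts immediately to the stated expression $j = \frac{2^k - 1 + \sum_{l=1}^k x_l 2^{l-1}}{2}$; the $y$-equation $y_w = y_i$ is handled symmetrically, the sign flip on the $\frac{i}{2^{k-1}}$ term producing $i = \frac{2^k - 1 - \sum_{l=1}^k y_l 2^{l-1}}{2}$. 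This step is pure algebra and carries no real difficulty.

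The part that genuinely needs care, and which I expect to be the main (if modest) obstacle, is verifying that the recovered $i$ and $j$ are legitimate grid indices, namely \emph{integers} lying in $\{0, \ldots, 2^k - 1\}$. For integrality I would use a parity observation: since each $x_l \in \{-1, 1\}$ is odd, the lowest-order term $x_1 \cdot 2^0 = x_1$ is odd while every term with $l \geq 2$ is even, so $\sum_{l=1}^k x_l 2^{l-1}$ is odd; adding the odd quantity $2^k - 1$ yields an even numerator, making $j$ an integer, and the identical argument applies to $i$. For the range I would bound the signed geometric sum $\sum_{l=1}^k x_l 2^{l-1} \in [-(2^k-1),\, 2^k-1]$, the extremes being attained when all $x_l = -1$ or all $x_l = 1$; inserting these bounds into the formula for $j$ gives $0 \leq j \leq 2^k - 1$, and symmetrically $0 \leq i \leq 2^k - 1$. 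Once both centers are shown to coincide and the indices are confirmed to be valid, the set equality $c(w) = cell_k(i,j)$ follows at once, completing the proof.
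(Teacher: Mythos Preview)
Your proposal is correct and follows essentially the same approach as the paper's proof: both reduce the set equality to an equality of centers (since the two open squares have the same side length), solve the resulting linear equations for $i$ and $j$, and then verify that these indices are integers in $\{0,\ldots,2^k-1\}$ via the extreme values of the signed sum. Your parity argument for integrality is in fact slightly more explicit than the paper's, which simply asserts that $j$ is an integer because $x_l\in\{-1,1\}$.
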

\begin{conditionalproof}
We start by first observing that,  for a given value of $k$ and  $k$-mer $w$, the size of the CGR cell $c(w)$  is the same as the size of  $cell_k(i,j)$,  a grid cell of order $k$, where $0\leq i, j \leq 2^{k-1}$. Therefore, for a given $k$ and  $k$-mer $w$, what remains is to find indices  $0\leq i, j \leq 2^{k-1}$  such that the center $(x_j, y_i)$ of the grid cell $cell_k(i,j)$ coincides with the center $p_{last}(w) = (x_w, y_w)$  of $c(w)$, if such indices  exist.

Let $w=a_1a_2\cdots a_k$ and let  $p_{last}(w)$ be the last point in  $CGR(w)$.

Recall that $p_{last}(w) = (x_w, y_w)$ in \eqref{eq:P_last(w)} is the  center of the $c(w)$. For any indexing $(i,j)$, the center $(x_j,y_i)$ of the $cell_k(i, j)$ is defined by \eqref{eq:center_subsquare}. Letting $(x_w,y_w) =(x_j,y_i)$ leads to
    \[x_w =\frac{\Sigma_{l=1}^k x_l\cdot 2^{l-1}}{2^k} = x_j = -\frac{2^k-1}{2^k} + \frac{j}{2^{k-1}}, \]
    \[ y_w = \frac{\Sigma_{l=1}^k y_l\cdot 2^{l-1}}{2^k} =  y_i = \frac{2^k-1}{2^k} - \frac{i}{2^{k-1}}.  \] 
  
  Solving for $j$ and $i$ gives 
 \[
    j = \frac{2^k-1 + \Sigma_{l=1}^k x_l \cdot 2^{l-1} }{2}, \quad i = \frac{2^k -1 - \Sigma_{l=1}^k y_l\cdot 2^{l-1} }{2},
  \]
  which is \eqref{eq:index} as required. Since  $x_l\in \{-1,1\}$ for all $ l=1,\ldots,k$, we have that $j$ is an integer. Furthermore, the minimum  value of $j$ is achieved when $x_l = -1$  for all $ l=1,\ldots,k$, and its maximum is achieved when  $x_l = 1$  for all $ l=1,\ldots,k$, which leads to $j \in \{0,\ldots,2^k - 1\}$. By a similar reasoning, $i \in \{0,\ldots,2^k-1\}$. Thus, for these values of $i$ and $j$, we have that   $(x_w,y_w)=(x_j, y_i)$. Therefore, the grid cell   $cell_k(i, j)$ equals the CGR cell $c(w)$ associated with the $k$-mer $w$,  as having the same size and the same center.
    
   
\end{conditionalproof}

   For a given $k$,  Proposition~\ref{prop:correspondence}  establishes a bijection between the set of  CGR cells associated with  $k$-mers  (Definition~\ref{def:cell_of_k_mer}) and the set of grid cells of order $k$  (Definition~\ref{def:cell_res_k}).
Observe that,  in both Definition \ref{def:cell_res_k} and Definition \ref{def:cell_of_k_mer}, the respective cells do not include their boundaries. The next proposition, aided by the following lemma,  proves that no point of $CGR(s)$ falls on the boundary of a cell associated with a $k$-mer, for any  $k$ with $1 \leq k \leq n$, where $|s| = n$ (ensuring thus that the $FCGR_k(s)$ as defined in Definition~\ref{def:grid_def} does not miscount). 


\begin{lemma}\label{lem:decom}
    Let $n \ge 1,k \ge 1$, and let  $u \in \Sigma^{n}, w \in \Sigma^k$ be two DNA sequences. Let $p_{last}(u),p_{last}(w)$ be the last points of $CGR(u), CGR(w)$ respectively. Then, the last point of $CGR(uw)$ is 
    \[p_{last}(uw)= \frac{p_{last}(u)}{2^{|w|}} + p_{last}(w).\]
\end{lemma}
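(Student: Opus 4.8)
The plan is to prove the identity by induction on $k = |w|$, exploiting the fact that the CGR generation of the concatenation $uw$ is obtained by running the CGR recursion of Definition~\ref{def:CGR} for $|w|$ additional steps, this time starting from the last point $p_{last}(u)$ of $CGR(u)$ instead of from the origin. This reduces everything to the single recursive step $p_i = \tfrac12(p_{i-1} + label(a_i))$ together with careful tracking of the powers of two.

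For the base case $k = 1$, I would write $w = b$ for a single letter $b \in \Sigma$. Here $p_{last}(b) = \tfrac12 label(b)$, and by the recursive step in Definition~\ref{def:CGR} the last point of $CGR(ub)$ is $p_{last}(ub) = \tfrac12(p_{last}(u) + label(b))$, which equals $\tfrac12 p_{last}(u) + p_{last}(b)$, matching the claimed formula with $|w| = 1$.

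For the inductive step, I would assume the identity holds for all words of length $k-1$ and write $w = w' b_k$ with $w' = b_1 \cdots b_{k-1}$ and $b_k \in \Sigma$. Since $uw = (uw')b_k$, the recursive definition of CGR gives $p_{last}(uw) = \tfrac12(p_{last}(uw') + label(b_k))$. Applying the induction hypothesis to $u$ and $w'$ yields $p_{last}(uw') = 2^{-(k-1)}p_{last}(u) + p_{last}(w')$; substituting and regrouping, the $p_{last}(u)$ term picks up the extra factor of $\tfrac12$ to become $2^{-k}p_{last}(u)$, while the remaining terms $\tfrac12(p_{last}(w') + label(b_k))$ collapse to $p_{last}(w)$ by the same recursion applied to $w = w'b_k$. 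This produces exactly $p_{last}(uw) = 2^{-|w|}p_{last}(u) + p_{last}(w)$.

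There is no deep obstacle here; the only thing to watch is the bookkeeping of the powers of two, namely ensuring that the contribution of $p_{last}(u)$ is divided by $2^{|w|}$ rather than by $2^{|uw|}$, and that the letters of $w$ contribute exactly $p_{last}(w)$ and not a shifted copy. An alternative, purely computational route avoids the induction altogether: writing $uw = c_1\cdots c_{n+k}$ and applying the closed form \eqref{eq:P_last(w)} directly, the defining sum splits at index $n$; the first block reproduces $2^{-k}p_{last}(u)$ after factoring $2^{-k}$ out of $2^{-(n+k)}$, and in the second block the factor $2^{n}$ arising from the exponents $2^{n+l-1}$ cancels against the denominator to leave precisely $p_{last}(w)$. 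Either route suffices, and I would favor the induction since it is shorter and relies only on the recursion in Definition~\ref{def:CGR}.
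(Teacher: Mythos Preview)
Your proposal is correct and matches the paper's own proof essentially line for line: both argue by induction on $|w|$, handle the base case $w=b\in\Sigma$ via the single-step recursion $p_{last}(ub)=\tfrac12(p_{last}(u)+label(b))$, and in the inductive step write $w=w'b$ and combine the induction hypothesis for $uw'$ with the recursion to recover $p_{last}(w)$. The closed-form alternative you mention via \eqref{eq:P_last(w)} is not used in the paper but would also work.
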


\begin{conditionalproof}
    By induction on $k$.\\
    \noindent
{\bf Base Case.}  Let $k = 1$, that is, $w = a $, where $a \in \Sigma$. It follows from the definition of CGR that 
\[p_{last}(ua) = \frac{p_{last}(u)}{2} + \frac{label(a)}{2} = \frac{p_{last}(u)}{2} + p_{last}(a).  \]
\\
\noindent
{\bf Inductive step.}  Assume that the statement holds for an arbitrary $k \geq 1$, and let  $w \in \Sigma^{k+1}$. Then $ w = w' a_{k+1}$ where $w' \in \Sigma^k, a_{k+1} \in \Sigma$. By  I.H., for all $u \in \Sigma^{n_1}, n_1 \ge 0$, we have that the last point $p_{last}(uw')$ of $CGR(uw')$ is
\[ p_{last}(uw')  = \frac{p_{last}(u)}{2^k} + p_{last}(w').\]
The last point $p_{last}(uw)$ of $CGR(uw)$ is

\begin{align*}
    p_{last}(uw) = p_{last}(u w' a_{k+1}) &= \frac{p_{last}(uw')}{2} + \frac{label({a_{k+1}})}{2} \hspace*{1cm} \mbox{ (by definition of CGR)} \\
    &= \frac{p_{last}(u)}{2^{k+1}} + \frac{p_{last}(w')}{2}+ \frac{label({a_{k+1}})}{2} \hspace*{2cm} \mbox{ (by I.H.)}  \\
    &= \frac{p_{last}(u)}{2^{k+1}} + p_{last}(w)  \hspace*{2cm}  \mbox{ (by definition of CGR)}
\end{align*}

\end{conditionalproof}

\begin{proposition}
\label{prop:boundary}
    Let $s$ be a sequence of length $n$ over $\Sigma$, and  let $1\leq k \leq n$. The last point $p_{last}(s)$  of $CGR(s)$ is not on the boundary of a cell $c(w)$  associated with any $k$-mer $w$ in  $\Sigma^k$.  
\end{proposition}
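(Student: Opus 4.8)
The plan is to localise the point $p_{last}(s)$ inside one particular cell and then show it sits strictly in that cell's interior. First I would split $s = uv$, where $v \in \Sigma^k$ is the length-$k$ suffix of $s$ and $u \in \Sigma^{n-k}$ is the remaining prefix (empty when $n = k$). Applying Lemma~\ref{lem:decom} gives
\[ p_{last}(s) = \frac{p_{last}(u)}{2^k} + p_{last}(v). \]
Since $p_{last}(v)$ is, by Definition~\ref{def:cell_of_k_mer}, exactly the centre of the cell $c(v)$, the whole question reduces to showing that the perturbation $\frac{p_{last}(u)}{2^k}$ is too small to push $p_{last}(s)$ out to the boundary of $c(v)$, which lies at $\ell^\infty$-distance precisely $\frac{1}{2^k}$ from that centre.

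The key estimate is that every CGR point lies strictly inside the CGR square. From \eqref{eq:P_last(w)}, each coordinate of $p_{last}(u)$ is a sum $\frac{\sum_{l=1}^{n-k}(\pm 1)\,2^{l-1}}{2^{n-k}}$ whose numerator has absolute value at most $2^{n-k}-1 < 2^{n-k}$, so $p_{last}(u) \in (-1,1)^2$; when $n = k$ the prefix is empty and $p_{last}(u) = p_0 = (0,0)$, so this holds trivially. Dividing by $2^k$ yields $\left\|\frac{p_{last}(u)}{2^k}\right\|_\infty < \frac{1}{2^k}$, so each coordinate of $p_{last}(s)$ differs from the corresponding coordinate of the centre of $c(v)$ by strictly less than $\frac{1}{2^k}$. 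Hence $p_{last}(s)$ lies strictly in the open cell $c(v)$.

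The remaining step, which I expect to be the main obstacle (more bookkeeping than genuine difficulty), is upgrading ``interior to $c(v)$'' to ``on the boundary of \emph{no} cell at all'', since adjacent cells share boundary segments. Here I would invoke Proposition~\ref{prop:correspondence}: each $c(w)$ equals a grid cell $cell_k(i,j)$, and all grid-cell boundaries lie on the common grid lines $x = -1 + \frac{j'}{2^{k-1}}$ and $y = -1 + \frac{i'}{2^{k-1}}$ with $i',j' \in \mathbb{Z}$. A point strictly interior to $c(v)$ has both coordinates strictly between two consecutive such lines, so it meets no grid line and therefore lies on $\partial c(w)$ for no $k$-mer $w$ whatsoever.

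As a fully self-contained alternative that sidesteps the geometry, one can argue by parity of numerators. In \eqref{eq:P_last(w)} only the $l=1$ term is odd, so each coordinate of $p_{last}(s)$ equals $\mathrm{odd}/2^n$, whereas every boundary coordinate of a $k$-mer cell has the form $\frac{m \pm 1}{2^k}$ with $m$ odd, hence equals $\mathrm{even}/2^k$. Equating the two forces an odd integer to equal $(\text{even})\cdot 2^{\,n-k}$ with $n \ge k$, which is even (or zero) and thus impossible. Either route closes the argument.
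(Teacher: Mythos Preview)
Your primary argument is essentially the paper's own proof: decompose $s$ into its length-$k$ suffix $v$ and the remaining prefix, apply Lemma~\ref{lem:decom}, and use the fact that every CGR point has both coordinates of absolute value strictly less than $1$ to place $p_{last}(s)$ strictly inside $c(v)$. You are in fact slightly more careful than the paper: it stops at ``strictly inside $c(v)$'' and leaves the upgrade to ``on no cell boundary whatsoever'' implicit, whereas your grid-line remark via Proposition~\ref{prop:correspondence} spells that step out, and you also handle the $n=k$ edge case explicitly. The parity argument you offer as an alternative is a genuinely different, self-contained route that bypasses Lemma~\ref{lem:decom} and works directly from the dyadic form of the coordinates; the paper does not use it, but it is correct and arguably the cleanest way to see why no CGR point can ever hit a grid line.
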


\begin{conditionalproof}
   First note that any point in $CGR(s)$ of a sequence $s$  is strictly inside the CGR square.  Indeed, for any $s' \in \Sigma^n$, the $x$-coordinate of  $p_{last}(s') = (x_{s'}, y_{s'})$, the last point of $CGR(s')$, is by definition of the form
   \[x_{s'} = \frac{\pm 1}{2} + \frac{\pm 1}{2^2} + \frac{\pm 1}{2^3} + \cdots + \frac{\pm 1}{2^n}.\]
  
    It follows that the absolute values of $x_{s'}$ and $y_{s'}$ are both strictly less than $1$, therefore, $p_{last}(s')$ is strictly inside the CGR square. 
    
    Let $s=s'w$ where $w$ is of length $k$. Then,  by Lemma~\ref{lem:decom},
    \[p_{last}(s) = \frac{p_{last}(s')}{2^k} + p_{last}(w).\]
    The proof is completed by the fact that the absolute values of $x_{s'} $ and $y_{s'}$  are strictly less than 1.
    
\end{conditionalproof}




Lemma \ref{lem:decom} also serves as an aide to showing the hierarchical nested structure of the CGR cells associated with a $k$-mer $w$, as defined in Definition~\ref{def:cell_of_k_mer}. The next result shows that the CGR cell associated with a given $k$-mer $w$ includes all the CGR cells associated with words ending in $w$.

\begin{lemma}~\label{lem:inclusion}
    Let $w_1$, $w_2$ be words in $\Sigma^+$.   Then $c(w_1 w_2)  \subseteq c(w_2)$.
\end{lemma}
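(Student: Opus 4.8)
The plan is to reduce the set inclusion to a coordinate-wise estimate on the centers of the two cells, using the additive decomposition of last points from Lemma~\ref{lem:decom}. Write $k = |w_2|$ and recall from Definition~\ref{def:cell_of_k_mer} that $c(w_2)$ is the open square centered at $p_{last}(w_2) = (x_{w_2}, y_{w_2})$ with half-side $1/2^{k}$, whereas $c(w_1 w_2)$ is the open square centered at $p_{last}(w_1 w_2)$ with the strictly smaller half-side $1/2^{|w_1|+k}$. By Lemma~\ref{lem:decom}, the two centers satisfy
\[
p_{last}(w_1 w_2) = \frac{p_{last}(w_1)}{2^{k}} + p_{last}(w_2),
\]
so the center of the smaller cell is displaced from the center of $c(w_2)$ by exactly $p_{last}(w_1)/2^{k}$. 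Everything then comes down to checking that this displacement, together with the width of the small cell, stays within the width of $c(w_2)$.

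First I would record a sharp bound on the coordinates of $p_{last}(w_1) = (x_{w_1}, y_{w_1})$. Using the closed-form coordinates in \eqref{eq:P_last(w)} together with $x_l \in \{-1,1\}$, the triangle inequality and a geometric-series sum give $|x_{w_1}| \le (2^{|w_1|}-1)/2^{|w_1|} = 1 - 2^{-|w_1|}$, and symmetrically for $y_{w_1}$. (Equivalently, this is the statement that $c(w_1)$ lies inside the CGR square.) Then I would take an arbitrary point $(x,y) \in c(w_1 w_2)$ and bound its deviation from $(x_{w_2}, y_{w_2})$ one coordinate at a time. Writing $x - x_{w_2} = (x - x_{w_1 w_2}) + x_{w_1}/2^{k}$ and combining the membership bound $|x - x_{w_1 w_2}| < 1/2^{|w_1|+k}$ with the coordinate bound above yields
\[
|x - x_{w_2}| < \frac{1}{2^{|w_1|+k}} + \frac{1 - 2^{-|w_1|}}{2^{k}} = \frac{1}{2^{k}},
\]
and the identical argument gives $|y - y_{w_2}| < 1/2^{k}$. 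Hence $(x,y) \in c(w_2)$, which establishes $c(w_1 w_2) \subseteq c(w_2)$.

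The step I expect to be most delicate is this coordinate bound on $p_{last}(w_1)$: the naive estimate $|x_{w_1}| < 1$ is too weak, since it only produces $|x - x_{w_2}| < 1/2^{|w_1|+k} + 1/2^{k}$, which exceeds the half-side $1/2^{k}$ of $c(w_2)$ and fails to close the argument. The exact cancellation of the $1/2^{|w_1|+k}$ term against the deficit $2^{-|w_1|}/2^{k}$ coming from the sharp bound is precisely what makes the containment tight, so the whole proof hinges on carrying the $2^{-|w_1|}$ correction rather than discarding it.
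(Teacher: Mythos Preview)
Your proof is correct and follows essentially the same approach as the paper: use Lemma~\ref{lem:decom} to shift the center, invoke the sharp coordinate bound $|x_{w_1}| \le (2^{|w_1|}-1)/2^{|w_1|}$ from \eqref{eq:P_last(w)}, and combine it with the half-width of $c(w_1w_2)$ to land exactly on $1/2^{|w_2|}$. Your commentary on why the naive bound $|x_{w_1}|<1$ fails is a nice addition that the paper does not make explicit.
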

\begin{conditionalproof}
Let $(x,y) \in  c(w_1w_2)$,  let $p_{last}(w_1w_2) = (x_{w_1 w_2},y_{w_1 w_2})$, $p_{last}(w_1) = (x_{w_1},y_{w_1})$ and let $p_{last}(w_2) = (x_{w_2},y_{w_2})$. Without loss of generality, we consider the $x$-coordinates of all points. By the definition of $c(w_1w_2)$, we have \[-1/2^{|w_1 w_2|} < x-x_{w_1 w_2} < 1/2^{|w_1 w_2|}.\]

By Lemma~\ref{lem:decom},  we have that
    \[ x_{w_1 w_2} = \frac{x_{w_1}}{2^{|w_2|}} +  x_{w_2}.\]
    By replacing $x_{w_1 w_2}$ into the previous  inequality  we obtain
    \[-1/2^{|w_1w_2|} + x_{w_1}/2^{|w_2|}< x-x_{w_2} <   1/2^{|w_1w_2|} + x_{w_1}/2^{|w_2|} .\]
   Since $x_{w_1} \leq \sum_{l=1}^{|w_1|}2^{l-1}/2^{|w_1|} = (2^{|w_1|}-1) / 2^{|w_1|}$, we have $x - x_{w_2} < 1/2^{|w_1w_2|} + x_{w_1}/2^{|w_2|} \leq 1/2^{|w_2|}$. Similarly, we have $-1/2^{|w_2|} \leq -1/2^{|w_1w_2|} + x_{w_1}/2^{|w_2|} < x - x_{w_2}  $. Therefore, $-1/2^{|w_2|} < x -x_{w_2} < 1/2^{|w_2|}.$
   
   Since an analogous argument can be made for the $y$-coordinate, it follows that the arbitrary point  $(x, y)$ from $c(w_1 w_2)$ is  situated in $c(w_2)$.

\end{conditionalproof}
 The {\it closed CGR cell associated with a $k$-mer $w$}  is denoted by 
$\overline{c(w)}$ and is defined as  the union between $c(w)$ and its boundaries (sides of the square).
The next result proves that, given $n \geq 1$,  the closed CGR cell associated with a $k$-mer $w$ equals the union of all closed CGR cells associated with words $w' w$ with $|w'| = n$. This hierarchical structure of the closed cells associated with $k$-mers is illustrated in Figure~\ref{fig:cell_structure}.

\begin{proposition}
\label{prop:cell_hierarchy}
  Let $k \geq 1$ and $n\geq 1$. For a given $k$-mer $w \in \Sigma ^k$ we have that $\overline{c(w)} = \bigcup_{w' \in \Sigma^n}\overline{c(w' w)}$.   
    
\end{proposition}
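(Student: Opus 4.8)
The plan is to realize the right-hand union as the image of the \emph{entire} order-$n$ subdivision of the CGR square under a single affine contraction that carries the whole square onto $\overline{c(w)}$. Define the similarity $T:\mathbb{R}^2\to\mathbb{R}^2$ by $T(z)=\frac{z}{2^k}+p_{last}(w)$. Since $T$ is a uniform scaling by $1/2^k$ followed by a translation, it is a homeomorphism that sends any axis-aligned square of half-width $\rho$ centred at $c$ to the axis-aligned square of half-width $\rho/2^k$ centred at $T(c)$. In particular $T$ maps the closed CGR square $[-1,1]^2$ (half-width $1$, centre the origin) onto the closed square of half-width $1/2^k=1/2^{|w|}$ centred at $p_{last}(w)$, which is exactly $\overline{c(w)}$.

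Next I would identify the images of the order-$n$ cells. For each $w'\in\Sigma^n$, the cell $c(w')$ is centred at $p_{last}(w')$ with half-width $1/2^n$, so by the mapping rule above $T$ sends $\overline{c(w')}$ to the closed square of half-width $1/2^{n+k}=1/2^{|w'w|}$ centred at $T(p_{last}(w'))$. By Lemma~\ref{lem:decom} (taking its $u$ to be $w'$) this centre equals $\frac{p_{last}(w')}{2^k}+p_{last}(w)=p_{last}(w'w)$, hence $T(\overline{c(w')})=\overline{c(w'w)}$. It then remains to check that the closed order-$n$ cells tile the closed CGR square, i.e. $\bigcup_{w'\in\Sigma^n}\overline{c(w')}=[-1,1]^2$. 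This follows from Proposition~\ref{prop:correspondence}, which gives a bijection between $\Sigma^n$ and the grid cells of order $n$ via \eqref{eq:index}, combined with Definition~\ref{def:cell_res_k}, by which those $2^n\times 2^n$ grid cells subdivide the CGR square into equal closed subsquares covering $[-1,1]^2$.

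With these two facts in hand the statement follows by applying the bijection $T$ to the covering identity:
\[ \overline{c(w)} = T\big([-1,1]^2\big) = T\Big(\bigcup_{w'\in\Sigma^n}\overline{c(w')}\Big) = \bigcup_{w'\in\Sigma^n} T\big(\overline{c(w')}\big) = \bigcup_{w'\in\Sigma^n}\overline{c(w'w)}, \]
where the third equality uses that set images commute with unions. As a cross-check, the inclusion $\supseteq$ can alternatively be obtained directly from Lemma~\ref{lem:inclusion} (with $w_1=w'$, $w_2=w$) after taking closures, since $c(w'w)\subseteq c(w)$ yields $\overline{c(w'w)}\subseteq\overline{c(w)}$. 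I expect the only delicate point to be the tiling claim $\bigcup_{w'\in\Sigma^n}\overline{c(w')}=[-1,1]^2$: one must confirm both that the order-$n$ cells exhaust $\Sigma^n$ (a counting argument via Proposition~\ref{prop:correspondence}, noting $\mathrm{card}(\Sigma^n)=2^n\cdot 2^n$ equals the number of grid cells of order $n$) and that their closures leave no gaps along shared boundaries, which is immediate from the explicit cell coordinates in Definition~\ref{def:cell_res_k}.
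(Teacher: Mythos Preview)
Your argument is correct and takes a genuinely different route from the paper. The paper proceeds by strong induction on $n$: the base case $n=1$ verifies $\overline{c(w)}=\bigcup_{a\in\Sigma}\overline{c(aw)}$ by expanding $\overline{c(aw)}$ via Lemma~\ref{lem:decom} and checking the four choices $(x_a,y_a)\in\{-1,1\}^2$ by hand; the inductive step peels off one letter, writing $\bigcup_{w'\in\Sigma^{j+1}}\overline{c(w'w)}=\bigcup_{a\in\Sigma}\bigcup_{w''\in\Sigma^j}\overline{c(w''aw)}$ and applying the hypothesis twice. Your proof is instead global and induction-free: you identify the single affine contraction $T(z)=z/2^k+p_{last}(w)$, show via Lemma~\ref{lem:decom} that $T(\overline{c(w')})=\overline{c(w'w)}$, and then push the known order-$n$ tiling of $[-1,1]^2$ through $T$. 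This makes the self-similar (IFS-like) structure explicit and handles all $n$ at once, at the cost of importing Proposition~\ref{prop:correspondence} to justify the tiling $\bigcup_{w'\in\Sigma^n}\overline{c(w')}=[-1,1]^2$; the paper's induction avoids that dependency and stays entirely within Lemma~\ref{lem:decom} and elementary interval arithmetic.
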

\begin{conditionalproof}
    By strong induction on $n$. 
    
    \noindent
    {\bf Base Case.} For $n=1$, we have to prove that 
    \[\overline{c(w)} = \bigcup_{a \in \Sigma}\overline{c(a w)}. \]
    Let $p_{last}(w) = (x_{w},y_{w})$ and let $label(a)=(x_a,y_a)$. By definition of $\overline{c(w)}$,
    \[\overline{c(w)} = \{(x,y)|\;\; -1/2^{|w|}\leq x-x_{w} \leq 1/2^{|w|},\; -1/2^{|w|}\leq y-y_{w} \leq 1/2^{|w|}\}.\]
    By Lemma~\ref{lem:decom} and the definition of  $\overline{c(a w)}$, we have
    \[\overline{c(a w)} = \{(x,y)|\;\; \frac{-1 + x_a }{2^{|w|+1}}\leq x-x_{w} \leq \frac{1 + x_a }{2^{|w|+1}},\;\; \frac{-1 + y_a }{2^{|w|+1}}\leq y-y_{w} \leq \frac{1 + y_a }{2^{|w|+1}}\}.\]

   \noindent
    Since $x_a, y_a \in \{ -1, 1 \}$, it suffices to consider all possible pairs in $ \{ -1, 1 \} \times \{ -1, 1 \}$. 
    Substituting each of these pairs into the preceding equation and taking their union, leads to the desired range for points in $\overline{c(w)}$, from which it follows that $\overline{c(w)} = \bigcup_{a \in \Sigma} \overline{c(a w)}.$


    \noindent
    {\bf Inductive Step.}  Assume  now that the statement holds for all $1\leq i \leq j$, and  show that it holds for $j+1$.
    Note that
    \[ \bigcup_{w' \in \Sigma^{j+1}}\overline{c(w' w)} = \bigcup_{a \in \Sigma} \bigcup_{w'' \in \Sigma^j}\overline{c(w'' a w)}.\]
    Since  by the induction hypothesis, we have $\bigcup_{w'' \in \Sigma^j}\overline{c(w'' a w)} = \overline{c(a w)}$,  and also that $\overline{c(w)} = \bigcup_{a \in \Sigma}\overline{c(a w)}$, the proof of the inductive step  is complete.
\end{conditionalproof}

 Lemma \ref{lem:decom} can now be used to show that the $FCGR_k(s)$  matrix defined in Definition~\ref{def:grid_def} through discretizing $CGR(s)$ into grid cells of order $k$ (i.e., at resolution $2^k \times 2^k$), can also be obtained directly by counting the number of occurrences of all $k$-mers in the sequence $s$.

\begin{theorem}
\label{thm:main_thm_fcgr}
    Given a DNA sequence $s \in \Sigma^n$, and a $k$-mer $w \in \Sigma^k$, where  $1 \leq k \leq n$,   we have that  $$ occ(s, w) = \mbox{card}(CGR(s) \cap c(w)) $$ where   $occ(s, w)$  denotes the number of occurrences of the $k$-mer $w$  in $s$. 
\end{theorem}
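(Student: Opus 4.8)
The plan is to convert the geometric quantity $\mathrm{card}(CGR(s)\cap c(w))$ into a purely combinatorial count of suffix matches. Writing $s=a_1\cdots a_n$ and letting $p_i$ denote the $i$-th point of $CGR(s)$, the starting observation is that $p_i=p_{last}(a_1\cdots a_i)$ is the last point of the length-$i$ prefix. The core of the argument is the claim that for every $i$ with $k\le i\le n$, the point $p_i$ lies in $c(w)$ \emph{if and only if} the length-$k$ suffix $a_{i-k+1}\cdots a_i$ of that prefix equals $w$, while for $i<k$ the point $p_i$ lies in $c(w)$ for no $k$-mer $w$ at all. Granting this, the points of $CGR(s)$ inside $c(w)$ correspond exactly to the ending positions $i\in\{k,\dots,n\}$ of occurrences of $w$ in $s$, of which there are precisely $occ(s,w)$.

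For the ``if'' (and simultaneously ``only if'') direction with $i\ge k$, I would split the prefix as $a_1\cdots a_i=u\,v$ with $v=a_{i-k+1}\cdots a_i\in\Sigma^k$ and $u=a_1\cdots a_{i-k}$. Lemma~\ref{lem:decom} then gives $p_i=p_{last}(u)/2^{k}+p_{last}(v)$; the degenerate case $i=k$ (where $u=\lambda$) is immediate, since then $p_i=p_{last}(v)$ is the center of $c(v)$. Because every CGR point has both coordinates strictly between $-1$ and $1$ — precisely the estimate established inside the proof of Proposition~\ref{prop:boundary} — the displacement $p_{last}(u)/2^{k}$ has each coordinate of absolute value strictly below $1/2^{k}$. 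Since $p_{last}(v)$ is by definition the center of $c(v)$, this shows $p_i$ lies strictly inside $c(v)$. If $v=w$ we are done; if $v\ne w$, then $p_i\in c(v)$ together with the disjointness of the cells of distinct $k$-mers (noted after Definition~\ref{def:cell_of_k_mer}) forces $p_i\notin c(w)$.

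The step I expect to be the main obstacle is excluding the short-prefix points $p_0,\dots,p_{k-1}$, which the suffix decomposition does not reach. For these I would argue arithmetically from \eqref{eq:P_last(w)}: the $x$-coordinate of $p_i$ equals $m/2^{i}$ where $m=\sum_{l=1}^{i}x_l 2^{l-1}$ is \emph{odd} for $i\ge 1$ (its $l=1$ term is $\pm1$ and every other term is even), whereas the $x$-coordinate $m'/2^{k}$ of the center of $c(w)$ has $m'$ odd. When $i<k$ the difference $(m\,2^{k-i}-m')/2^{k}$ has odd numerator, so its absolute value is at least $1/2^{k}$ and $p_i$ cannot lie strictly inside $c(w)$; the point $p_0=(0,0)$ is excluded at once since the center coordinates are nonzero multiples of $2^{-k}$.

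Finally I would assemble the count. The matching indices form exactly the set $\{\,i:k\le i\le n,\ a_{i-k+1}\cdots a_i=w\,\}$, whose size is $occ(s,w)$ by the definition of an occurrence. Distinct indices give prefixes of distinct lengths, hence distinct words, hence distinct last points (the last-point map is injective, as recorded after Definition~\ref{def:last_cgr_point}); therefore no two counted indices collapse to the same element of the set $CGR(s)\cap c(w)$, and the set cardinality equals the number of matching indices, namely $occ(s,w)$.
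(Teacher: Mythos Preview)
Your proof is correct and follows the same overall architecture as the paper: for $i\ge k$ you decompose the $i$-th prefix as $uv$ with $|v|=k$, invoke Lemma~\ref{lem:decom} to write $p_i=p_{last}(u)/2^{k}+p_{last}(v)$, use the strict containment of CGR points in the open square (from the proof of Proposition~\ref{prop:boundary}) to place $p_i$ inside $c(v)$, and finish with disjointness of cells for distinct $k$-mers. The paper does exactly this, only organized as two separate inequalities rather than a single if-and-only-if.

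The one genuine difference is how you exclude the short prefixes $p_0,\dots,p_{k-1}$. The paper splits into two sub-cases (whether the short prefix is a proper suffix of $w$ or not), using the cell-nesting Lemma~\ref{lem:inclusion} in the first sub-case and a direct displacement estimate in the second. Your parity argument --- that the coordinates of $p_i$ and of the center $p_{last}(w)$ are odd integers over $2^i$ and $2^k$ respectively, forcing their difference to have odd numerator over $2^k$ and hence absolute value at least $2^{-k}$ --- is more elementary and self-contained: it dispatches all short prefixes uniformly and bypasses Lemma~\ref{lem:inclusion} entirely. The paper's route, on the other hand, keeps the reasoning geometric and reuses the cell hierarchy it has already developed. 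Both are clean; yours is slightly shorter.
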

\begin{conditionalproof}
   Let $k\geq 1$ and let $w \in \Sigma^k$ be a $k$-mer.  We first prove that 
     $occ(s, w) \leq \mbox{card}(CGR(s) \cap c(w))$. 
     
     Consider an arbitrary occurrence of $w$ in $s$. Then  $s = s_1 w s_2$ for some $s_1,s_2\in \Sigma^*$, and let us denote  $s' = s_1w$. 
     
     If $s_1 = \lambda$, then  $p_{last}(s') = p_{last}(w)$ is the center of $c(w)$ and thus is located  inside $c(w)$.

     If $s_1 \in \Sigma^+$, by applying Lemma~\ref{lem:decom} to $s'$, we have that the last point $p_{last}(s')$ of $CGR(s')$ is
    \[p_{last}(s') =  p_{last}(s_1 w) = \frac{p_{last}(s_1)}{2^k} + p_{last}(w).\] From the proof of Proposition~\ref{prop:boundary}, we have that  $p_{last}(s_1)$ is located  inside the CGR square, that is,  $p_{last}(s_1) \in \{(x,y)|\; -1<x<1,-1<y<1\}$. 
This implies  that $$p_{last}(s')  = p_{last}(s_1 w) \in \{p_{last}(w) + (x,y)|\;  -1/2^k<x<1/2^k,-1/2^k<y<1/2^k\}.$$
By comparing this with  Definition~\ref{def:cell_of_k_mer}  of  $c(w)$,   Definition~\ref{def:cell_of_k_mer}, it follows that $p_{last}(s')$ is located  inside $c(w)$.     
    
    Since we reached the same conclusion in both possible cases, it follows that any occurrence of $w$  in $s = s_1 w s_2$ corresponds to a CGR  point $p_{last}(s_1 w) $  located inside $c(w)$. From this, 
    It follows that $occ(s, w) \leq \mbox{card}(CGR(s) \cap c(w)) $.

    We now prove  that  $occ(s, w) \geq \mbox{card}(CGR(s) \cap c(w))$. 
    
   Consider a point of $CGR(s)$ that is located inside $c(w)$. This point is $p_{last}(s')$, the last point of $CGR(s')$, for some prefix $s'$ of $s$. 
   
   We first show that   $|s'| \geq k$. 
   Assume, for the sake of contradiction, that   $|s'| = k'< k$. 
   
   If $s'$ is not a proper suffix of $w$, then for any decomposition  $w = w_1w_2$ with $|w_2| = |s'|$, we have that $w_2 \neq s'$. Then $p_{last}(s')$ will be located inside $c(s')$, which does not overlap with $c(w_2)$, hence $p_{last}(s')$  is not located inside $c(w_2)$. By Lemma~\ref{lem:inclusion}, $c(w) = c(w_1 w_2) \subseteq  c(w_2)$. This implies that  $p_{last}(s')$  is not located inside $c(w)$ - a contradiction.
 Thus, it must be the case that if $s' = k' < k$, then  $s'$ is a proper suffix of $w$, that is, $s' = w_2$ with $w = w_1 w_2$, and $w_1 \neq \lambda$.
   
However, this leads to a contradiction, as follows. From Lemma~\ref{lem:decom}, we have $p_{last}(w_2) - p_{last}(w_1w_2) = \frac{p_{last}(w_1)}{2^{k'}}$. For $p_{last}(s)$ to be located inside $c(w)$, we would need $2^{|w_1|} \cdot  p_{last}(w_1) \in(-1,1) \times (-1,1)$, which cannot happen for any $w_1 \neq \lambda$. Indeed,  for the aforementioned relation to hold, the  $x$-coordinate of  $p_{last}(w_1) = (x_{w_1}, y_{w_1})$  should satisfy 
\[2^{|w_1|} \cdot x_{w_1}  = \sum_{l=1}^{|w_1|} x_l \cdot 2^{l-1}.\]
If  $x_{w_1} = 1$,  then we have that  $2^{|w_1|} \cdot x_{w_1} \geq 1$, and if   $x_{w_1} = -1$, then  we have that $2^{|w_1|} \cdot x_{w_1} \leq -1$. In both situations, the sum on the right-hand side cannot be in the interval $(-1, 1)$,  while the quantity on the left-hand side is in  $(-1, 1)$  -   a contradiction.
    
    Thus, it follows that $k \leq |s'| \leq n$, where  $s'$ is a prefix of $s$. Let $s'=s_1w'$ where $s_1 \in \Sigma^*, w' \in \Sigma^k$.  If $w' \neq w$, using the same argument as before, one can show that $p_{last}(s')$ will be in $c(w')$, which contradicts the fact that cells associated with different $k$-mers are non-overlapping.   Therefore $s' = s_1 w$, which means that the point in $CGR(s)$ that we considered, $p_{last}(s')$, corresponds to the specific occurrence of $w$ in $s$.   Since distinct $s'$ result in distinct $p_{last}(s')$, each point in  $CGR(s)$ located inside $c(w)$  corresponds to a distinct occurrence of $w$ in $s$. This further  implies that
    $ \mbox{card}(CGR(s) \cap c(w)) \leq occ(s, w)$, which completes the proof.
\end{conditionalproof}


\begin{corollary}
\label{corollary:cgr2fgr}
 Let $s \in \Sigma^n$ be a sequence, and let $1\leq k \leq n$ be a $k$-mer length.  Computing  $FCGR_k(s)$ by discretizing $CGR(s)$  into grid cells of order $k$  as defined in Definition \ref{def:grid_def} (i.e., at resolution $2^k \times 2^k$)  is equivalent to counting the number of occurrences of  $k$-mers  in the sequence $s$.
\end{corollary}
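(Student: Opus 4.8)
The plan is to obtain the corollary as an immediate consequence of the correspondence established in Proposition~\ref{prop:correspondence} together with the counting identity of Theorem~\ref{thm:main_thm_fcgr}. Recall that Definition~\ref{def:grid_def} defines each entry of $FCGR_k(s)$ as $FCGR_k(s)(i,j) = \text{card}(CGR(s) \cap cell_k(i,j))$, i.e., by discretizing $CGR(s)$ at resolution $2^k \times 2^k$. The goal is to show that this quantity coincides, entry by entry, with the occurrence count $occ(s,w)$ of the $k$-mer $w$ that Proposition~\ref{prop:correspondence} associates with the index pair $(i,j)$.

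First I would fix an index pair $(i,j)$ with $0 \le i,j \le 2^k-1$. By Proposition~\ref{prop:correspondence}, the map sending a $k$-mer $w = a_1\cdots a_k$ to the indices in~\eqref{eq:index} is a bijection between $\Sigma^k$ and the set of grid cells of order $k$, and for the unique $k$-mer $w$ mapped to $(i,j)$ we have the set equality $cell_k(i,j) = c(w)$. Substituting this equality into Definition~\ref{def:grid_def} gives $FCGR_k(s)(i,j) = \text{card}(CGR(s) \cap c(w))$, and then invoking Theorem~\ref{thm:main_thm_fcgr} yields $FCGR_k(s)(i,j) = occ(s,w)$. Since the assignment $(i,j) \mapsto w$ ranges bijectively over all of $\Sigma^k$ as $(i,j)$ ranges over all valid index pairs, the full matrix $FCGR_k(s)$ is precisely the collection of occurrence counts of every $k$-mer in $s$, merely arranged in the $2^k \times 2^k$ grid layout; equivalently, its vectorization is the $k$-mer frequency vector of $s$.

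The only subtlety to address is that both $cell_k(i,j)$ and $c(w)$ are defined as open regions excluding their boundaries, so one must be sure the discretization neither loses nor double-counts points lying on cell edges. This is exactly what Proposition~\ref{prop:boundary} guarantees: for $1 \le k \le n$, no point of $CGR(s)$ lies on the boundary of any cell associated with a $k$-mer, so the open-cell counting in Definition~\ref{def:grid_def} accounts for every point of $CGR(s)$ without ambiguity. Hence no genuine obstacle remains: the work is purely in chaining the bijection of Proposition~\ref{prop:correspondence}, the identity of Theorem~\ref{thm:main_thm_fcgr}, and the boundary-exclusion guarantee of Proposition~\ref{prop:boundary}. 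The main (mild) point to be careful about is verifying that the correspondence $(i,j)\mapsto w$ is genuinely the same bijection underlying both sides, so that the grid indexing of $FCGR_k(s)$ and the $k$-mer labelling agree and the equivalence is an entrywise identity rather than merely an equality of multisets of counts.
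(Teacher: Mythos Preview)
Your proposal is correct and follows essentially the same approach as the paper, which simply states that the corollary follows from Theorem~\ref{thm:main_thm_fcgr} together with Proposition~\ref{prop:correspondence} and the bijection it establishes. Your version is more explicit---spelling out the entrywise identity and additionally invoking Proposition~\ref{prop:boundary} to dispel boundary concerns---but the underlying argument is the same chaining of those two results.
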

\begin{conditionalproof}
It follows from Theorem~\ref{thm:main_thm_fcgr}, Proposition \ref{prop:correspondence} and the bijection that it establishes.
\end{conditionalproof}
Importantly, we now observe that in  \cite{hao2000} one finds a definition of   ``FCGR of order $k$ of a sequence $s$'' via Kronecker products \iftoggle{arxiv}{(Definition~\ref{def:freq_def} in \cite{hao2000})}{} that is {\it different} from our Definition \ref{def:grid_def}.  The advantage of the concept as defined in Definition \ref{def:grid_def} is that, besides being equivalent to computing the counts  of $k$-mers, $FCGR_k(s)$ in Definition \ref{def:grid_def} is a discretization of  $CGR(s)$, at  resolution $2^k \times 2^k$, and it is thus both visually similar and obtainable from $CGR(s)$. In contrast, the FCGR of order $k$   of a sequence $s$ as defined in \cite{hao2000} is not connected to $CGR(s)$, and it is visually different from $CGR(s)$.
(see Appendix~\ref{subsec:fcgr_kron} for a detailed comparison).

\section{From $k$-mer Distributions to Synthetic DNA}\label{sec:k_mer_freq}

 In this section, we explore the concept of empirical $k$-mer distribution of a sequence $s$, and formalize the correspondence with its $FCGR_k(s)$. We exploit this correspondence to generate synthetic DNA sequences and their respective CGRs by tracing an Eulerian path on the De Bruijn multigraph built from a target $k$-mer distribution. The main result (Theorem \ref{thm:reconstruction}) shows that, for any target distribution on the standard probability simplex satisfying the structural constraints imposed by the linear nature of DNA sequences, we can construct a synthetic sequence $s$ whose empirical $k$-mer distribution matches the target with arbitrarily high precision.  A computational tool implementing the algorithm is available at~ \href{https://tinyurl.com/kmer2cgr}{\url{https://tinyurl.com/kmer2cgr}}, and is used to empirically evaluate the algorithm's practical reconstruction accuracy through computational experiments using both real and synthetic target distributions.

\begin{definition}
Given the DNA alphabet \(\Sigma = \{A,C,G,T\}\) and a $k$-mer \( w=a_1\dots a_k \in \Sigma^k \), we define the index function $\text{idx}:\Sigma^k \rightarrow \{0,4^k-1\}$ as:
\begin{equation}\label{eq:kmer2idx}
\text{idx}(w) = \sum_{t=1}^{|w|} 4^{|w|-t}\times \xi(a_t),
\end{equation}
where the labelling function $\xi: \Sigma \rightarrow \{0,1,2,3\}$ assigns each nucleotide to an integer as follows:
$$
\xi(A) = 0,\; \xi(C) = 1,\; \xi(G) = 2,\; \xi(T) = 3.
$$
\end{definition}





 The previous definition interprets each $k$-mer $w$ as a base-4 numeral via the digit assignment $\xi(\cdot)$. Note that $\text{idx}(w)$ is a bijection between $\Sigma^k$ and $\{0,4^k-1\}$, with $k=|w|$. This bijection induces a natural ordering on $\Sigma^k$, where for $u, v \in \Sigma^k$, we write $u < v$ if and only if $\mathrm{idx}(u) < \mathrm{idx}(v)$. Based on the label assignment  $\xi(\cdot)$, this ordering corresponds to the lexicographic order given by the nucleotide ordering $A < C < G < T$.

\begin{definition}[$k$-mer frequency vector] Given a DNA sequence $s$, its $k$-mer frequency vector $F_k(s)$ is a vector in $\mathbb{N}^{4^k}$ defined as: 
$$F_k(s)= (occ(s, w_0), \ occ(s,w_1), \dots, \ occ(s, w_{4^k-1}))$$ 
where $w_i = \text{idx}^{-1}(i)$ is the  \(k\)-mer corresponding to index $i$ under the mapping in \eqref{eq:kmer2idx}.
\end{definition}

We now formalize the intuition that the $k$-mer frequency vector $F_k(s)$ corresponds to a vectorization of the $FCGR_k(s)$ matrix and compute the mapping between the matrix coordinates and the vector positions. For that purpose, we use the $label(\cdot)$ function (Definition~\ref{def:cgr_sqr}) that links the coordinates of each corner in the CGR square with each nucleotide. Recall that for each $k$-mer $w = a_1\cdots a_k$,  $label(a_l) = (x_l, y_l)$, with $x_l,y_l \in \{-1,1\}$ for all $ 1\leq l \leq k$. Consequently, $label^{-1}(x_l, y_l) = a_l$.


\begin{lemma}\label{lem:fcgr_idx2kmer}
Let $k \ge 1$,  and  $0 \leq i, j \leq  2^k -1$. If $i = \sum_{l=0}^{k-1}\alpha_{l}2^{l}$ and $j = \sum_{l=0}^{k-1}\beta_{l}2^{l}$,  with $\alpha_{l}, \beta_l \in \{0,1\}$ are the binary expansions of $i$ and $j$, then $cell_k(i,j) = c(a_1\cdots a_k)$ where  $c(a_1 ... a_k)$ is the cell associated with $a_1  ... a_k$ constructed as 
\begin{equation}\label{eq:fcgr_idx_2_matrix}
a_l = label^{-1}(1-2\alpha_{l-1},2\beta_{l-1}-1), \mbox{ for } 1\leq l \leq k
\end{equation}

\end{lemma}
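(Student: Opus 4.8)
The plan is to read~\eqref{eq:fcgr_idx_2_matrix} as the \emph{inverse} of the correspondence already proved in Proposition~\ref{prop:correspondence}. For a $k$-mer $w = a_1\cdots a_k$ with $label(a_l)=(x_l,y_l)$, that proposition identifies $c(w)$ with $cell_k(i,j)$, where $i$ and $j$ are the affine images of the signed sums $\sum_{l=1}^k y_l 2^{l-1}$ and $\sum_{l=1}^k x_l 2^{l-1}$ recorded in~\eqref{eq:index}. Since the discussion following Proposition~\ref{prop:correspondence} already establishes that $w\mapsto(i,j)$ is a bijection between $\Sigma^k$ and $\{0,\dots,2^k-1\}^2$, it suffices to show that the $k$-mer produced by~\eqref{eq:fcgr_idx_2_matrix} is the unique preimage of a given $(i,j)$. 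I would do this by solving the two scalar relations in~\eqref{eq:index} for the coordinates $(x_l,y_l)$ in terms of the binary digits of $i$ and $j$.

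The concrete mechanism is a digit-by-digit match. Each coordinate lies in $\{-1,1\}$, so I would write it as an affine function of a single bit, say $x_l = 1-2b_l$ with $b_l\in\{0,1\}$; substituting into $\sum_{l=1}^k x_l 2^{l-1}$ and using $\sum_{l=1}^k 2^{l-1}=2^k-1$ collapses the signed sum to $(2^k-1)-2\sum_{l=1}^k b_l 2^{l-1}$, so the relation for $j$ in~\eqref{eq:index} becomes an identity purely between the ordinary binary number $\sum_{l=1}^k b_l 2^{l-1}$ and $j$. Comparing this with the binary expansion $j=\sum_{l=0}^{k-1}\beta_l 2^l$, after the reindexing $m=l-1$ that aligns the CGR weight $2^{l-1}$ of the letter $a_l$ with the bit position $m$, pins down each $b_l$ and hence each $x_l$; the identical computation with the $y$-coordinates and $i$ fixes the $y_l$. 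Uniqueness of binary representations then yields exactly one $k$-mer, and since distinct $k$-mers have distinct last points (the remark after~\eqref{eq:P_last(w)}) this $k$-mer is the unique preimage, as required. Alternatively, one can run the argument forward: substitute the coordinates of the $a_l$ prescribed by~\eqref{eq:fcgr_idx_2_matrix} directly into~\eqref{eq:index} and check that the sums telescope back to $i$ and $j$.

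The step demanding the most care --- and where the argument genuinely lives --- is the bookkeeping of signs and orientation, since a careless pass can swap the roles of the two coordinates. Two conventions interact: the matrix indexing, in which the row index $i$ increases as $y$ \emph{decreases} while the column index $j$ increases with $x$ (the $\mp$ signs in~\eqref{eq:center_subsquare} and~\eqref{eq:index}), and the bitwise complement forced by $x_l,y_l\in\{-1,1\}$ through the $2^k-1$ offsets. Together these determine precisely which of $\alpha_{l-1},\beta_{l-1}$ governs the $x$- versus the $y$-coordinate of $a_l$, and with which sign. I would therefore lock down the conventions on the base case $k=1$ --- matching the four nucleotides $A,C,G,T$ against the four cells $(i,j)\in\{0,1\}^2$ via their centres $label(\cdot)/2$ --- before asserting the general digit-by-digit identity, since this small check is exactly what rules out an accidental coordinate transposition.
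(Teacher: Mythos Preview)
Your approach is essentially the paper's: invoke Proposition~\ref{prop:correspondence}, rewrite the expressions for $i$ and $j$ in~\eqref{eq:index} so that each becomes a sum of terms $2^{l-1}\cdot\frac{1\pm x_l}{2}$ (resp.\ $y_l$), and then match digit-by-digit against the binary expansions $\sum \alpha_l 2^l$, $\sum \beta_l 2^l$ to read off $(x_l,y_l)$ and hence $a_l$. Your added insistence on verifying the sign and coordinate conventions on the base case $k=1$ is well placed---indeed, carrying it out shows that the formula in~\eqref{eq:fcgr_idx_2_matrix} as printed has the two arguments of $label^{-1}$ transposed (the derivation actually yields $x_l=2\beta_{l-1}-1$ and $y_l=1-2\alpha_{l-1}$), so that check is not merely prophylactic.
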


\begin{conditionalproof}
From Proposition~\ref{prop:correspondence}, we know that $$j = \frac{2^k+1 + \sum_{l=1}^k x_l \cdot 2^{l-1} }{2}, \quad i = \frac{2^k +1 - \sum_{l=1}^k y_l\cdot 2^{l-1} }{2},$$ which can be rewritten as:

$$j = \frac{1}{2} \left(\sum_{l=1}^k 2^{l-1} \cdot (1 + x_l) + 1\right), \quad i = \frac{1}{2} \left(\sum_{l=1}^k 2^{l-1} \cdot (1 - y_l) + 1\right).$$
Now, considering the binary expansion of the indices, we get the following equations:
$$j = \sum_{l=0}^{k-1}\beta_l \cdot 2^t  = \sum_{l=1}^k 2^{l-1} \cdot \frac{(1 + x_l)}{2}, \quad i = \sum_{l=0}^{k-1}\alpha_l \cdot 2^l = \sum_{l=1}^k 2^{l-1} \cdot \frac{(1 - y_l)}{2}, $$
which we solve for $x_l$ and $y_l$ to obtain the correspondence with the coefficients in the binary expansion:
\begin{equation}
  x_l = 1 - 2\alpha_{l-1}, \quad y_l=2\beta_{l-1}-1  
\end{equation}

Finally, by Definition~\ref{def:cgr_sqr} we get that: $a_l = label^{-1}(x_l,y_l) = label^{-1}(1 - 2\alpha_{l-1}, 2\beta_{l-1}-1)$
\end{conditionalproof}

\begin{proposition}
\label{prop:fgr_idf2lexi}
\sloppy
Let $k \ge 1$, $n \ge 1$, $s \in \Sigma^n$ and $0 \leq i, j \leq  2^k -1$. If $i =  \sum_{l=0}^{k-1}\alpha_{l}2^{l}$ and $j = \sum_{l=0}^{k-1}\beta_{l}2^{l}$, with $\alpha_{l}, \beta_l \in \{0,1\}$ (their binary expansions), then $FCGR_k(s)(i,j)= occ(s, w_\tau)$, where
\begin{equation}
\tau = \sum_{l=0}^{k-1}4^{k-l-1}\times\xi(label^{-1}(1-2\alpha_{l-1},2\beta_{l-1}-1))
\end{equation}
\end{proposition}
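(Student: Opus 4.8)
The plan is to prove this statement without any new geometric work, obtaining it instead as a bookkeeping identity that chains together three results already established: the definition of $FCGR_k(s)$ as a point count (Definition~\ref{def:grid_def}), the cell-to-$k$-mer correspondence (Lemma~\ref{lem:fcgr_idx2kmer}), and the occurrence-counting theorem (Theorem~\ref{thm:main_thm_fcgr}). First I would unwind the left-hand side directly from Definition~\ref{def:grid_def}, which gives $FCGR_k(s)(i,j) = \mathrm{card}\!\left(CGR(s) \cap cell_k(i,j)\right)$, reducing the claim to identifying the relevant grid cell and its associated $k$-mer.

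Next I would apply Lemma~\ref{lem:fcgr_idx2kmer} to the binary expansions $i = \sum_{l=0}^{k-1}\alpha_l 2^l$ and $j = \sum_{l=0}^{k-1}\beta_l 2^l$. This rewrites the grid cell as the CGR cell of a concrete $k$-mer, namely $cell_k(i,j) = c(a_1\cdots a_k)$ with $a_l = label^{-1}(1 - 2\alpha_{l-1}, 2\beta_{l-1} - 1)$ for $1 \le l \le k$. Substituting this into the previous equality and then invoking Theorem~\ref{thm:main_thm_fcgr} with $w = a_1\cdots a_k$ collapses the point count into an occurrence count, giving $\mathrm{card}\!\left(CGR(s)\cap c(a_1\cdots a_k)\right) = occ(s, a_1\cdots a_k)$. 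After this chain, the quantity equals $occ(s, a_1\cdots a_k)$, so all that remains is to verify that $a_1\cdots a_k$ is precisely the $k$-mer $w_\tau = \mathrm{idx}^{-1}(\tau)$.

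The final step is the index computation, which is where the only genuine care is needed. From the definition of the index function, $\mathrm{idx}(a_1\cdots a_k) = \sum_{t=1}^{k} 4^{k-t}\,\xi(a_t)$, and substituting $a_t = label^{-1}(1 - 2\alpha_{t-1}, 2\beta_{t-1} - 1)$ yields $\mathrm{idx}(a_1\cdots a_k) = \sum_{t=1}^{k} 4^{k-t}\,\xi\!\left(label^{-1}(1 - 2\alpha_{t-1}, 2\beta_{t-1} - 1)\right)$. Re-indexing the sum defining $\tau$ so that its running index agrees with the string-position index $t$ of $\mathrm{idx}$ (the substitution $t = l+1$, under which $4^{k-l-1} = 4^{k-t}$) shows that $\tau = \mathrm{idx}(a_1\cdots a_k)$; by the bijectivity of $\mathrm{idx}$ on $\Sigma^k$ this forces $a_1\cdots a_k = w_\tau$, and combining the three equalities gives $FCGR_k(s)(i,j) = occ(s, w_\tau)$. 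The main obstacle, such as it is, is purely notational: keeping the two competing conventions aligned — the $1$-based string position $t$ in $\mathrm{idx}$ versus the $0$-based binary-digit index $l$ in Lemma~\ref{lem:fcgr_idx2kmer} — and confirming that the shift $t = l+1$ reconciles the summand and the exponent $4^{k-l-1}$ exactly, so that no off-by-one slip creeps into the identification of $\tau$.
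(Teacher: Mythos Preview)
Your proposal is correct and follows essentially the same route as the paper: identify the $k$-mer associated with $cell_k(i,j)$ via Lemma~\ref{lem:fcgr_idx2kmer}, then plug that $k$-mer into the index formula~\eqref{eq:kmer2idx}. You are in fact more explicit than the paper's proof, which omits the appeal to Theorem~\ref{thm:main_thm_fcgr} (and Definition~\ref{def:grid_def}) that converts the grid-cell point count into an occurrence count; your inclusion of that step is a genuine improvement in rigor.
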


\begin{conditionalproof}
First, note the correspondence between each $k$-mer and its position in the $ FCGR_k$ matrix. In Lemma~\ref{lem:fcgr_idx2kmer}, we have found the correspondence between the indices in the matrix (\eqref{eq:fcgr_idx_2_matrix}) so that each position \((i,j)\) defines a \(k\)-mer $w=a_1a_2\cdots a_k$. Replacing~\eqref{eq:fcgr_idx_2_matrix} into~\eqref{eq:kmer2idx}, yields the corresponding index $\tau$ of $w$ in the lexicographic order.
\end{conditionalproof}

{Proposition~\ref{prop:fgr_idf2lexi} confirms the intuition that the $k$-mer frequency vector $F_k(s)$ corresponds to a vectorization of the $FCGR_k(s)$ matrix. Furthermore,  we now have a connection 
between each word $w$, its corresponding point in the CGR ($p_{last}(w)$), the indices $(i, j)$ in the FCGR matrix {(Definition~\ref{def:grid_def})  and its position  in the $k$-mer frequency  vector given by the lexicographic order, $\text{idx}(w)$. 

We will now proceed to define a probabilistic framework for comparing the $k$-mer composition of sequences of different lengths,  with the ultimate purpose of generating synthetic DNA sequences from given $k$-mer frequency vectors. 

\begin{definition}[$k$-mer distribution]
For a sequence $s$ of length $n$, the empirical $k$-mer distribution  of $s$ is defined as the probability vector 
$$\vtheta_s = (\theta_0^s, \theta_1^s, \dots, \theta_{4^k-1}^s) =\frac{1}{n-k+1} \cdot F_k(s).$$
\end{definition}

In other words, $\theta_i^s = \frac{occ(s, w_i)}{n - k + 1}$ represents the normalized frequency of the $k$-mer  $w_i$ in $s$, where $w_i$ denotes the $i$-th $k$-mer in the lexicographic order.  Given $\vtheta \in \Delta^{4^k-1}$ and a $k$-mer $w$, we denote the component $\theta_{\text{idx}(w)}$ by $\theta_w$. 

Note that any $\vtheta_s$  is a point in the \(4^k\)-dimensional probability simplex \(\Delta^{4^k-1}\) and represents the empirical distribution of $k$-mers in $s$. However, not all points $\vtheta \in \Delta^{4^k-1}$ correspond to an empirical $k$-mer distribution; in particular, the points with at least one irrational coordinate do not correspond to an empirical $k$-mer distribution. 

Besides the normalization constraint, if $\vtheta_s \in \Delta^{4^k-1}$ is a valid $k$-mer distribution, every internal $(k-1)$-mer is shared by two overlapping $k$-mers. For example, every occurrence of a $k$-mer $w=a_1a_2\dots a_k$ must be followed sequentially by a $k$-mer $w'=a_2a_3\dots a_{k+1}$.  This can be formalized by the requirement  $\textsc{suffix}(w)=\textsc{prefix}(w')$, where $\textsc{prefix} (w)$ (respectively $\textsc{suffix}(w)$) is defined as being the word consisting of the first (respectively last)  $|w|-1$ letters in $w$. This condition imposes a marginalization constraint on all the $k$-mers, e.g. $$\theta_{AXY} + \theta_{CXY} + \theta_{GXY} + \theta_{TXY} = \theta_{XYA} + \theta_{XYC} + \theta_{XYG} + \theta_{XYT}  =\theta_{XY}.$$
Thus, for each $(k-1)$-mer $v \in \Sigma^{k-1}$ (except edge cases in finite sequences), the sum of the frequencies of $k$-mers with $v$ as a prefix must equal the sum of the frequencies of $k$-mers with $v$ as a suffix. When considered across all $k$-mers, this condition imposes a set of additional marginal constraints that imply that $\vtheta_s$ does not occupy the full $4^k$-dimensional simplex but rather a lower-dimensional sub-polytope \cite{Nissen_taxonomic_clustering_VAMB, Kislyuk2009}. Formally, for every \( v \in \Sigma^{k-1} \),
\begin{equation}\label{eq:marginal_constraint}
\sum_{a \in \Sigma} \theta_{v a}^{s} = \sum_{a \in \Sigma} \theta_{av}^{s} = \theta_v^s .
\end{equation}
Note that the constraints in~\eqref{eq:marginal_constraint} represent necessary conditions for a point  $\vtheta \in \Delta^{4^k-1}$ to be the $k$-mer distribution of a valid DNA sequence. We will later show that these constraints are a sufficient condition for sequence reconstruction. 

Given a $k$-mer distribution $\vtheta$, our objective is to reconstruct a sequence $s$, whose empirical $k$-mer frequency vector approximates the target distribution $\vtheta$ as closely as possible. Then, we can compute the CGR according to Definition~\ref{def:CGR} to visualize the patterns corresponding to specific $k$-mer over- or under-representation. A similar problem has been studied in the field of genome assembly \cite{MILLER2010315}, where De~Bruijn‑graph–based methods have proven effective for assembling genomes from fragmented reads \cite{Pevzner_Eulerian_2001, SPADES_Pevzner}. We will show that a suitable traversal of an appropriately constructed De~Bruijn graph \iftoggle{arxiv}{(See Appendix~\ref{sec:debruijn})}{}, combined with corrections for rounding error and violation of \eqref{eq:marginal_constraint}, produces a sequence that satisfies an approximation criterion.  

\begin{definition}
A directed multigraph $M$ is a pair $(V, E)$, where $V$ is a set of nodes, and $E$ is a multiset of ordered pairs from $V \times V$, called edges. Loops and parallel edges are permitted. 
\end{definition}

\begin{definition}
\label{def:db_multigraph}
Let $k \geq 2$, $n > k$, and $s \in \Sigma^n$. The De Bruijn multigraph, denoted $\textsc{De Bruijn}^{+}_k(s)$, is  the directed multigraph $\bigl(V_{k-1}(s),\,E_k(s)\bigr)$, 
where
\[
V_{k-1}(s)
\;=\;
\bigl\{\,v\in\Sigma^{k-1}\;\big|\;v\text{ appears as a substring of }s \bigr\},
\]
\begin{equation*}
\label{def:DeBruijn+}
\begin{split}
E_k(s) = \{ (u,v) \in V_{k-1}(s)^2 \mid \text{there exists a } k\text{-mer } w \in \Sigma^k \text{ that appears in } s \\
\text{ with } \textsc{Prefix}(w)=u \text{ and } \textsc{Suffix}(w)=v \}.
\end{split}
\end{equation*}
In particular, for each \((u,v)\in V_{k-1}(s)^2\), the multiplicity of the edge \((u,v)\) in \(E_k(s)\) is exactly $occ(s,w)$. 
\end{definition}

Based on Definition~\ref{def:DeBruijn+}, $\textsc{De Bruijn}^{+}_k(s)$ is completely determined by the collection of $k$-mers of $s$, so it can be constructed directly from a $k$-mer  frequency vector without explicit knowledge of the sequence $s$. With this framework in place, assume a target distribution \(\vtheta_s\). It is then possible to rescale \(\vtheta_s\) by \(n-k+1\) to obtain an array of pseudo counts $\vc_s$ such that each component is defined by \( c_w^{s} = (n-k+1) \cdot \theta_w^{s} \). Consequently, the problem of string reconstruction reduces to finding a path in the De Bruijn multigraph (constructed from these pseudo-counts) that traverses every edge exactly once, i.e.,  an Eulerian path. Euler's theorem provides the necessary and sufficient conditions for the existence of such a path.

\begin{theorem}{(\textit{Euler})}
Let $G = (V, E)$ be a directed graph, and for any vertex $v \in V$ let $\textsc{in}(v)$ and $\textsc{out}(v)$ denote its indegree and outdegree, respectively. Then $G$ contains an \textit{Eulerian cycle} if and only if:
(1)  $G$ is strongly connected, i.e., there exists a directed path between any pair of vertices in $G$, and (2) for every vertex \( v \in V \), it holds that $\textsc{IN}(v) = \textsc{OUT}(v)$.

\end{theorem}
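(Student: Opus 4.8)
The plan is to prove the two implications separately, with necessity being a short bookkeeping observation and sufficiency carrying the real content through a constructive trail-splicing argument in the spirit of Hierholzer. Throughout I would adopt the standard convention that $G$ has no isolated vertices (every vertex is incident to at least one edge), which is harmless in our setting since the vertices of $\textsc{De Bruijn}^{+}_k(s)$ are precisely the $(k-1)$-mers occurring in $s$.

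For necessity, suppose $G$ admits an Eulerian cycle $C$. Fix a vertex $v$ and examine each passage of $C$ through $v$: every intermediate visit consumes exactly one edge entering $v$ and one edge leaving $v$, and at the designated start/end vertex the initial departure pairs with the final arrival because $C$ is closed. Since $C$ uses every edge exactly once, the edges into $v$ and the edges out of $v$ are placed in bijection, giving $\textsc{in}(v)=\textsc{out}(v)$. Strong connectivity then follows because $C$ is a closed walk traversing all edges, hence visiting every (non-isolated) vertex; for any ordered pair $u,w$ one simply reads off the directed sub-walk of $C$ running from an occurrence of $u$ to a later occurrence of $w$.

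For sufficiency I would first isolate the key local lemma: under the balance condition, a trail grown greedily (always following an unused out-edge when one exists) from a vertex $v_0$ can only terminate back at $v_0$. Indeed, if the trail arrives at some $u\ne v_0$ having so far used $a$ in-edges and $b$ out-edges at $u$, the arrival forces $a=b+1\le\textsc{in}(u)=\textsc{out}(u)$, so an unused out-edge remains and the trail continues; only at $v_0$ can it get stuck, yielding a closed trail $T_1$. If $T_1$ omits some edge, delete its edges from $G$; deleting a closed trail subtracts equally from the in- and out-degree of each vertex, so the balance condition persists on the remaining graph $G'$. By strong connectivity of $G$, some vertex $x$ on $T_1$ is incident to an edge of $G'$ (otherwise the edges of $G'$ would be unreachable from those of $T_1$). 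Growing a fresh greedy trail from $x$ in $G'$ produces, by the same lemma, a closed trail $T_2$ through $x$, which splices into $T_1$ at $x$ to form a longer closed trail. Each splice strictly increases the number of used edges, so after finitely many rounds every edge is consumed and the resulting closed trail is an Eulerian cycle.

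The main obstacle is the sufficiency direction, and specifically the interface between the two hypotheses. The balance condition by itself only guarantees that each greedy trail closes into a cycle; it is strong connectivity that guarantees the successive cycles can actually be attached to one another, i.e.\ that a shared splicing vertex $x$ exists as long as unused edges remain. Care is also needed to state the degree bookkeeping precisely at the start vertex and to verify rigorously that deleting a closed trail preserves the balance condition, since these are exactly the points where an informal argument tends to conceal a gap.
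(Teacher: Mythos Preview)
Your proof is correct and follows the standard Hierholzer trail-splicing argument; both the necessity bookkeeping and the sufficiency construction are handled carefully, including the points you rightly flag as the usual soft spots (degree accounting at the start vertex, preservation of balance after deleting a closed trail, and the role of strong connectivity in guaranteeing a splice point).

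There is, however, nothing to compare against: the paper does not prove this theorem. It is stated as the classical result of Euler and invoked as a black box to justify the existence of an Eulerian cycle in the balanced, strongly connected de~Bruijn multigraph constructed in Theorem~\ref{thm:reconstruction}. So your write-up is not a different route to the paper's proof; it is simply a self-contained proof of a result the paper takes for granted.
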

Thus, for a directed graph $G$ to contain an Eulerian cycle, it must be strongly connected and balanced at every vertex. In our setting, this implies that the hypothetical counts $c_w$ must satisfy the marginal constraint on the de Bruijn graph to yield a valid DNA sequence $s$. In particular, if we associate each $(k-1)$-mer $v$ with the $k$-mers that have \( v \) as a prefix or suffix, then the following condition in~\eqref{eq:marginal_constraint} must hold for every vertex \( v \), except for the unique vertices corresponding to the start and end of an Eulerian path. We now formalize these conditions in the following theorem:

\begin{theorem}\label{thm:reconstruction}
Let  $1 < k < n\in \mathbb{Z}$, $\epsilon \in [0,1]$, and let $\vtheta$ be a point in \( \Delta^{4^k-1} \) satisfying the marginal constraint in~\eqref{eq:marginal_constraint}
for every vertex $v$. If $n > \frac{2(k-1)\cdot 4^{k}}{\epsilon} + k - 1,$ then there exists an algorithm that constructs a sequence $s$ whose empirical $k$-mer distribution $\hat{\vtheta}_s = \frac{F_k(s)}{\|F_k(s)\|_1}$ satisfies $\|\hat{\vtheta}_s - \vtheta\|_1 \leq \epsilon.$
\end{theorem}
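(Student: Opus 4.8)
The plan is to realize $\vtheta$ as the normalized edge-multiplicity vector of a balanced, strongly connected De Bruijn multigraph on $(k-1)$-mers and then read a sequence off an Eulerian path. Write $M = n-k+1$; since the hypothesis forces $M>0$ and is vacuous when $\epsilon=0$, assume $\epsilon\in(0,1]$. First I would convert probabilities to real pseudo-counts $c_w=M\,\theta_w$. Because $\vtheta\in\Delta^{4^k-1}$ these are nonnegative with $\sum_w c_w=M$, and the marginal constraint~\eqref{eq:marginal_constraint} says exactly that $(c_w)$ is \emph{conserved}: at every $(k-1)$-mer vertex $v$ the weight leaving $v$ equals the weight entering it. The only obstruction to reading these as edge multiplicities is that they are not integers.

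Second, I would round to integers $\tilde c_w$ by largest-remainder rounding, which keeps $\sum_w\tilde c_w=M$ while ensuring $|\tilde c_w-c_w|<1$ for every $w$, hence $\|\tilde{\vc}-\vc\|_1<4^k$. Rounding breaks conservation: at a vertex $v$ the imbalance $\textsc{out}(v)-\textsc{in}(v)=\sum_{a\in\Sigma}(\tilde c_{va}-\tilde c_{av})$ is a sum of eight rounding perturbations of size $<1$ around a value that was exactly $0$, so its absolute value is at most $4$. Summing over the at most $4^{k-1}$ vertices, the total surplus $\sum_v(\textsc{out}(v)-\textsc{in}(v))^{+}$ to be repaired is at most $4^k/2$.

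Third --- the technical heart --- I would restore balance and connectivity with correction paths. Matching each surplus-out vertex to a surplus-in vertex and joining them by a directed path in the De Bruijn graph on $(k-1)$-mers, which has diameter at most $k-1$, adds at most $k-1$ edges per repaired unit and perturbs the imbalance only at the path's endpoints; routing the at most $4^k/2$ surplus units, and steering these paths so that they also visit every connected component, yields at most $(k-1)4^k/2$ inserted edges and makes the support strongly connected. Let $F_k$ be the resulting integer edge-multiplicity vector. The multigraph is now balanced up to a single source/sink pair --- precisely the slack an Eulerian path allows --- and strongly connected, so by Euler's theorem it has an Eulerian path; concatenating the labels along it produces a sequence $s$ with $F_k(s)=F_k$, since the $k$-mers of $s$ are exactly the traversed edges.

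Finally I would control the error. Writing $N=\|F_k(s)\|_1=M+R$ with $0\le R\le(k-1)4^k/2$, so $N\ge M$, and $E:=\|F_k(s)-\vc\|_1\le\|\tilde{\vc}-\vc\|_1+R$, a triangle-inequality split of $\hat{\vtheta}_s-\vtheta=F_k(s)/N-\vc/M$ through $\vc/N$ (using $|N-M|\le E$) gives $\|\hat{\vtheta}_s-\vtheta\|_1\le 2E/N\le 2E/M$. The hypothesis $n>2(k-1)4^k/\epsilon+k-1$, i.e.\ $M>2(k-1)4^k/\epsilon$, is calibrated exactly so that this quantity is at most $\epsilon$ once $E$ is bounded by the rounding term $4^k$ plus the correction term $(k-1)4^k/2$. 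I expect the main obstacle to be this third step together with the bookkeeping in the last: inserting correction paths that repair \emph{all} vertex imbalances while staying inside the $O((k-1)4^k)$ edge budget \emph{and} simultaneously guaranteeing strong connectivity, and then verifying that the rounding, the added edges, and the change of normalization from $M$ to $N$ do not jointly push the $L^1$ error past $\epsilon$.
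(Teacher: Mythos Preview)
Your plan is essentially the paper's: scale $\vtheta$ to pseudo-counts, round, build the De Bruijn multigraph on $(k-1)$-mers, repair vertex imbalances with correction paths of length $\le k-1$, ensure strong connectivity, read a sequence off an Eulerian path, and bound $\|\hat\vtheta_s-\vtheta\|_1$ by (twice) the ratio of added edges to total edges. One small inconsistency to fix: largest-remainder rounding only guarantees per-entry error $<1$, not $\le\tfrac12$, so your per-vertex imbalance bound of $4$ (hence the surplus bound $4^k/2$) really requires nearest-integer rounding, which is what the paper uses; with your rounding the constants slip and the final inequality does not literally close at $k=2$, but the structure of the argument is correct and matches the paper.
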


\begin{conditionalproof}
Let $\vc$ be the rounded count vector obtained by rescaling $\vtheta$ by $n-k+1$ and let $G = (V,E)$ be the de Bruijn graph constructed from \(\vc\). For each vertex \( v \in V \) (corresponding to a \((k-1)\)-mer), define the flow imbalance
\[
\delta(v) = \sum_{a \in \Sigma} c_{va} - \sum_{a \in \Sigma} c_{av}.
\]
Denote by \( V^+ = \{v \in V : \delta(v) > 0\} \) the set of vertices with excess outgoing edges, and by \( V^- = \{v \in V : \delta(v) < 0\} \) the set of vertices with excess incoming edges. By conservation of the total edge count, \(\sum_{v \in V} \delta(v) = 0\); hence, the total excess outdegree equals the total excess indegree.

To balance the graph and satisfy Euler’s condition, we add artificial edges. Specifically, for every $v\in V^+$ and every $w\in V^-$ there is a directed path $P_{v\to w}$
of length at most $k-1$ edges (and hence $k-2$ intermediate vertices $u_1,\dots,u_{k-2}$). 
This can be seen from the De Bruijn multigraph associated with the sequence $vw \in \Sigma^{2k-2}$, it starts with vertex $v$ and ends with vertex $w$. To correct the imbalance of $G$ at $v$ and $w$, we add one edge connecting each node in $P_{v\to w}$, which increments the count of each edge in $P_{v\to w}$ by one.  Doing this $\delta(v)$ times for each $v\in V^+$ restores in‐ and out‐degree balance at every vertex (since intermediate nodes gain one in‐edge and one out‐edge, leaving their imbalance unchanged). Denote by $ G' = (V, E')$ the resulting balanced graph. By construction, every vertex in $G'$ now has equal indegree and outdegree. If $G'$ is strongly connected, Euler’s theorem guarantees the existence of an Eulerian cycle. If not, at most \(4^{k-1}\) additional artificial edges are needed to connect the components. The sequence $s'$ is reconstructed by traversing the Eulerian path in G'.

\medskip
\noindent
\textbf{Error bound:} Define \(\delta_{max} = \max_{v \in V} |\delta(v)|\). Since the maximum rounding error for any count is \(\frac{1}{2}\) and each \((k-1)\)-mer has at most 4 incoming and 4 outgoing $k$-mers, we have \(\delta_{max} \leq 4\). Let \( n_{art} \) denote the total number of artificial edges added; then
\[
n_{art} = \sum_{v \in V^+} (k-1) \cdot \delta(v) \leq (k-1) \cdot \delta_{max} \cdot |V^+| \leq (k-1)\cdot 4^{k}.
\]
Denote \(\vc = (c_1,\ldots,c_{4^k})\) the original $k$-mer counts and by \(\vc' = (c'_1,\ldots,c'_{4^k})\) the $k$-mer counts in the reconstructed sequence \( s' \). Each artificial edge increases the corresponding $k$-mer count by 1, consequently, the error in the $i$-th component of the empirical distribution is
\[
\left|\hat{\theta}_{i}^{s'} - \theta_i\right| = \left|\frac{c'_i}{|E| + n_{art}} - \frac{c_i}{|E|}\right|,
\]

which can be rewritten as:
   $$|\hat{\theta}_{i}^{s'} - \theta_i| = \left|\frac{c'_i \cdot |E| - c_i \cdot (|E| + n_{art})}{|E| \cdot (|E| + n_{art})}\right|$$
   $$= \left|\frac{c'_i \cdot |E| - c_i \cdot |E| - c_i \cdot n_{art}}{|E| \cdot (|E| + n_{art})}\right|$$
   $$= \left|\frac{|E| \cdot (c'_i - c_i) - c_i \cdot n_{art}}{|E| \cdot (|E| + n_{art})}\right|.$$


 When we sum this over all $k$-mers, using the facts that $\sum_{i=1}^{4^k} |c_i'-c_i| =n_{art}$ and $\sum_{i=1}^{4^k} c_i = |E|$,  we get:
   $$\sum_{i=1}^{4^k} |\hat{\theta}_{i}^{s'} - \theta_i| \leq \frac{2n_{art}}{|E| + n_{art}},$$

\[
\|\hat{\vtheta}_{s'} - \vtheta\|_1 \leq \frac{2n_{art}}{|E| + n_{art}}.
\]
Since \( |E| = n - k + 1 \) and $n_{art} \leq (k-1)\cdot 4^{k}$, it follows that
\[
\|\hat{\vtheta}_{s'} - \vtheta\|_1 \leq \frac{2\cdot(k-1)\cdot 4^{k}}{(n - k + 1) + (k-1) \cdot 4^{k}}.
\]

Imposing the condition $ n > \frac{2\cdot (k-1) \cdot4^{k}}{\epsilon} + k - 1,$ ensures that $\|\hat{\vtheta}_{s'} - \vtheta\|_1 \leq \epsilon.$
\end{conditionalproof}

We validate the theoretical construction through numerical experiments, using two different approaches. First, a fragment of 100,000 base pairs (bp) is sampled from the reference genome of 100 different species, each representing a distinct kingdom of life. For each sampled fragment, empirical $k$-mer distributions are computed for $k\in\{2,\dots,6\}$, and the average reconstruction error, measured by the total variation of $ k$-mer distances, is determined for each kingdom. These results are summarized in Figure~\ref{fig:Template-based_reconstruction}. Second, we sample arbitrary distributions from the relevant sub-polytope of $\Delta^{4^k-1}$. To this end, we employ a variation of the hit–and–run sampling procedure \cite{hit_and_run}\iftoggle{arxiv}{(See Appendix \ref{appendix:hit-and-run})}{}, which operates within the null-space defined by the linear constraints in~\eqref{eq:marginal_constraint}. Here, the sequence length is fixed at $n_{\min}=2\cdot4^k/{\epsilon}+k-1$  for $\epsilon=0.01$, and sampling is performed for $k \in \{2, \dots, 6\}$. Figure~\ref{fig:de-novo_reconstruction} illustrates the reconstruction error for different values of $k$ over points sampled from the desired sub-polytope. 

\begin{figure}[!t]
    \centering
    \begin{subfigure}[b]{0.48\linewidth}
        \centering
        \includegraphics[width=\linewidth]{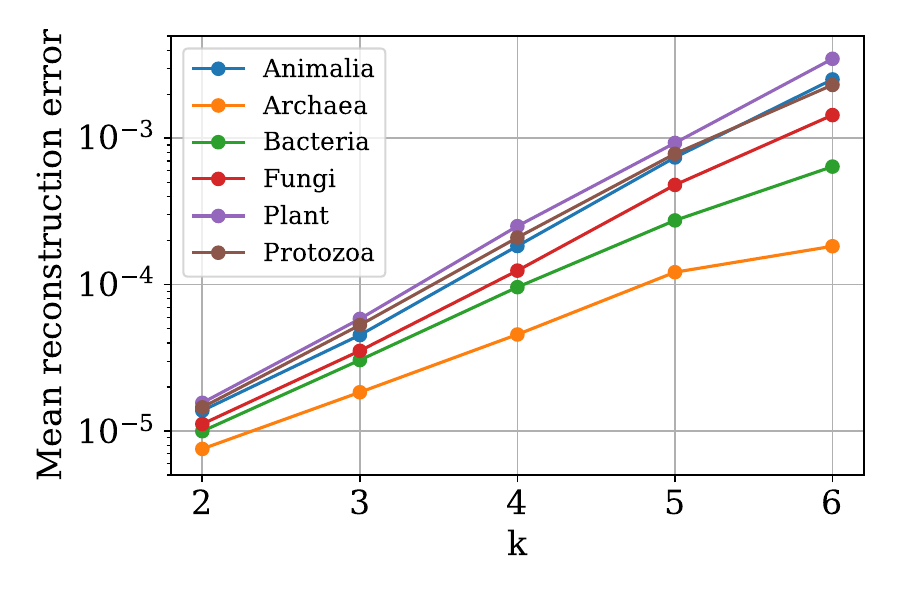}
        \caption{}
        \label{fig:Template-based_reconstruction}
    \end{subfigure}
    \hfill
    \begin{subfigure}[b]{0.48\linewidth}
        \centering
        \includegraphics[width=\linewidth]{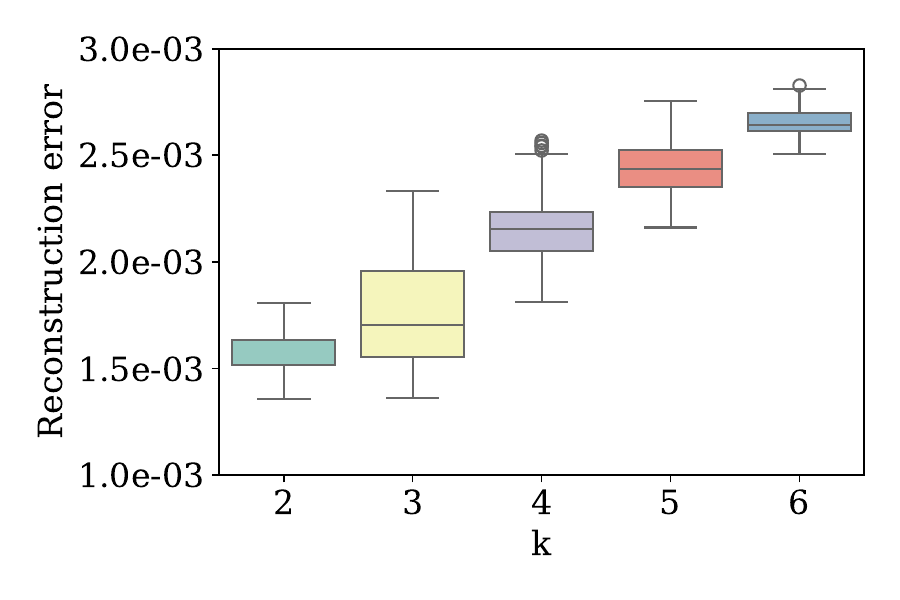}
        \caption{}
        \label{fig:de-novo_reconstruction}
    \end{subfigure}
    \caption{{\bf (a) Reconstruction of sequences from $k$-mer distributions of real DNA sequences.} The reconstruction is computed across different values of $k$, and sequences spanning species from each of the six kingdoms of life. For each kingdom, 100 genomes were selected, and a single 100,000-long DNA fragment was randomly selected from each genome. The mean error was computed for each pair ($k$, kingdom). {\bf (b) Reconstruction of sequence from arbitrary $k$-mer distributions that satisfy the marginal constraints}. Each point is sampled from the standard probability simplex $\Delta^{4^k-1}$ subject to the constraints in~\eqref{eq:marginal_constraint} for different values of $k \in \{2, \dots, 6\}$. The sequence length $n_{\min}$, calculated for $\epsilon=0.01$, ranges from 3,200 for $k=2$, to 819,200 for $k=6$. The reconstruction error is less than $\epsilon$ for all the values of $k$.}
    \label{fig:debruijn_example}
\vspace{-5mm}
\end{figure}

The results of our experiments indicate that the reconstruction algorithm is capable of recovering sequences whose $k$-mer distributions closely approximate the target distributions, regardless of whether the target is sampled from a reference DNA sequence or sampled from the simplex. A computational tool  implementing the algorithm  is available at~ \href{https://tinyurl.com/kmer2cgr}{\url{https://tinyurl.com/kmer2cgr}} (see \autoref{fig:web-app}).

\begin{figure}[!h]
    \centering
    \begin{subfigure}[b]{0.48\linewidth}
        \centering
        \includegraphics[width=0.5\linewidth]{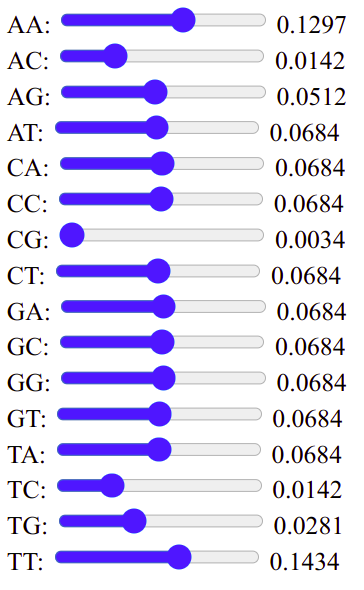}
        \caption{}
        \label{fig:settings_web_app}
    \end{subfigure}
    \begin{subfigure}[b]{0.48\linewidth}
        \centering
        \includegraphics[width=0.8\linewidth]{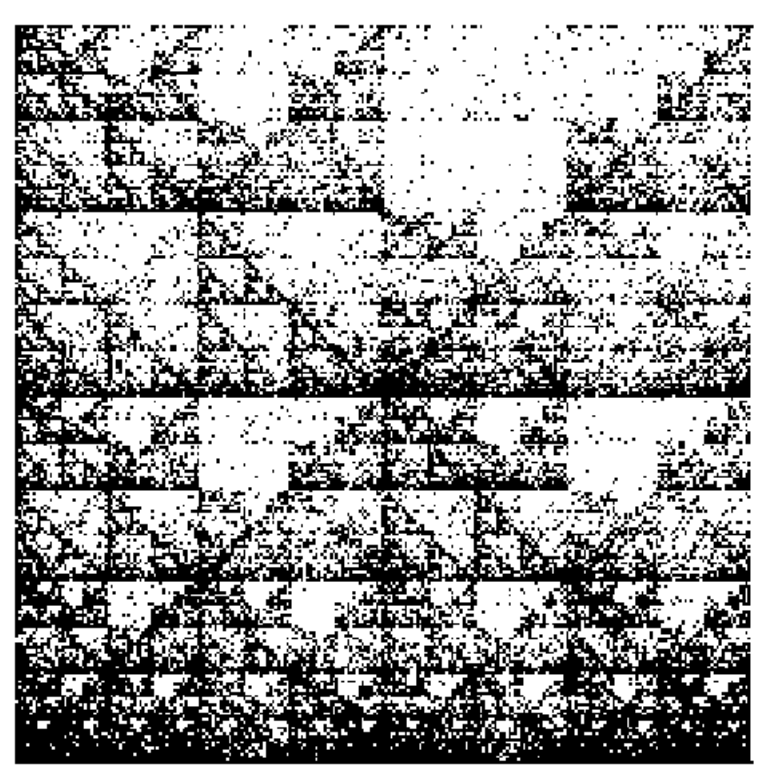}
        \caption{}
        \label{fig:web_app_result}
    \end{subfigure}
    \caption{Snapshot of the functionality of the computational tool: \textbf{(a)} Input: Dinucleotide  ($k=2$) distribution selected by a user via interactively adjusting 16 different sliders, each corresponding to one dinucleotide. \textbf{(b)} Output: CGR of a reconstructed  DNA sequence whose $k$-mer distribution closely approximates the $k$-mer distribution in (a).}
    \label{fig:web-app}
\vspace{-5mm}
\end{figure}





\section{Conclusions and Future Work}


This work formalizes the connections between CGR, $FCGR_k$, and $k$-mer frequency vectors as genomic signatures. We note, however, that each genomic signature has distinct features and that each has proven effective in specific applications, depending on various factors such as sequence length or sequence dataset size.  The aforementioned demonstrated connections between these types of genomic signatures could potentially aid in the determination of their suitability for particular practical applications. We also present an algorithm that generates synthetic DNA sequences and their corresponding CGRs from target $k$-mer frequency vectors. This methodology opens a new avenue for synthetic DNA data generation in contrastive machine learning pipelines for taxonomic classification applications, whereby data augmentations (training samples artificially generated  from existing data), are used to learn meaningful representations of the DNA sequences. Future work could explore the applicability of $k$-mer sampling for data augmentation and the effect of less stringent constraints, e.g., information-theoretic, on reconstruction guarantees.

\begin{credits}
\subsubsection{\ackname} The authors acknowledge the support of the Natural Sciences and Engineering Research Council of Canada (NSERC), [RGPIN-2023-03663] to L.K., and thank Niousha Sadjadi for generating Figure~\ref{fig:cell_structure}.\\

\end{credits}

%
%
%
\bibliographystyle{splncs04}
\bibliography{references}
%





\appendix
\renewcommand{\thesection}{A.\arabic{section}}
\section*{Appendix A: Technical Appendix}
\addcontentsline{toc}{section}{Appendix A: Technical Appendix}
\toggletrue{includeproofs}
\section{Proofs in Section 2}~\label{appendix:proofs_sec2}
\setcounter{theorem}{0}

\setcounter{corollary}{0}

\setcounter{theorem}{1}

\section{Proofs in Section 3}~\label{appendix:proofs_sec3}
\setcounter{proposition}{0}

\setcounter{lemma}{0}

\setcounter{proposition}{1}

\setcounter{lemma}{1}

\setcounter{proposition}{2}

\setcounter{theorem}{2}

\setcounter{corollary}{1}

\section{Proofs in Section 4}~\label{appendix:proofs_sec4}
\setcounter{proposition}{3}

\setcounter{lemma}{2}

\setcounter{theorem}{4}

\newpage
\renewcommand{\thesection}{B.\arabic{section}}
\section*{Appendix B: Supplementary Information}
\addcontentsline{toc}{section}{Appendix B: General Appendix}

\section{Dihedral Group}\label{subsec:dihedral_group}

The {\it dihedral group} \cite{Abstractalgebra} $D_{8}=\left\{e, r, r^{2}, r^{3}, s, s r, s r^{2}, s r^{3}\right\}$ is the group of all symmetries of the square, a group comprising rotations and reflections about lines joining midpoints of opposite sides, respectively reflections about diagonals: 
\begin{itemize}
  \item $\boldsymbol{e}$ is the identity (no change)
  \item $\boldsymbol{r}$  (respectively $r^{2}, r^{3}$)  is the rotation by $\frac{\pi}{2}$ (resp. $\pi$, $\frac{3 \pi}{2}$) counterclockwise.
  \item $\boldsymbol{s}$ is the reflection about  the line bisecting the left and right sides of the square (i.e. the line $y_{1}=\{(x, 0): x \in \mathbb{R}\}$ shown in Figure \ref{fig:cgr-permutations}.)
  \item $\boldsymbol{s r}$ is  the reflection about the diagonal from the bottom-right to the top-left vertex (i.e. the line $y_{2}=\left\{\left(x_{1}, x_{2}\right) \in \mathbb{R}^{2}: x_{1}=-x_{2}\right\}$ shown in Figure \ref{fig:cgr-permutations}.)
  \item $\boldsymbol{s r^{2}}$ is the reflection about the line bisecting the top and bottom sides of the square (i.e. the line $y_{3}=\{(0, y): y \in \mathbb{R}\}$ shown in Figure \ref{fig:cgr-permutations}.)
  \item $\boldsymbol{s r^{3}}$ is the reflection about the diagonal from the top-right to the bottom-left  vertex (i.e. the line $\left.y_{4}=\left\{\left(x_{1}, x_{2}\right) \in \mathbb{R}^{2}: x_{1}=x_{2}\right\}\right)$ shown in Figure \ref{fig:cgr-permutations}.)
\end{itemize}

\noindent
The following representations, using matrices in $M_{2 \times 2}(R)$, will be used to implement these symmetries.\\

$
\begin{aligned}
& \text {  } \boldsymbol{e} \mapsto\left[\begin{array}{ll}
1 & 0 \\
0 & 1
\end{array}\right] 
& \text {  } \boldsymbol{r} \mapsto\left[\begin{array}{cc}
0 & -1 \\
1 & 0
\end{array}\right] 
& \text { } \boldsymbol{r}^{2} \mapsto\left[\begin{array}{cc}
-1 & 0 \\
0 & -1
\end{array}\right]
& \text {  } \boldsymbol{r}^{3} \mapsto\left[\begin{array}{cc}
0 & 1 \\
-1 & 0
\end{array}\right] \\
& \text {  } \boldsymbol{s} \mapsto\left[\begin{array}{cc}
1 & 0 \\
0 & -1
\end{array}\right] 
& \text {  } \boldsymbol{sr} \mapsto\left[\begin{array}{cc}
0 & -1 \\
-1 & 0
\end{array}\right] 
& \text {  } \boldsymbol{sr^2} \mapsto\left[\begin{array}{cc}
-1 & 0 \\
0 & 1
\end{array}\right] 
& \text {  } \boldsymbol{sr^3} \mapsto\left[\begin{array}{cc}
0 & 1 \\
1 & 0
\end{array}\right]
\end{aligned}
$

\begin{figure}[htbp]
  \centering
  \setlength{\tabcolsep}{1pt}\renewcommand{\arraystretch}{0}
  \begin{tabular}{cccc}
    \subcaptionbox{\scriptsize$e\;()$}[.20\linewidth]{\includegraphics[width=\linewidth]{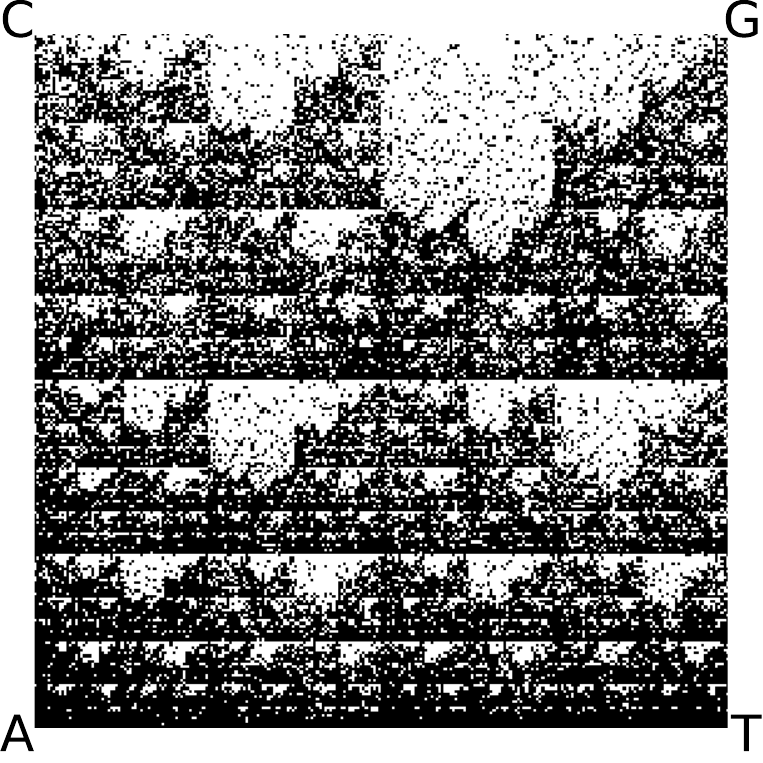}} &
    \subcaptionbox{\scriptsize$r\;(A\,T\,G\,C)$}[.20\linewidth]{\includegraphics[width=\linewidth]{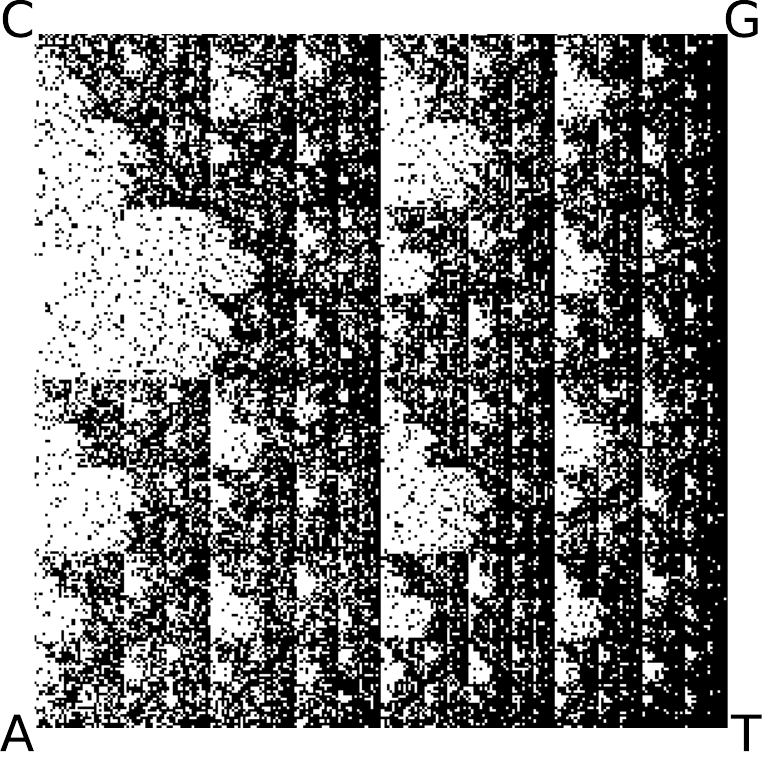}} &
    \subcaptionbox{\scriptsize$r^{2}\;(A\,G)(C\,T)$}[.20\linewidth]{\includegraphics[width=\linewidth]{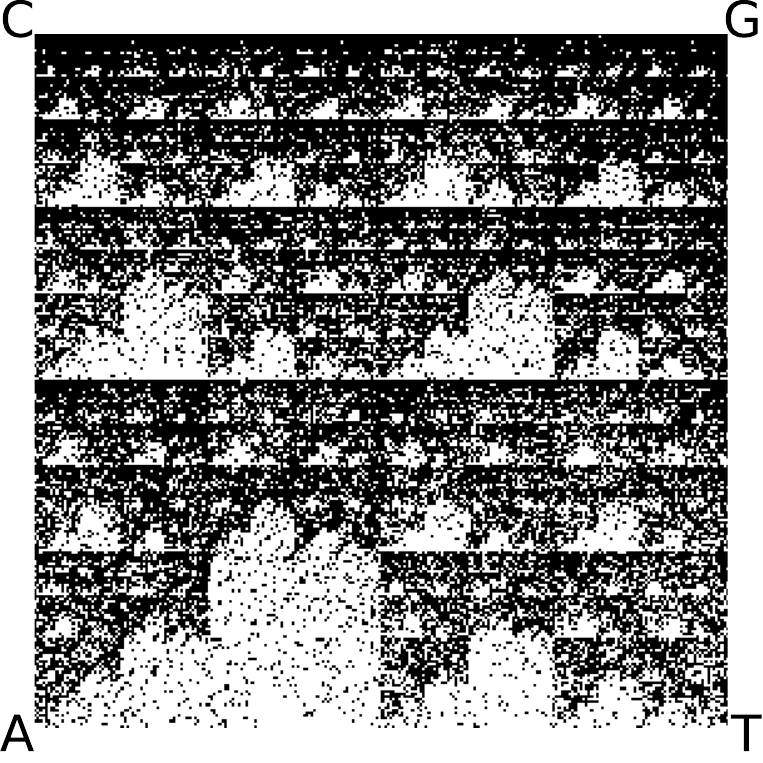}} &
    \subcaptionbox{\scriptsize$r^{3}\;(A\,C\,G\,T)$}[.20\linewidth]{\includegraphics[width=\linewidth]{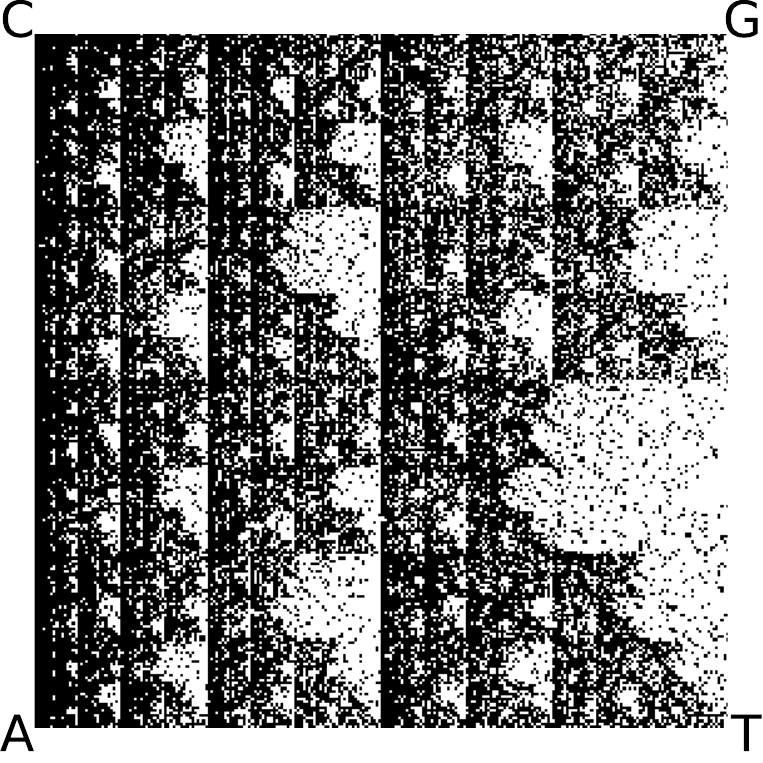}}\\
    \subcaptionbox{\scriptsize$s\;(A\,C)(G\,T)$}[.20\linewidth]{\includegraphics[width=\linewidth]{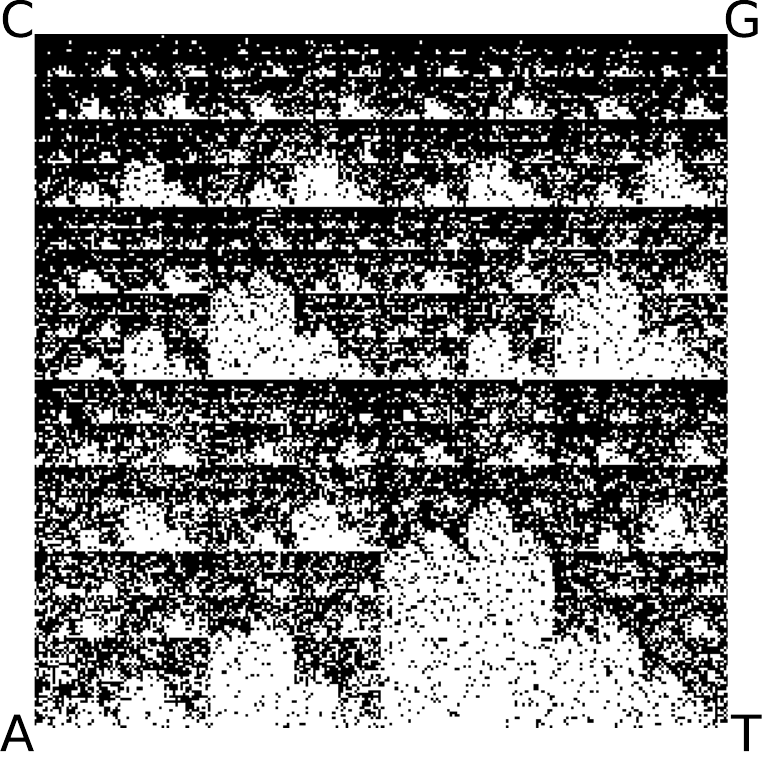}} &
    \subcaptionbox{\scriptsize$sr\;(A\,G)$}[.20\linewidth]{\includegraphics[width=\linewidth]{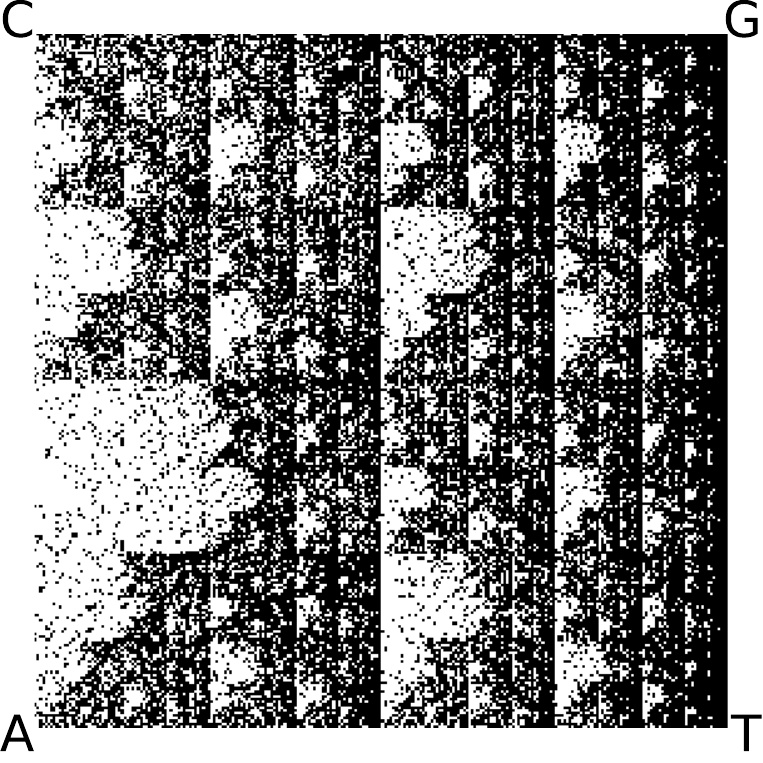}} &
    \subcaptionbox{\scriptsize$sr^{2}\;(A\,T)(C\,G)$}[.20\linewidth]{\includegraphics[width=\linewidth]{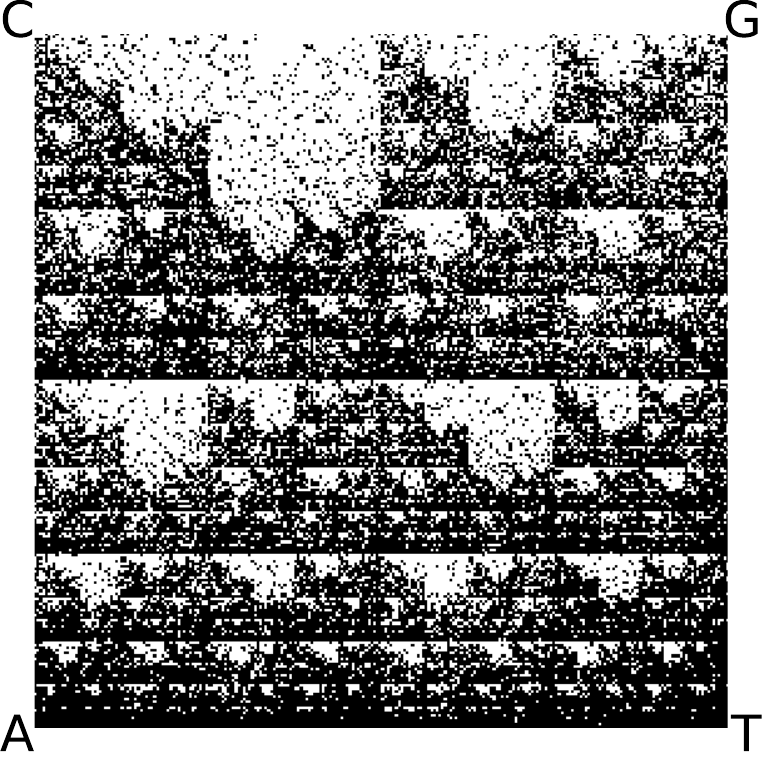}} &
    \subcaptionbox{\scriptsize$sr^{3}\;(C\,T)$}[.20\linewidth]{\includegraphics[width=\linewidth]{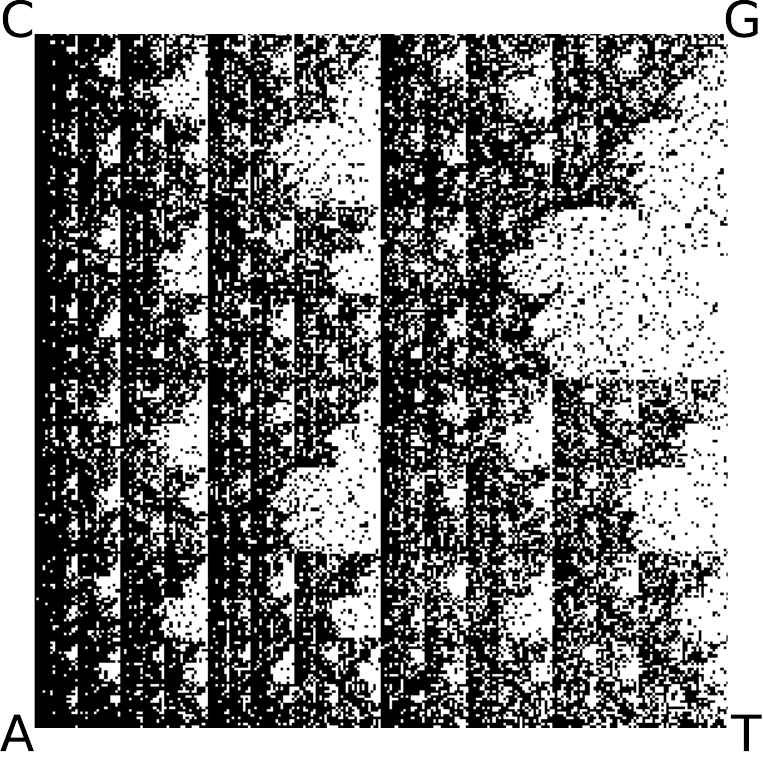}}
  \end{tabular}
  \caption{Visual correspondence between the eight elements of the permutation group $S_{4}$ acting on $\{A,C,G,T\}$ and the dihedral group $D_{8}$.  Panels (a) to (h) show Chaos Game Representations of a 100\,kb randomly selected fragment from human chromosome 4 (GRCh38.p14 primary assembly, RefSeq NC\_000004.12) after applying the indicated nucleotide permutation.}
  \label{fig:cgr-permutations}
\end{figure}

\section{FCGR via Kronecker Products}\label{subsec:fcgr_kron}
Kronecker product $\otimes$ of $A$ of size $m_1 \times n_1$ with $B$ of size $m_2 \times n_2$ is a matrix of size $m_1m_2 \times n_1n_2$ defined as follows:
 \[A \otimes B = \begin{bmatrix}
     A_{11} B & \cdots & A_{1n_1} B \\
     \vdots & \ddots & \vdots \\
      A_{m_{11}} B & \cdots & A_{m_1n_1} B
 \end{bmatrix}.\]

In \cite{hao2000}, FCGR is defined via Kronecker products of the $2 \times 2$ matrix as follows.
 \[ M = \begin{bmatrix}
     C & G \\
     A & T
 \end{bmatrix}, \quad  M^{(k)} = \underbrace{M \otimes M \otimes \cdots \otimes M}_{k \text{ times}}.\]

 \begin{definition}[\cite{hao2000}]\label{def:freq_def}
A $FCGR_k$ of a sequence $s \in \Sigma^n$ with order $k$ with $n \ge k$, is a matrix  $FCGR_k(s) \in \mathbb{R}^{2^k \times 2^k}$ such that its $(i,j)$th entry $FCGR_k(s)(i,j)$ is the frequency of the $k$-mer $M^{(k)}_{ij}$.
\end{definition}

 Note that Definition~\ref{def:freq_def} and Definition~\ref{def:grid_def} in Section~\ref{sec:fcgr} are not the same. As a counterexample, consider the case $k=2$. For a given $s \in \Sigma^n, n \geq k$, the FCGR entry $FCGR_k(s)(1,2)$ will be the frequency of $GC$ according to Definition \ref{def:grid_def} while according to Definition \ref{def:freq_def} from \cite{hao2000},  $FCGR_k(s)(1,2)$ will be the frequency of $CG$ instead. See Figure~\ref{fig:fcgr_comparison} for a visual comparison of FCGRs of order $k=7$ according to these two definitions for the 100,000 bp DNA sequence from human chromosome 4 (GRCh38.p14
primary assembly, RefSeq NC\_000004.12, used in Figure~\ref{fig:human_cgr}).

\begin{figure}[!pt]
    \centering
    \begin{subfigure}[b]{0.25\linewidth}
        \centering
        \includegraphics[width=\linewidth]{Figures/NC_000004.12.pdf}
         \vspace*{-5 pt}
        \caption{}
        \label{fig:human_cgr_comparison}
    \end{subfigure}
    \hspace{10 pt}
    \begin{subfigure}[b]{0.28\linewidth}
        \centering
        \includegraphics[width=\linewidth]{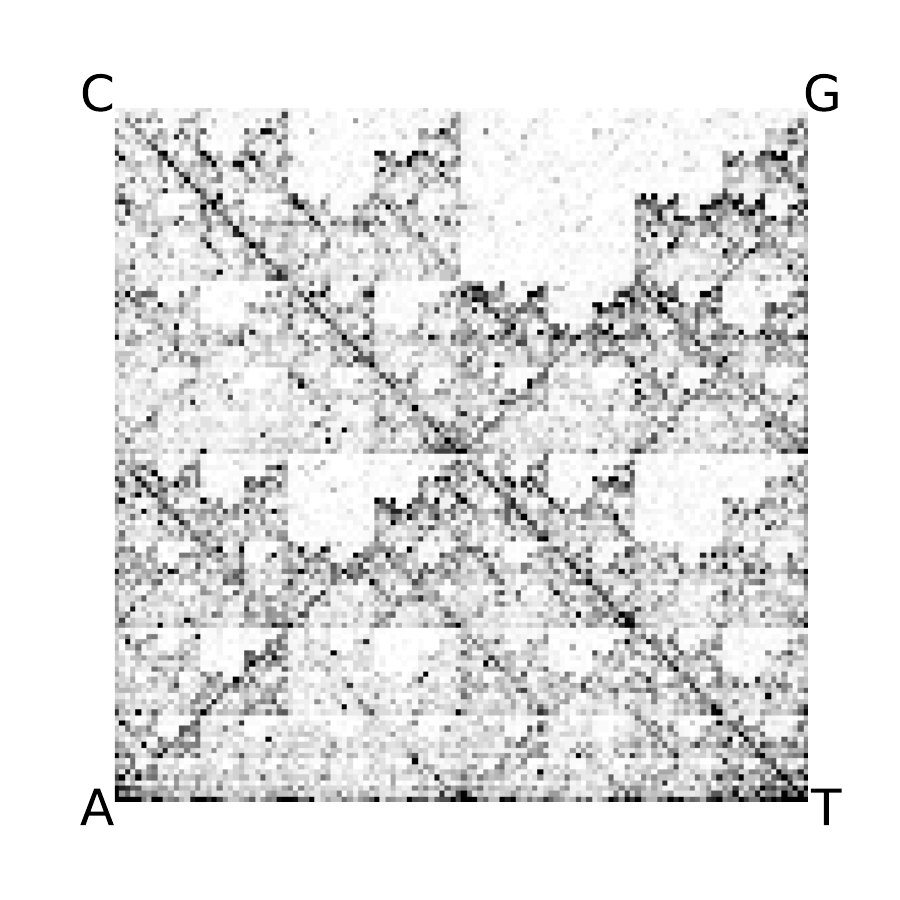}
        
        \caption{}
        \label{fig:fcgr_order_7}
    \end{subfigure}
    \hspace{10 pt}
    \begin{subfigure}[b]{0.28\linewidth}
        \centering
        \includegraphics[width=\linewidth]{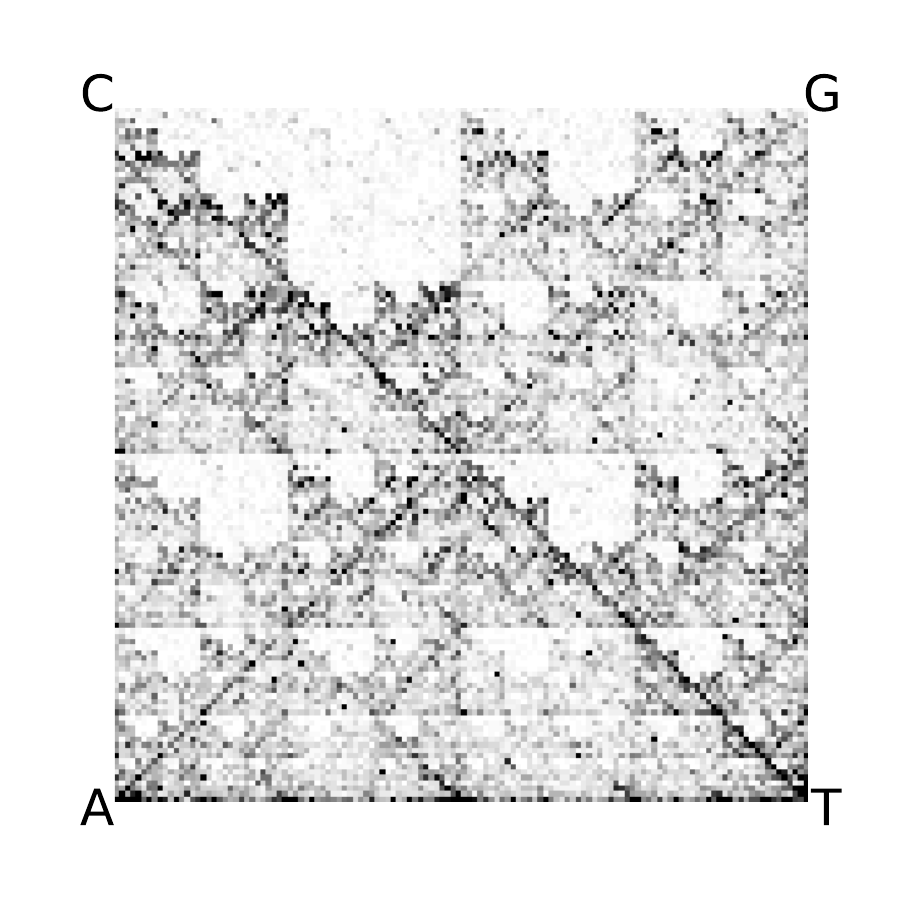}
        \caption{}
        \label{fig:fcgr_order7_hao2000}
    \end{subfigure}
    \caption{A visual comparison of FCGRs of order $k=7$ according to different definitions. The intensity of the pixel represents the frequencies of $7$-mers. Darker pixels mean larger frequencies.
    \textbf{(a)} CGR of a
100,000 bp DNA sequence randomly selected from human chromosome 4 (GRCh38.p14
primary assembly, RefSeq NC\_000004.12, used in Figure~\ref{fig:human_cgr}).
    \textbf{(b)} FCGR of order $7$ of the same sequence as in (a), according to our Definition~\ref{def:grid_def}.
    \textbf{(c)} FCGR of order $7$ of the same sequence as in (a), according to~\cite{hao2000}.
     }
    \label{fig:fcgr_comparison}
\vspace{-5mm}
\end{figure}

\section{De Bruijn graphs and their applications in sequence reconstruction}\label{sec:debruijn}
\begin{definition}[De Bruijn graph]
Let $k \geq 2$, $n > k$ and $s \in \Sigma^n$. The De Bruijn graph $\textsc{De Bruijn}_k(s)$ is the directed graph $(V_{k-1}(s), E_{k}(s))$, defined as follows: 
\begin{itemize}
\item  $V_{k-1}(s) = \{ v \in \Sigma^{k-1} \mid v \text{ appears as a substring of } s \}.$ 
\item $E_k(s) = \{ (u,v) \in V_{k-1}(s)^2 \mid \text{there exists a } k\text{-mer } w \in \Sigma^k \text{ that appears in } s \\
\text{ with } \textsc{Prefix}(w)=u \text{ and } \textsc{Suffix}(w)=v \}.$
\end{itemize}
\end{definition} 

An illustration of the graph \(\textsc{De  Bruijn}_2(\verb+ATCGTATCCA+)\) is presented in Figure~\ref{fig:debruijn_example}a). In this construction, each edge uniquely corresponds to a specific $k$-mer in the original string, so that every path through $\textsc{De Bruijn}_k(s)$ represents a substring of $s$ containing each $k$-mer exactly once. This characteristic, however, is insufficient for our purposes, as in our application, the goal is to reconstruct a sequence that preserves the target $k$-mer distribution; thus, it is necessary to employ a structure that accommodates multiple occurrences of the same $k$-mer. In other words, when a $k$-mer appears more than once in $s$, the standard De Bruijn graph is insufficient for recovering any substring corresponding to those repeated $k$-mers.

To address this issue, we introduce the notion of De Bruijn multigraph in Definition~\ref{def:db_multigraph}, a concept well established in the DNA sequence assembly literature. Note that an edge from vertex $u$ to vertex $v$ exists if the $k$-mer associated with that edge has  $u$ as its prefix and $v$ as its suffix, with the edge's multiplicity corresponding to the frequency of the associated $k$-mer in $s$. Figure~\ref{fig:debruijn_example}b) illustrates the multigraph $\textsc{De Bruijn}^{+}_2(\verb+ATCGTATCCA+)$.

\begin{figure}[!hpt]
    \centering
    \begin{minipage}[b]{0.45\linewidth}
        \centering
        \includegraphics[width=0.8\linewidth]{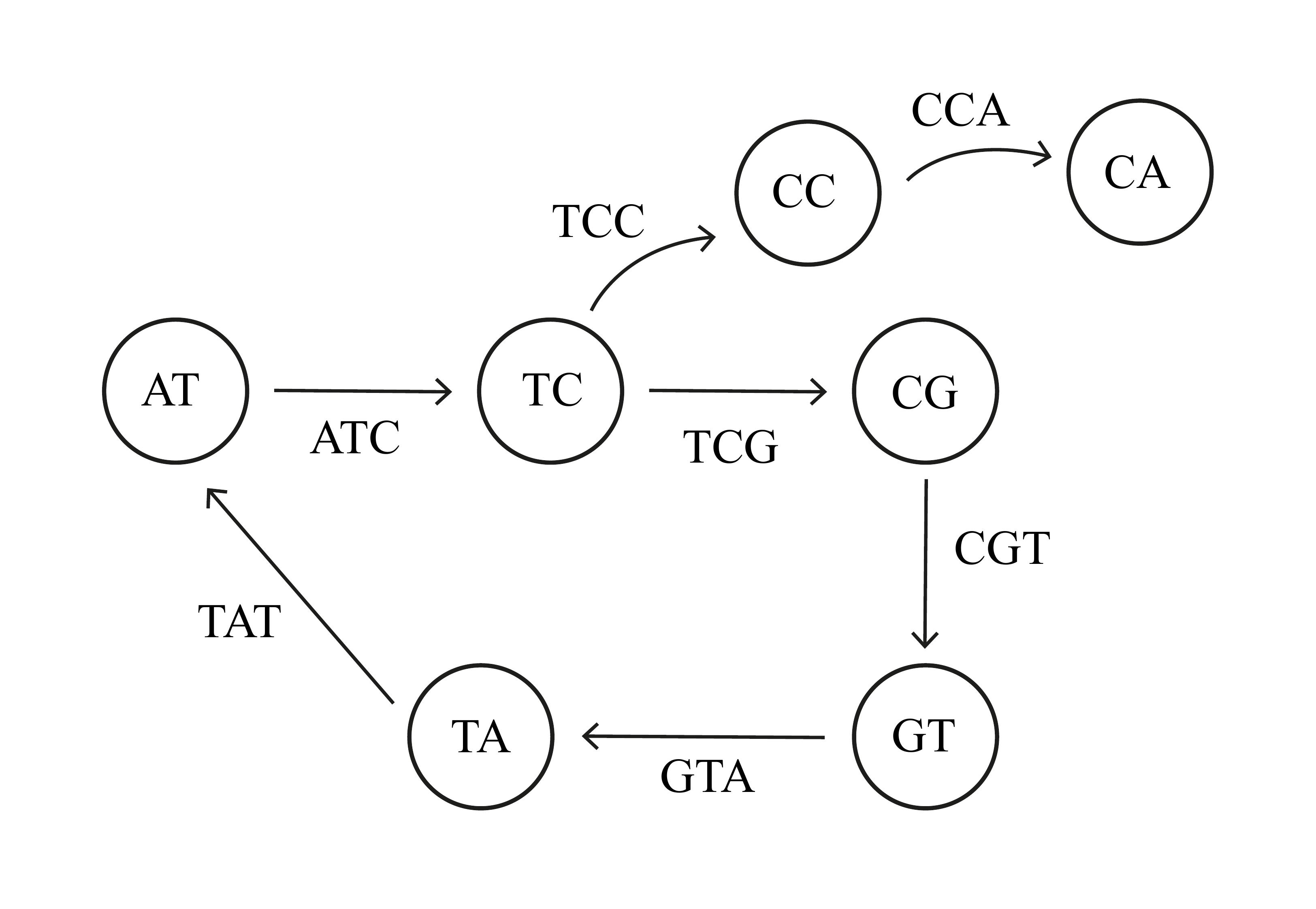}
        \caption*{(a) $\textsc{De Bruijn}_3(\texttt{ATCGTATCCA})$}
    \end{minipage}
    \hfill
    \begin{minipage}[b]{0.45\linewidth}
        \centering
        \includegraphics[width=0.8\linewidth]{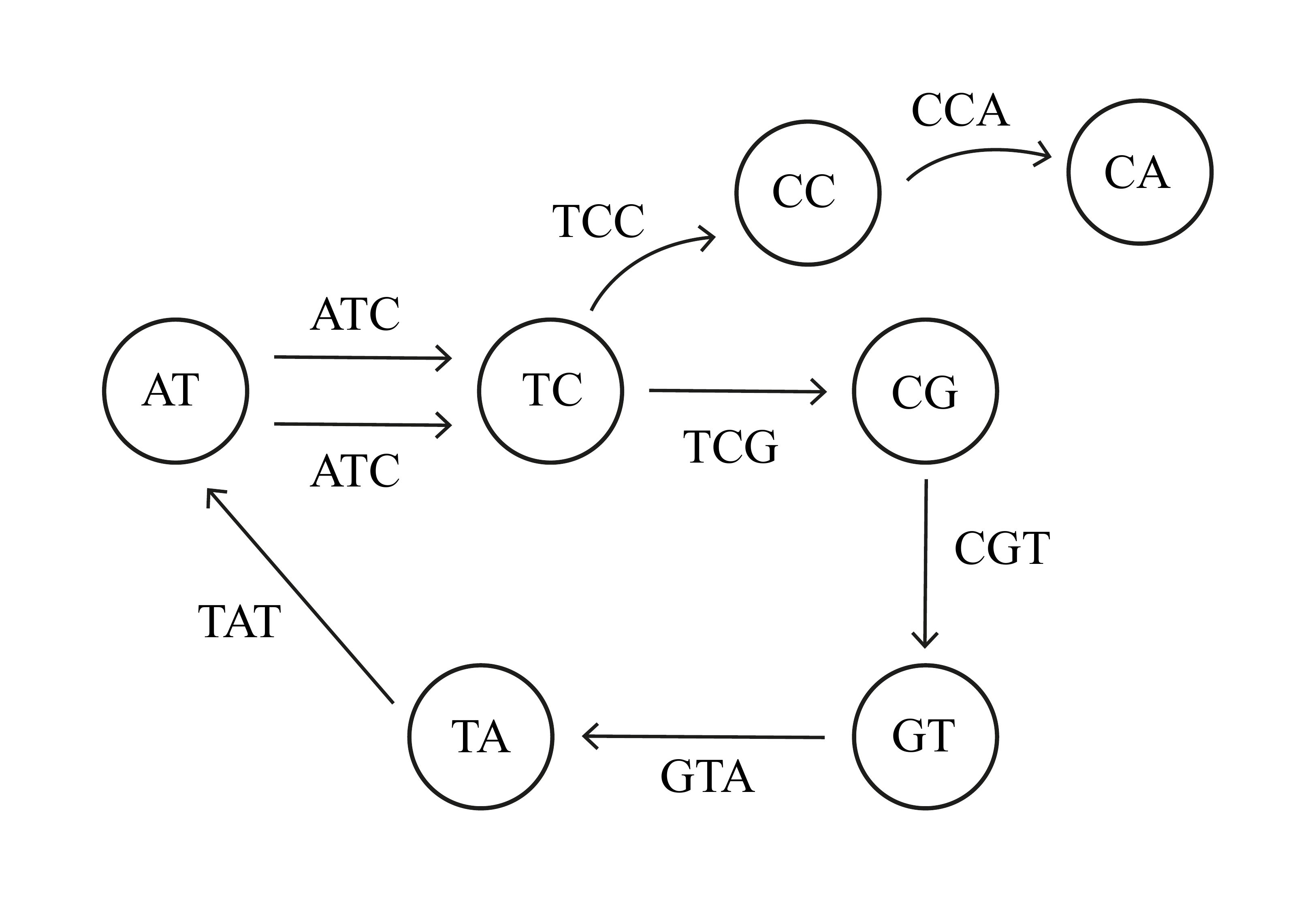}
        \caption*{(b) $\textsc{De Bruijn}^{+}_3(\texttt{ATCGTATCCA})$}
    \end{minipage}
    \caption{Comparison of De Bruijn graphs and multigraph of a sample string $s=\texttt{ATCGTATCCA}$, for $k=3$. Each vertex represents a unique 2-mer from the string, and each edge represents a unique 3-mer.  (a) De Bruijn graph: A walk in the graph corresponds to a substring of $s$ with each $k$-mer occurring exactly once. (b) De Bruijn multigraph: The edge labels indicate the multiplicity of each 3-mer, thereby allowing the recovery of any substring of $s$ via an appropriate Eulerian path.}
    \label{fig:debruijn_example}
\end{figure}

\section{Hit-and-run sampling from the standard probability simplex with marginalization constraints}
\label{appendix:hit-and-run}
Let $\vtheta_s = (\theta_w)_{w \in \Sigma^k}, \quad \Sigma = \{A,C,G,T\},$ denote the empirical $k$-mer frequency vector computed from a (linear) DNA sequence \(s\).  By construction,
\[\theta_w \ge 0, \quad \sum_{w\in\Sigma^k}\theta_w = 1.\]
In addition to this normalization, \(\vtheta_s\) satisfies a family of linear constraints induced by contiguous overlaps.  This is commonly called \emph{marginal consistency}: for each \(v\in\Sigma^{k-1}\),
\begin{equation}
\label{eq:marginal}
\sum_{a\in\Sigma} \theta_{v a}
= \sum_{a\in\Sigma} \theta_{a v} = \theta_v,
\end{equation}
where \(\theta_v\) is the frequency of the \((k-1)\)-mer \(v\). Combining these with the normalization yields a system
\begin{equation}
\label{eq:linear_constraints}
B\,\vtheta = \vb,
\end{equation}
where:
\begin{itemize}
  \item Row 0 of \(B\) is  \(\mathbf1^\top\), enforcing \(\sum_i x_i=1\).
  \item For each \((k-1)\)-mer \(v\), one row enforces
  \(\sum_{a\in\Sigma} x_{v a} - \sum_{a\in\Sigma} x_{a v} = 0\).

  \item $\vb = (1, 0,, \dots , 0)$
  
\end{itemize}

\begin{proposition}
\label{prop:full_rank}
The matrix \(B\) defined by \eqref{eq:linear_constraints} has full row rank.
\end{proposition}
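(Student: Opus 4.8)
The plan is to recognize the block of marginal rows of $B$ as the signed vertex--edge incidence matrix of the complete De Bruijn graph on $\Sigma^{k-1}$, and then to reduce the rank computation to two standard facts: the rank of the incidence matrix of a connected digraph, and the observation that the normalization row lies outside the hyperplane containing every marginal row.

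First I would index the columns of $B$ by the $k$-mers $w \in \Sigma^k$ and record each marginal constraint from \eqref{eq:marginal} as a row vector $r_v$, for $v \in \Sigma^{k-1}$, whose entry in column $w$ equals $+1$ when $\textsc{prefix}(w)=v$, equals $-1$ when $\textsc{suffix}(w)=v$, and is $0$ otherwise. Stacking the $r_v$ yields exactly the signed incidence matrix $R$ of the complete De Bruijn graph, whose vertices are the $(k-1)$-mers and whose edges are the $k$-mers, each $w$ running from $\textsc{prefix}(w)$ to $\textsc{suffix}(w)$, contributing $+1$ at its tail and $-1$ at its head.

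Next I would invoke the classical fact that the incidence matrix of a directed graph has rank $N-c$, where $N$ is the number of vertices and $c$ the number of weakly connected components, its left null space being spanned by the component indicator vectors. The complete De Bruijn graph on the $4^{k-1}$ words $\Sigma^{k-1}$ is strongly (hence weakly) connected, so $c=1$ and $\operatorname{rank}(R)=4^{k-1}-1$; equivalently the only linear dependence among the marginal rows is $\sum_{v\in\Sigma^{k-1}} r_v = 0$, which is transparent since each column $w$ has exactly one prefix and one suffix and their contributions cancel. I would then add the normalization row $\mathbf 1^\top$: every $r_v$ has zero coordinate sum (the $+1$ from its prefix cancels the $-1$ from its suffix), so the entire row space of $R$ lies in $H=\{y:\sum_w y_w=0\}$, whereas $\mathbf 1^\top\notin H$. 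Hence $\mathbf 1^\top$ is independent of the span of the $r_v$, and $\operatorname{rank}(B)=\operatorname{rank}(R)+1=4^{k-1}$. The $4^{k-1}$ marginal rows therefore carry a single dependency, $\sum_v r_v=0$, and are otherwise independent and independent of $\mathbf 1^\top$; discarding this one redundant row yields the asserted full row rank, and shows that the feasible set of \eqref{eq:linear_constraints} is an affine subspace of dimension $4^k-4^{k-1}$, exactly the null space in which the hit-and-run sampler operates.

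The one genuine obstacle is showing that the marginal block has no dependencies beyond $\sum_v r_v=0$, i.e.\ that $\operatorname{rank}(R)$ is exactly $4^{k-1}-1$ rather than smaller; this is precisely where strong connectivity of the De Bruijn graph is indispensable, and it is the step I would treat most carefully, since without it the incidence-matrix rank formula would not apply. The identification of the marginal rows with an incidence matrix and the independence of the normalization row are then routine.
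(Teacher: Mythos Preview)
Your argument is correct and, in fact, more careful than the paper's own proof. The paper argues that for distinct $(k-1)$-mers $u,v$ the rows $r_u,r_v$ have disjoint supports and are therefore orthogonal; that claim is false, since any $k$-mer $w$ with $\textsc{prefix}(w)=u$ and $\textsc{suffix}(w)=v$ (e.g.\ $w=AAC$ for $u=AA$, $v=AC$ when $k=3$) lies in both supports. Your incidence-matrix identification is the right way to see the true structure: the marginal block is exactly the signed vertex--edge incidence matrix of the full De Bruijn graph on $\Sigma^{k-1}$, whose strong connectivity forces rank $4^{k-1}-1$, with the single dependency $\sum_v r_v=0$ that the paper's orthogonality argument misses entirely. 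Adjoining $\mathbf 1^\top$ raises the rank to $4^{k-1}$, still one short of the $4^{k-1}+1$ rows in $B$ as literally described, so your reading---that the proposition holds only after discarding one redundant marginal row---is the correct one. What your route buys is a precise rank count and an explicit description of the kernel dimension $4^k-4^{k-1}$ used by the hit-and-run sampler; the paper's shortcut, besides being wrong about disjointness, would have overstated the rank by one.
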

\begin{proof}
Let the rows of \(B\) be \(r_0,r_{v_1},\dots,r_{v_M}\), where \(r_0=\mathbf1^\top\) and each \(r_v\) has nonzero entries only on indices corresponding to \(k\)-mers with prefix or suffix \(v\).  For distinct \(v\) and \(u\), their supports are disjoint, so \(r_{v}\) are mutually orthogonal and hence independent.  Moreover, \(r_0\) cannot be expressed as a linear combination of the zero‑sum rows \(r_v\).  Therefore no nontrivial linear relation among \(\{r_0,r_v\}\) exists, and \(B\) has full row rank.
\end{proof}

Let \(r=\mathrm{rank}(B)\).  The feasible set \(\{x\ge0:Bx=b\}\) is an \((4^k-r)\)-dimensional convex polytope.  We sample approximately uniformly from this polytope via a \emph{hit‑and‑run} Markov chain as follows:

\begin{enumerate}
  \item Compute an orthonormal basis \(N\in\mathbb R^{4^k\times(4^k-r)}\) of \(\ker(B)\) (e.g. via SVD).
  \item Initialize \(\vtheta^{(0)}=(1/4^k,\dots,1/4^k)\).
  \item For each iteration \(t=1,\dots,T\):
    \begin{enumerate}
      \item Draw \(\vz\sim\mathcal N(0,I_{4^k-r})\) and set \(\vd=N \vz/\|N\vz\|\).
      \item Find \(t_{\min}\le t_{\max}\) so that \(\vtheta^{(t-1)}+t\,\vd\ge0\) coordinate‑wise.
      \item Sample \(t\sim\mathrm{Unif}[t_{\min},t_{\max}]\).
      \item Update \(\vtheta^{(t)}=\vtheta^{(t-1)}+t\,\vd\).
    \end{enumerate}
  \item Return \(\vtheta^{(T)}\), which lies in the simplex and satisfies \(B\vtheta=b\).
\end{enumerate}

At each step, the direction \(d\) is drawn from a continuous distribution spanning all of \(\ker(B)\), so any two interior points can be connected by a finite sequence of positive‑probability moves. This procedure corresponds to a standard hit-and-run sampling strategy, common in MCMC theory (e.g.\ \cite{hit_and_run}).


\section{Real data acquisition and preprocessing} 
\label{appendix:real-dataset}
All genome data were obtained on 24 April 2025 from the NCBI RefSeq FTP server (\url{ftp.ncbi.nlm.nih.gov/genomes/refseq}). We considered eight taxonomic groups: $\texttt{Archaea}$, \texttt{Bacteria}, \texttt{Fungi}, \texttt{Plant}, \texttt{Protozoa}, \texttt{Invertebrate}, \texttt{Vertebrate\_Mammalian} and \texttt{Vertebrate\_Other}. The last three correspond to the kingdom \texttt{Animalia}. For each taxonomic group, the summary file was downloaded and parsed into a table of assembly metadata. Entries were grouped by the field \texttt{organism\_name}, and one assembly per species was selected by taking the first entry per group and then randomly sampling 100 species without replacement (random seed = 42). Assemblies lacking a valid FTP path or marked incomplete were omitted.\\

For each selected assembly, the corresponding FASTA file was downloaded via urllib (\texttt{Python 3.11}), decompressed with \texttt{gzip}, and parsed using \texttt{Biopython (v1.79)}. From each complete genome sequence, a single 100,000 bp fragment was extracted by choosing a start position uniformly at random from (seed = 42). Fragments shorter than 100 kbp were discarded. Each fragment was saved in FASTA format and CGR images (256×256 px) were generated in \texttt{NumPy (v1.23.5)} and \texttt{Matplotlib (v3.7.1)}.

All steps were automated by a \texttt{Python 3.9} script (available in Supplementary Software) to ensure reproducibility.

\section{Interactive synthetic CGR image generator}
\label{appendix:web-app}
To facilitate visual exploration of compositional biases in synthetic DNA, the research presented in this work is accompanied by an open‑source, browser‑based tool that generates sequences whose $k$-mer frequency vector matches a user‑specified or a sampled distribution.  The application supports rapid, visual inspection of how local changes in dinucleotide or higher‑order statistics alter global CGR patterns as the binary Chaos Game Representation (CGR) of the generated sequence is displayed to the user in each case. 

The user may choose a value of $k\in\{2,\dots,6\}$, a sequence length $L$ and may choose to generate the sequence from real DNA or to sample a point from the standard simplex subject to the marginal constraints. Additionally, for $k$=2, 16 log‑scaled sliders can be adjusted to select a target probability distribution (See Figure~\ref{fig:web-app}). The target frequency vector is converted to integer counts ($n-k+1$ observations) and embedded in a directed multigraph whose nodes are ($k-1$)-mers. A depth‑first Eulerian traversal yields a linear sequence that approximately produces the count multiset. The server returns a $256\times256$ binary CGR (PNG, base‑64 embedded). For $k\leq3$ it also returns a bar plot of the sampled probabilities.

The tool is implemented in \texttt{Python 3.11} using the following libraries: \texttt{Flask (v2.3)}, \texttt{NumPy (1.23.5)} and \texttt{Matplotlib (v3.8)}. The typical runtime is $<$ 1s for $n \leq 10^5$ and peak memory is less than 200MB on a standard laptop. The full source code can be found on GitHub under the MIT license. see \autoref{fig:web-app}.


\end{document}